\definecolor{1b}{HTML}{005AA9} %The default SPG color
\definecolor{9c}{HTML}{B90F22} %Some dark red
\definecolor{7b}{HTML}{F5A300} %Orange
\definecolor{4d}{HTML}{6A8B22} %Dark Green
\newcommand{\dashfrac}[2]{%
	{\sbox0{$\genfrac{}{}{0pt}{0}{#1}{#2}$}%
		\vphantom{\copy0}%
		\ooalign{%
			\hidewidth
			$\vcenter{\moveright\nulldelimiterspace
				\hbox to\wd0{%
					\xleaders\hbox{\kern.5pt\vrule height 0.4pt width 1.5pt\kern.5pt}\hfill
					\kern-1.5pt
				}%
			}$
			\hidewidth\cr
			\box0\cr}}%
}
\newtheorem{ass}{Assumption}%[section]
\newtheorem{lemma}{Lemma}
\newtheorem{theo}{Theorem}
\newtheorem{prop}{Proposition}
\tikzstyle{block} = [rectangle, anchor = center, fill=TUDa-5b, 
\tikzstyle{output} = [rectangle, anchor = center, fill=TUDa-5b, 
\tikzstyle{input} = [rectangle, anchor = center, fill=TUDa-5b, 
\tikzstyle{line} = [draw, -latex', color=black]
\tikzstyle{fcbox} = [rectangle, draw, minimum height = 2.7cm, minimum width = 4.68cm, dashed, color = black]
\tikzstyle{sensorbox} = [rectangle, draw, minimum height =2.7cm, minimum width = 2.4cm, dashed, color=black]
\def\expandafter\normalsize\expandafter{%
	\normalsize
	\setlength\abovedisplayskip{4.7pt}
	\setlength\belowdisplayskip{4.7pt}
	\setlength\abovedisplayshortskip{4.7pt}
	\setlength\belowdisplayshortskip{4.7pt}
}
\newacronym{awgn}{AWGN}{additive white Gaussian noise}
\newacronym{aic}{AIC}{Akaike information criterion}	
\newacronym{acf}{ACF}{autocorrelation function}
\newacronym{bfdr}{BFDR}{Bayesian false discovery rate}
\newacronym{bic}{BIC}{Bayesian information criterion}
\newacronym{bc}{BC}{Bonferroni correction}
\newacronym{bh}{BH}{Benjamini-Hochberg procedure}
\newacronym{blue}{BLUE}{best linear unbiased estimator}
\newacronym{bm}{BM}{beta distribution mixture}
\newacronym{bs}{BS}{base station}
\newacronym{bum}{BUM}{beta-and-uniform mixture}
\newacronym{cdf}{CDF}{cumulative distribution function}
\newacronym{clfdr}{clfdr}{contextual \gls{lfdr}}
\newacronym{cr}{CR}{cognitive radio}
\newacronym{doa}{DoA}{direction of arrival}	
\newacronym{dbh}{dBH}{distributed Benjamini-Hochberg}
\newacronym{dp}{DP}{detection proportion}
\newacronym{dr}{DR}{detection}
\newacronym{dt}{DT}{Delaunay triangulation}
\newacronym{dvbt}{DVB-T}{digital video broadcasting - terrestrial}
\newacronym{edf}{EDF}{empirical distribution function}
\newacronym{em}{EM}{expectation-maximization}
\newacronym{fc}{FC}{fusion center}
\newacronym{fcr}{FCR}{false cluster rate}
\newacronym{ffnn}{FFNN}{feedforward neural network}
\newacronym{fdp}{FDP}{false discovery proportion}
\newacronym{fdr}{FDR}{false discovery rate}
\newacronym{fdrs}{FDRS}{false discovery rate smoothing}
\newacronym{fmri}{fMRI}{functional magnetic resonance imaging}
\newacronym{fwer}{FWER}{family-wise error rate}
\newacronym{glr}{GLR}{generalized likelihood-ratio}
\newacronym{glrt}{GLRT}{generalized likelihood-ratio test}
\newacronym{gmm}{GMM}{Gaussian mixture model}
\newacronym{hmrf}{HMRF}{hidden Markov random field}
\newacronym{idw}{IDW}{inverse distance weighting}
\newacronym{iot}{IoT}{Internet of Things}
\newacronym{itc}{ITC}{Information Theoretic Criteria}
\newacronym{jsd}{JS}{Jensen–Shannon}
\newacronym{kkf}{KKF}{kriged Kalman filtering}
\newacronym{kld}{KL}{Kullback-Leibler}
\newacronym{ksd}{KS}{Kolmogorov-Smirnov}
\newacronym{lasso}{LASSO}{least absolute shrinkage and selection operator}
\newacronym{lr}{LR}{likelihood ratio}
\newacronym{llr}{LLR}{log-likelihood ratio}
\newacronym{lfdr}{lfdr}{local false discovery rate}
\newacronym{lm}{LM}{Lindsey's method}
\newacronym{los}{LOS}{line-of-sight}
\newacronym{mc}{MC}{Monte Carlo}
\newacronym{mcmc}{MCMC}{Markov chain Monte Carlo}
\newacronym{mfdr}{mFDR}{marginal false discovery rate}
\newacronym{mht}{MHT}{multiple hypothesis testing}
\newacronym{mle}{MLE}{maximum likelihood estimator}
\newacronym{mom}{MoM}{method of moments}
\newacronym{mse}{MSE}{mean squared error}
\newacronym{np}{NP}{Neyman-Pearson}
\newacronym{nusm}{NUSM}{nonuniform sampling method}
\newacronym{ofdr}{OFDR}{online false discovery rate}
\newacronym{ok}{OK}{ordinary Kriging}
\newacronym{ols}{OLS}{ordinary least squares}
\newacronym{pdf}{PDF}{probability density function}
\newacronym{pr}{PR}{predictive recursion}
\newacronym{psd}{PSD}{power spectral density}
\newacronym{pu}{PU}{primary user}
\newacronym{rbf}{RBF}{radial basis function}
\newacronym{rem}{REM}{radio environmental map}
\newacronym{rf}{RF}{radio field}
\newacronym{rmse}{RMSE}{root-mean-square error}
\newacronym{rv}{RV}{random variable}
\newacronym{sfdr}{sFDR}{smoothed false discovery rate}
\newacronym{sc}{SC}{spectrum cartography}
\newacronym{sk}{SK}{simple Kriging}
\newacronym{smom}{sMoM}{spectral method of moments}
\newacronym{snr}{SNR}{signal-to-noise ratio}
\newacronym{sos}{SOS}{sum of sinusoids}
\newacronym{su}{SU}{secondary user}
\newacronym{svd}{SVD}{singular value decomposition}
\newacronym{tps}{TPS}{thin-plate splines}
\newacronym{ue}{UE}{user equipment}
\newacronym{uk}{UK}{universal Kriging}
\newacronym{umts}{UMTS}{universal mobile telecommunications system}
\newacronym{wfdr}{wFDR}{weighted false discovery rate}
\newacronym{wlog}{w.l.o.g}{without loss of generality}
\newacronym{wls}{WLS}{weighted least-squares}
\newacronym{wsd}{WS}{Wasserstein}
\newacronym{wsn}{WSN}{wireless sensor network}
\newacronym{wss}{WSS}{wide-sense stationary}
\newcommand{\altLocStPdf}[1][]{f_{\locStRV|\HAlt}\!(\locSt[#1])}
\newcommand{\altPPdf}[1][]{f_{\pRV|\HAlt}\!(\p[#1])}
\newcommand{\altPPdfEst}[1][]{\hat{f}_{\pRV|\HAlt}\!(\p[#1])}
\newcommand{\altZPdf}[1][]{f_{\zRV|\HAlt}\!(\z[#1])}
\newcommand{\altLocStPdfHat}[1][]{\hat{f}_{\locStRV|\HAlt}\!(\locSt[#1])}
\newcommand{\altZPdfHat}[1][]{\hat{f}_{\zRV|\HAlt}\!(\z[#1])}
\newcommand{\altLocStPdfLoc}[2][\locStPxRV]{f_{#1|\HAlt}(#2)}
\newcommand{\altReg}{\mathcal{H}_1}
\newcommand{\altRegEst}{\hat{\mathcal{H}}_1}
\newcommand{\avVar}{\overline{\cum{2}}}
\newcommand{\avVarEstSet}[1]{\hat{\overline{\kappa}}_{2_{#1}}}
\newcommand{\avThrdOrdCenMom}{\overline{\boldsymbol{\kappa}_3}}
\newcommand{\avElCmpThrdCenMom}[1][\vecElIdx]{\overline{\cum{3_{#1}}}}
\newcommand{\B}{\mathbf{B}}
\newcommand{\baseVec}[1][\vecElIdx]{\mathbf{e}_{#1}}
\newcommand{\betaA}[1][\cmpIdx]{{a}^{(#1)}}
\newcommand{\betaMixA}[1][\vecElIdx]{{a}^{(\cmpIdx)}_{#1}}
\newcommand{\betaMixAEst}[1][\vecElIdx]{\hat{a}^{(\cmpIdx)}_{#1}}
\newcommand{\bfdr}[1][\rejRegLocSt]{\mathrm{BFDR}\big(#1\big)}
\newcommand{\cmpIdx}{k}
\newcommand{\cmpIdxAlt}{l}
\newcommand{\cmpVar}[1][\cmpIdx]{\cum{2}^{(#1)}}
\newcommand{\cmpWgt}[1][\cmpIdx]{{w}^{(#1)}}
\newcommand{\cmpWgtEst}[1][\cmpIdx]{\hat{w}^{(#1)}}
\newcommand{\cmpWgtVec}{\mathbf{w}}
\newcommand{\cmpWgtVecEst}{\hat{\mathbf{w}}}
\newcommand{\crd}[1][]{\mathbf{c}_{#1}}
\newcommand{\crdPx}[1][\pxIdx]{\crd[#1]}
\newcommand{\crdSen}[1][\senIdx]{\crd[#1]}
\newcommand{\cum}[1]{\kappa_{#1}}
\newcommand{\disMsr}{\ell}
\newcommand{\disFct}[2]{\mathrm{d}(#1, #2)}
\newcommand{\domLocSt}{\mathcal{F}}
\DeclareMathOperator{\e}{E}
\newcommand{\E}[2][]{\e_{#1}\left[#2\right]}
\newcommand{\elCmpExp}[2][\vecElIdx]{\cum{1_{#1}}^{(#2)}}
\newcommand{\elExp}[1][\vecElIdx]{\overline{\cum{1_{#1}}}}
\newcommand{\elCmpThrdCenMom}[2][\vecElIdx]{\cum{3_{#1}}^{(#2)}}
\newcommand{\ev}[1][\evIdx]{A_{#1}}
\newcommand{\evIdx}{k}
\newcommand{\expRef}[5]{\hyperref[exp:#1#2#3-#4#5]{Exp.~\expNam{#1}{#2}{#3}{#4}{#5}}}
\DeclareMathOperator{\fdr}{FDR}
\newcommand{\fdrThr}{\thr[\fdr]}
\newcommand{\firstPopSMoM}{\tilde{\mathbf{m}}_1}
\newcommand{\firstPopSMoMEst}[1]{\hat{\tilde{\mathbf{m}}}_{1_{#1}}}
\newcommand{\firstPopSMoMTheo}{\mathbf{m}_1}
\newcommand{\firstPopSMoMErr}{\mathbf{m}_1^\Delta}
\newcommand{\freBnd}{\mathcal{B}}
\newcommand{\Hyp}[1][]{H_{#1}}
\newcommand{\HypEst}[1][]{\hat{H}_{#1}}
\newcommand{\HAlt}{\mathrm{\Hyp}_{1}}
\newcommand{\HNul}{\mathrm{\Hyp}_0}
\newcommand{\HPx}{\Hyp[\pxIdx]}
\newcommand{\HPxEst}{\HypEst[\pxIdx]}
\newcommand{\HPxTime}{H_{\pxIdx, \timeIdx}}
\newcommand{\HSen}{\Hyp[\senIdx]}
\newcommand{\idc}[1]{\mathbbm{1}\!\left\{#1\right\}}
\newcommand{\lfdr}[1][\locSt]{\mathrm{lfdr}(#1)}
\newcommand{\lfdrHat}[1][\locSt]{\hat{\mathrm{lfdr}}({#1})}
\newcommand{\lfdrLoc}[1][\pxIdx]{\mathrm{lfdr}_{#1}}
\newcommand{\lfdrLocHat}[1][\pxIdx]{\hat{\mathrm{lfdr}}_{#1}}
\newcommand{\locStSetPx}{\mathcal{S}^\numPx}
\newcommand{\meanMvCmp}[2][\cmpIdx]{\cum{1_{#2}}^{(#1)}}
\newcommand{\mixLocStPdf}[1][]{f_{\locStRV}(\locSt[#1])}
\newcommand{\mixPPdf}[1][\p]{f_{\pRV}(#1)}
\newcommand{\mixLocStPdfEst}[1][]{\hat{f}_{\locStRV}(\locSt[#1])}
\newcommand{\mixPPdfBM}[1][]{{f}^\text{BM}_{\pRV}(\p[#1])}
\newcommand{\mixPPdfBMEst}[1][]{\hat{f}^\text{BM}_{\pRV}(\p[#1])}
\newcommand{\mixPPdfBUM}[1][]{{f}^\text{BUM}_{\pRV}(\p[#1])}
\newcommand{\mixPPdfBUMEst}[1][]{\hat{f}^\text{BUM}_{\pRV}(\p[#1])}
\newcommand{\mixPPdfEst}[1][]{\hat{f}_{\pRV}^{#1}(\p)}
\newcommand{\mixPPdfPM}[1][\p]{f_{\pRV}^{\boldsymbol{\theta}}(#1)}
\newcommand{\mixPPdfPMid}[1][\p]{f_{\pRV}^{\hat{\boldsymbol{\theta}}}(#1)}
\newcommand{\mixZPdf}[1][]{f_{\zRV}(\z[#1])}
\newcommand{\mixZPdfEst}[1][]{\hat{f}_{\zRV}(\z[#1])}
\newcommand{\mulPPdfMix}[1][]{f_{\pvecRV}(\pvec_{#1})}
\newcommand{\newMajor}[1]{\textcolor{black}{#1}}
\newcommand{\newNewMajor}[1]{\textcolor{black}{#1}}
\newcommand{\newMinor}[1]{\textcolor{black}{#1}}
\newcommand{\newNewMinor}[1]{\textcolor{black}{#1}}
\newcommand{\noiseEigVec}{\mathbf{v}}
\newcommand{\noiseEigVecEst}[1]{\hat{\mathbf{v}}_{#1}}
\newcommand{\noiseEigVecEl}[1][\vecElIdx]{v_{#1}}
\newcommand{\nulLocStPdf}[1][]{f_{\locStRV|\HNul}\!(\locSt[#1])}
\newcommand{\nulPPdf}[1][]{f_{\pRV|\HNul}\!(\p[#1])}
\newcommand{\nulZPdf}[1][]{f_{\zRV|\HNul}\!(\z[#1])}
\newcommand{\nulFrc}{\pi_0}
\newcommand{\nulFrcHat}{\hat{\pi}_0}
\newcommand{\nulReg}{\mathcal{H}_0}
\newcommand{\nulRegEst}{\hat{\mathcal{H}}_0}
\newcommand{\numCmp}{K}
\newcommand{\numCmpAlt}{L}
\newcommand{\numVecEl}{d}
\newcommand{\numPer}{G}
\newcommand{\numPx}{Q}
\newcommand{\numSam}{T}%Number of observation samples at the individual locations
\newcommand{\numSen}{N}
\newcommand{\numTls}{M}
\newcommand{\old}[1]{}
\newcommand{\p}[1][]{p_{#1}}
\newcommand{\pRV}[1][]{P_{#1}}
\newcommand{\perIdx}{g}
\newcommand{\pHist}{\mathrm{hist}(\pSetPx)}
\newcommand{\pPx}[1][\pxIdx]{\p[#1]}
\newcommand{\pPxRV}[1][\pxIdx]{\pRV[#1]}
\newcommand{\pSen}[1][\senIdx]{\p[#1]}
\newcommand{\pSetPx}{\mathcal{P}^\numPx}
\newcommand{\pSetSen}{\mathcal{P}^\numSen}
\newcommand{\pSubSetPx}[1][\tlIdx]{\mathcal{P}^{\numVecEl}_{#1}}
\newcommand{\pval}{$p$-value}
\newcommand{\pvec}{\mathbf{\p}}
\newcommand{\pvecRV}{\boldsymbol{p}}
\newcommand{\pvecSet}{\mathcal{P}}
\newcommand{\pvecSetOne}{\mathcal{Q}}
\newcommand{\pvecSetTwo}{\mathcal{R}}
\newcommand{\pxIdx}{q}
\newcommand{\q}[1][\vecElIdx]{q_{#1}}
\newcommand{\qvec}{\mathbf{q}}
\newcommand{\qvecRV}{\boldsymbol{q}}
\newcommand{\radius}[2]{r_{{#1},{#2}}}
\newcommand{\resSym}[1][]{r} %residuals
\newcommand{\secondPopSMoMTheoEst}{\hat{{\mathbf{M}}}_2}
\newcommand{\secondPopSMoMTheo}{\mathbf{M}_2}
\newcommand{\senIdx}{n}
\newcommand{\sigTrm}[1][\ev]{\sigSym_{\evIdx}(\tim, \freBnd)}
\newcommand{\sigSym}{x}
\newcommand{\spaSig}{x_{\old{\pxIdx}\newMinor{\senIdx}}}
\newcommand{\spaTmpNoi}{\old{n}\newMinor{\delta}_{\old{\pxIdx}\newMinor{\senIdx}}(\timeIdx)}
\newcommand{\tauEnDet}[1][\crd]{\tau_{\text{ED}}\left(\crd\right)}
\newcommand{\tensorprod}{\circ}
\newcommand{\locSt}[1][]{s_{#1}}
\newcommand{\locStRV}[1][]{S_{#1}}
\newcommand{\locStPx}[1][\pxIdx]{\locSt[#1]}
\newcommand{\locStPxRV}[1][\pxIdx]{\locStRV[#1]}
\newcommand{\thirdCenMoMMvCmp}[2][\cmpIdx]{\cum{3_{#2}}^{(#1)}}
\newcommand{\thirdPopSMoM}{\tilde{\underline{\mathbf{M}}}_3}
\newcommand{\thirdPopSMoMVec}[1]{\tilde{{\mathbf{M}}}_3(#1)}
\newcommand{\thirdPopSMoMEst}{\hat{\tilde{\underline{\mathbf{M}}}}_3}
\newcommand{\thirdPopSMoMEstVec}[1]{\hat{\tilde{{\mathbf{M}}}}_3({#1})}
\newcommand{\thirdPopSMoMTheo}{\underline{\mathbf{M}}_3}
\newcommand{\thirdPopSMoMTheoVec}[1]{{\mathbf{M}}_3({#1})}
\newcommand{\thirdPopSMoMErr}{\underline{\mathbf{M}}_3^\Delta}
\newcommand{\thirdPopSMoMErrFirst}{\underline{\mathbf{M}}_3^{\Delta_1}}
\newcommand{\thirdPopSMoMErrThird}{\underline{\mathbf{M}}_3^{\Delta_3}}
\newcommand{\thr}[1][]{\alpha_{#1}}
\newcommand{\timeIdx}{t}
\newcommand{\tlIdx}{m}
\newcommand{\U}{{\mathbf{U}}}
\newcommand{\varMvCmp}[2][\cmpIdx]{\cum{2_{#2}}^{(#1)}}
\newcommand{\vecElIdx}{i}
\newcommand{\vecElIdxAltI}{j}
\newcommand{\vecElIdxAltII}{h}
\newcommand{\vecCmpExp}[1][\cmpIdx]{\boldsymbol{\kappa}_1^{(#1)}}
\newcommand{\vecCmpExpEst}[1][\cmpIdx]{\hat{\boldsymbol{\kappa}}_1^{(#1)}}
\newcommand{\vecCmpThrdOrdCenMom}[1][\cmpIdx]{\boldsymbol{\kappa}_3^{(#1)}}
\newcommand{\vecExp}{\overline{\boldsymbol{\kappa}_1}}
\newcommand{\vecExpEstSet}[1]{{\hat{\overline{\boldsymbol{\kappa}}}_{1_{#1}}}}
\newcommand{\vecCmpCov}[1][\cmpIdx]{\mathbf{\Sigma}^{(#1)}}
\newcommand{\vecCov}{\mathbf{\Sigma}}
\newcommand{\W}{\mathbf{W}}
\newcommand{\x}[1][]{c_{\text{x}, #1}}
\newcommand{\xPx}[1][\pxIdx]{\x[#1]}
\newcommand{\xSen}[1][\senIdx]{\x[#1]}
\newcommand{\y}[1][]{c_{\text{y},#1}}
\newcommand{\yPx}[1][\pxIdx]{\y[#1]}
\newcommand{\ySen}[1][\senIdx]{\y[#1]}
\newcommand{\z}[1][]{z_{#1}}
\newcommand{\zRV}[1][]{Z_{#1}}
\newcommand{\zPxRV}[1][\pxIdx]{\zRV[#1]}
\newcommand{\zval}{$z$-score}
\definecolor{TUDa-1a}{HTML}{5D85C3}
\definecolor{TUDa-1b}{HTML}{005AA9}
\definecolor{TUDa-4b}{HTML}{99C000}
\definecolor{TUDa-5b}{HTML}{C9D400}
\definecolor{TUDa-6b}{HTML}{FDCA00}
\definecolor{TUDa-9b}{HTML}{E6001A}
\renewcommand{\newMajor}[1]{\textcolor{black}{#1}}
\renewcommand{\newMinor}[1]{\textcolor{black}{#1}}
\begin{document}
	\title{Multiple Hypothesis Testing Framework for Spatial Signals}

	\author{Martin~Gölz,~\IEEEmembership{Student Member,~IEEE,}
	        Abdelhak~M.~Zoubir,~\IEEEmembership{Fellow,~IEEE}~and~Visa~Koivunen,~\IEEEmembership{Fellow,~IEEE}%
	\thanks{This work has been submitted to the IEEE for possible publication. Copyright may be transferred without notice, after which this version may no longer be accessible. The work of M. Gölz is supported by the German Research Foundation (DFG) under grant ZO 215/17-2.}% <-this % stops a space
	\thanks{M. Gölz and A. M. Zoubir are with the Signal Processing Group, TU Darmstadt, Germany. V. Koivunen is with the Department of Signal Processing and Acoustics, Aalto University, Espoo, Finland. E-mail: \{goelz, zoubir\}@spg.tu-darmstadt.de, visa.koivunen@aalto.fi.}}

	% The paper headers
	\markboth{SUBMITTED TO IEEE~TRANSACTIONS~ON~SIGNAL~AND~INFORMATION~PROCESSING~OVER~NETWORKS}%
	{Gölz \MakeLowercase{\textit{et al.}}: TBD}
	
	% make the title area
	\maketitle

	% Making the glossaries. Split into acronyms and glossary
%	\glsaddall
%	\printglossary[type=\acronymtype]
%	\printglossary

	% As a general rule, do not put math, special symbols or citations
	% in the abstract or keywords.
	\begin{abstract}
		The problem of identifying regions of spatially interesting, different or adversarial behavior is inherent to many practical applications involving distributed multisensor systems. In this work, we develop a general framework stemming from multiple hypothesis testing to identify such regions. A discrete spatial grid is assumed for the monitored environment. The spatial grid points associated with different hypotheses are identified while controlling the false discovery rate at a pre-specified level. Measurements are acquired using a large-scale sensor network. We propose a novel, data-driven method to estimate local false discovery rates based on the spectral method of moments. Our method is agnostic to specific spatial propagation models of the underlying physical phenomenon. It relies on a broadly applicable density model for local summary statistics. In between sensors, locations are assigned  to regions associated with different hypotheses based on interpolated local false discovery rates. The benefits of our method are illustrated by applications to spatially propagating radio waves.
	\end{abstract}

	% Note that keywords are not normally used for peerreview papers.
	\begin{IEEEkeywords}
		Large-scale inference, multiple hypothesis testing, sensor networks, local false discovery rate, method of moments, density estimation, radial basis function interpolation
	\end{IEEEkeywords}

	% The sections
	\newMajor{\section{Introduction}
\label{sec:int}
	The rapid development of ever cheaper and smaller sensors, as well as the rise of faster, lower latency and more reliable wireless connectivity standards have facilitated the deployment of large-scale \glspl{wsn}. \glspl{wsn} are a key technology in the \gls{iot} and 5G wireless systems to gather information on spatial phenomena. The terms \textit{spatial phenomenon} or \textit{spatial signal} refer to the general concept of a physical quantity of interest that is a smooth function of location \cite{Arias2018}. These occur in a large number of applications, for example in electromagnetic spectrum awareness, wireless communications, environmental monitoring, agriculture, smart buildings and acoustics. A \gls{wsn} may be composed of heterogeneous devices with different sensing capabilities. The individual nodes are commonly battery powered and operate in a congested wireless spectrum. Thus, \glspl{wsn} must communicate their local information from each sensor efficiently in terms of spectrum use and energy consumption \cite{Wang2018, Nitzan2020}.
	
	In this work, we develop a method for identifying the spatial regions of \textit{interesting}, \textit{different} or \textit{anomalous} behavior of a spatial signal using \glspl{wsn} while strictly controlling the error levels. Practical examples for such regions include areas in a city where emission levels are intolerably high, radio frequency bands are densely used/underutilized, regions where moisture is too low in agricultural fields or rooms with \newNewMinor{unusual} oxygen \newNewMinor{or temperature} levels inside a building. Due to the spatial smoothness assumption, these areas form locally continuous subregions. Fig.~\ref{fig:spa-inf} displays \newNewMinor{the  problem in a simplified manner}. % a simple overall problem description.
	\newNewMinor{We identify these interesting regions} by developing methods stemming from detection theory, in particular \gls{mht}. This allows for providing rigorous statistical performance guarantees independently of domain-specific user knowledge. Minimal assumptions on the underlying physical phenomenon are needed. \newNewMajor{	The sensors are placed sparsely in distinct locations. No particular geometry or configuration is assumed for the distributed sensors.} Hence, the proposed method is suitable for a large variety of practical applications.
	
	%While the literature on the monitoring of spatial fields via \glspl{wsn} is vast and mature,
	\newNewMinor{
		\vspace{-25pt}
		\subsection{\textbf{Limiting the consumption of resources}}
		\vspace{-3pt}
	}
	To the best of our knowledge, none of the existing approaches in the literature on monitoring of spatial fields using \glspl{wsn} solves the aforementioned problem. However, two areas of research are \newNewMinor{closely} related: \textit{field estimation} and \textit{hypothesis testing for spatial signals}. The objective in field estimation is to determine the numerical value of the observed spatial signal (or a transformation of the spatial signal) as a function of location. Field estimation methods usually assume that the sensors transmit all raw local measurements to the \gls{fc} or the other nodes \cite{Arias2018, Bosman2017}. This results in significant communication overhead and increased power consumption.
	To deal with the communication and power constraints for \glspl{wsn}, many methods for hypothesis testing for spatial signals in \glspl{wsn} limit the amount of information transmitted from each node to other nodes/the \gls{fc}. In distributed detection \cite{Viswanathan1997}, a hard (binary) decision on the local state is communicated \newNewMinor{to the \gls{fc}}. The local decisions are fused at \gls{fc} to make a decision on the overall state of the field.
	Recent works transmit more than a single bit \cite{Ray2008, Ermis2005, Ermis2006, Vempaty2014, Nurellari2016, Panigrahi2019, Ciuonzo2020}. In line with \cite{Wang2018, Fonseca2019a, Fonseca2019b}, we allow each sensor to communicate a sufficient statistic or soft decisions quantized using only few  bits. \newNewMinor{This enables the application of advanced \gls{mht} methods while requiring (at most) only one communication cycle per sensor.} %This enables the application of advanced \gls{mht} methods. %communications use the energy and spectrum efficiently.
	%The transmission of a single local soft decision statistic per sensor requires (at most) one communication cycle. This differs from a centralized scheme, where each sensor forwards all its local raw measurements to the \gls{fc}. This would require a potentially large number communication cycles per sensor.

	\begin{figure}
		\centering
		\includegraphics[scale=0.58, frame]{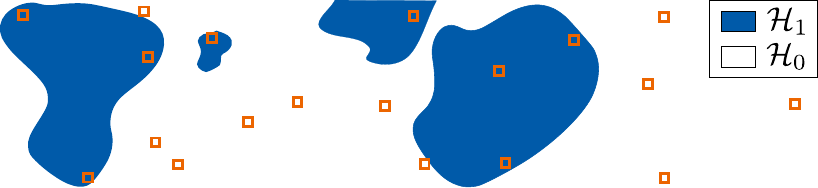}
		\caption{An exemplary spatial inference problem. Squares indicate sensors. The unknown region of interest $\altReg$ consists of \old{five}\newMinor{four} spatially continuous subregions (blue). %We discretize the entire observation area $\nulReg\cup\altReg$ by a regular grid of $\numPx$ points. The unknown binary local hypotheses $\HPx\in\{\HNul, \HAlt\}$ represent the state of the phenomenon at each grid point $\pxIdx\in[\numPx]$. $\altReg$ is the set of all grid points where $\HPx = \HAlt$ holds.
		}
		\label{fig:spa-inf}
		\vspace{-17pt}
	\end{figure}

	\newNewMajor{
		\vspace{-12pt}
		\subsection{\textbf{Enabling localized inference}}
		\vspace{-3pt}
	}
	The existing works on hypothesis testing methods for spatial signals with \glspl{wsn} focus on the detection of the presence of interesting, different or anomalous behavior of the spatial signal \textit{somewhere} within entire observation area under guarantees on the error probabilities. The only exception is \cite{Ermis2010}, where the authors also identify the sensors that observe the anomaly. In this work, we consider the more demanding problem of identifying the \textit{areas} where the spatial signal exhibits different behavior than in nominal conditions such that statistical performance guarantees in terms of Type~I error control are provided. To this end, we model the spatial area of interest as a regular spatial grid. We make a decision on the state of the observed phenomenon at each grid point. We discriminate between the nominal state of the phenomenon, represented by the null hypothesis $\HNul$, and any state that deviates from the nominal, represented by alternate hypotheses. While in many problems, one could distinguish various classes of {anomalous}, {interesting} or {different} behavior, we summarize everything that is not conform to $\HNul$ under the alternative $\HAlt$. %An \gls{mht} approach is employed in spatial inference.

	\newNewMinor{
		\vspace{-12pt}
		\subsection{\textbf{The necessity of multiple hypothesis testing}}
		\vspace{-3pt}
	}
	Depending on the size of the monitored area and the desired spatial resolution, the number of grid points and hence decisions might easily reach the order of tens of thousands. To prevent a flood of false positives resulting from testing a large number of binary hypotheses \cite{Tukey1991}, we follow the principles of \gls{mht}, where choosing the alternative $\HAlt$ is called a \textit{discovery} \cite{Soric1989}. Performance guarantees are commonly provided using the \gls{fdr} criterion. The \gls{fdr} is the expected proportion of false discoveries among all discoveries, \cite{Benjamini1995}. The past work on \gls{fdr} control in the context of spatial data has mostly focused on testing a priori \cite{Benjamini2007, Barber2015a} or a posteriori \cite{Chouldechova2014, Chumbley2009, Chumbley2010, Schwartzman2019} formed groups of data. While these procedures typically rely on assumptions that may not be realistic \cite{Eklund2016}, they do also not provide guarantees w.r.t. to the localization accuracy of the identified alternative area.
	
	\newNewMinor{
		\vspace{-12pt}
		\subsection{\textbf{Overview on the proposed inference methodology}}
		\vspace{-3pt}
	}
	For decision making with false positive control, information on the state of the field is needed at each grid point. This can be the value of a local decision statistic such as a $p$-value, $z$-score or likelihood ratio in combination with the probability model for this statistic under $\HNul$, or the local posterior probability of the null hypothesis. In this work, sensors are \newNewMinor{placed} in distinct locations at a sparse subset of grid points. Each sensor records noisy observations of the field and condenses the information into a local decision statistic. Based on the local decision statistics from each sensor and on the probability model of the local statistic under $\HNul$, we compute the probability of the null hypothesis at each grid point where a sensor is located. In particular, we propose to estimate the \gls{lfdr} \cite{Efron2001, Efron2005, Efron2008, Efron2010}, which is the empirical Bayes posterior probability of $\HNul$ at each sensor. We then exploit the spatial smoothness assumption and interpolate the \gls{lfdr}'s to make decisions at grid points in between sensors. \newNewMinor{To the best of our knowledge, this problem has not been addressed in existing works.} The main stages of the proposed spatial inference algorithm are shown in Fig.~\ref{fig:flowchart}. The decisions are made with strict control of the \gls{fdr}. This provides quantitative justification for the identified areas of different behavior \newNewMinor{and distinguishes the proposed approach from} %This makes the proposed approach clearly different from
	methods that reconstruct the field and threshold it to get a visually pleasing segmentation result without any statistical error performance guarantees \cite{Kuruvilla2016}.
	
	\begin{figure}
		\centering
%		\input{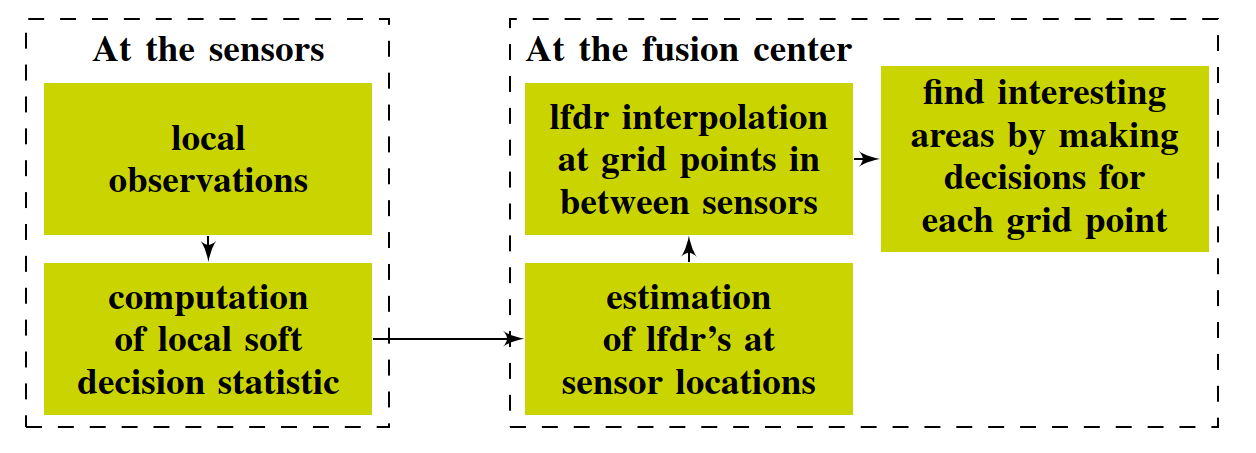}
		% For single column!
%		\includegraphics[scale=.3]{flowgraph}
		% For double column!
		\includegraphics[scale=.25]{flowgraph}
		\caption{The proposed lfdr-based spatial inference method.}
		\label{fig:flowchart}
		\vspace{-17pt}
	\end{figure}

%	We propose a\newNewMinor{n} lfdr-based spatial inference method for identifying regions where alternative hypotheses are in place.
	\newNewMinor{
		\vspace{-12pt}
		\subsection{\textbf{Advantages of the proposed spatial inference approach}}
		\vspace{-3pt}
	}
	The local signal model under $\HNul$ may be different in each sensor and it can be learned or its parameters estimated when there is only noise present. This makes the method robust against modeling errors and suitable to a wide range of applications and operational environments. The local signal and noise models may differ, which makes the method suitable for heterogeneous \glspl{wsn}. The local information is communicated efficiently to the \gls{fc}. Our results demonstrate that the local soft decision statistics can be quantized using few bits with only a negligible performance loss. Communication overhead and power consumption can be further reduced by censoring the transmission of uninteresting local soft decision statistics. Finally, the detection power of the proposed \gls{lfdr}-based inference method can be further improved by incorporating available side information similar to \cite{Efron2010, Tansey2018, Halme2019, Goelz2019, Scott2015, Chen2019}.
	
	\newNewMinor{As an alternative approach to obtaining the local null probabilities via \gls{lfdr}'s, one could imagine a fully integrated method in which the }raw measurements or soft decision statistics would feed a model that then provided the local null probabilities at the grid points. \newNewMinor{However, such a }holistic model would depend on a variety of unknown parameters and thus be highly complex. For a radio frequency field, for example, the model would require the appropriate incorporation of the local signal model and position of each sensor, the unknown location and number of active transmitters as well as the propagation environment. In addition, such a model would also be different for each application, thereby limit\newNewMinor{ing its} general applicability. To the best of our knowledge, the two existing approaches \cite{Sun2014, Shu2015} are not applicable to identify the areas of interesting signal behavior with \glspl{wsn} while controlling decision making error levels. \cite{Sun2014} assumes a simple Gaussian random field model and the authors highlight that the method is very sensitive to model deviations. The method from \cite{Shu2015} has been shown in \cite{Tansey2018} to violate the nominal \gls{fdr} level. Also, \cite{Sun2014, Shu2015} are highly computationally complex even for moderately sized grids.
	
%	The \gls{fc} computes the \gls{lfdr}'s for all grid points. For the sensor locations, the \gls{lfdr}'s are obtained by comparing a sensor's local soft decision statistic to those from other sensors. In between sensors, the sensor \gls{lfdr}'s are interpolated. This yields the desired local probabilities of $\HNul$ much more efficiently than with a fully integrated method. In such an approach, the raw measurements or soft decision statistics would feed a model that then provided the local null probabilities at the grid points. The required overall holistic model would depend on a variety of unknown parameters and thus be highly complex. %, i.e., taking days to solve inference tasks that \gls{lfdr}-based methods solve in a few minutes \cite{Tansey2018}.

	\newNewMinor{
		\vspace{-12pt}
		\subsection{\textbf{Challenges}}
		\vspace{-3pt}
	}
	The \newNewMinor{computation of the} \gls{lfdr}'s \newNewMinor{relies }on the joint \gls{pdf} of the local soft decision statistics and the overall proportion of alternatives that are unknown in practice. These quantities need to be learned or estimated accurately from the data. This is referred to as \textit{\gls{lfdr} estimation} in the \gls{mht} literature \cite{Efron2010}. A variety of estimators exist, often assuming that the joint \gls{pdf} belongs to the exponential family \cite{Efron2010, Muralidharan2010, Martin2011, Scott2015}. As our simulation results show, the existing \gls{lfdr} estimators are not suitable for spatial inference with \glspl{wsn}. They either yield inaccurate \gls{lfdr} estimates due to too simplistic assumed data models, scale poorly with network size and can thus not be applied to large-scale \gls{wsn}s and/or yield insufficient results when the local soft decision statistics are quantized with few bits.

	\newNewMinor{
		\vspace{-12pt}
		\subsection{\textbf{The original contributions of this paper}}
		\vspace{-3pt}
	}
%	The original contributions of this paper are:
	\begin{itemize}
		\item We propose an lfdr-based spatial inference method for \glspl{wsn} as a flexible data-driven approach to determine the areas of interest, difference or anomaly of a physical phenomenon. Our proposed method scales well to spatial inference with large-scale sensor networks under strict statistical performance guarantees.
		\item We propose a novel, highly computationally efficient method for computing \gls{lfdr}'s. It bases upon an innovative mixture distribution model and the method of moments. It can deal with quantized local soft decision statistics.
	\end{itemize}}
	\newNewMinor{
	\vspace{-40pt}
	\subsection{\textbf{Notation}}
	\vspace{-7pt}
}
%\textbf{Notation:}
Throughout the paper, regular lowercase letters $x$ denote scalars, whereas bold lowercase letters $\mathbf{x}$, bold uppercase letters $\mathbf{X}$ and underlined bold uppercase letters $\underline{\mathbf{X}}$ denote vectors, matrices and third order tensors, respectively. $\mathbf{X}^\frac{1}{2}$ and $\mathbf{X}^\dagger$ denote matrix square root and Moore-Penrose inverse. Calligraphic letters $\mathcal{X}$ denote sets and $[X] \coloneqq 1, 2, \dots, X$ sets containing all positive integers $\leq X$. $|\mathcal{X}|$ is the cardinality of set $\mathcal{X}$ and $\idc{\cdot}$ denotes the indicator function. The Hadamard product operator is $\odot$, while $\tensorprod$ represents the outer product. $f_X(x)$ denotes the \gls{pdf} of \gls{rv} $X$ and $f_{X|\mathrm{A}}(x)$ its \gls{pdf} conditioned on event $\mathrm{{A}}$. \old{$\fdr$ stands for the false discovery rate, $\mathrm{lfdr}$ for the local false discovery rate.}
	%\vspace{-10pt}
\section{The spatial inference problem}
\label{sec:spa-inf}
	The spatial inference problem is illustrated in Fig.~\ref{fig:spa-inf}. $\nulReg$ and $\altReg$ denote the regions of nominal and anomalous, different or interesting behavior, respectively. The continuous observation area is discretized by a regular grid of $\numPx$ elements, to each of which we refer by its index $\pxIdx\in[\numPx]$. Their position on the grid is denoted by $\crdPx = [\xPx, \yPx]^\top$. \newMinor{Sensors are placed at $\numSen \leq\numPx$ grid points. To keep the notation simple, we use the	same ordering in the indices of sensors and grid points, i.e., the sensor $\senIdx\,\forall\,[\numSen]$ is located at grid point $\pxIdx = \senIdx$.}
	%\subsection{The local summary statistics}
	The state of the observed phenomenon at grid point $\pxIdx \in[\numPx]$ and time instant $\timeIdx\in[\numSam]$ is described by the unknown true binary local hypothesis $\HPxTime \in \{\HNul, \HAlt\}$, with the \textit{null hypothesis} $\HNul$ and the \textit{alternative} $\HAlt$. If $\HPxTime = \HNul$, we say that the observed phenomenon is in its nominal state at $\pxIdx\in[\numPx]$, $\timeIdx\in[\numSam]$. If $\HPxTime = \HAlt$, an interesting phenomenon or \textit{anomaly} is present. $\HAlt$ holds under any deviation from $\HNul$ and is thus in general \textit{composite}. In environmental monitoring, $\HNul$ could represent clean air, whereas $\HAlt$ could indicate contamination above a tolerable level.
	We consider phenomena that vary smoothly in space and slowly in time. Due to the latter, we assume that the true local hypotheses are constant over the observation period and write $\HPx = \HPxTime\,\forall\,\timeIdx\in[\numSam]$.% For now, we assume that a sensor is placed at each grid point.

	$\altReg$ and $\nulReg$ are mutually exclusive sets that comprise the grid points at which $\HNul$ or $\HAlt$ hold. Formally, the objective of spatial inference is to identify the set $\nulReg = \{\pxIdx\in[\numPx]\,|\,\HPx = \HNul\}$ of all grid points where $\HNul$ is in place, and the set $\altReg = \{\pxIdx\in[\numPx]\,|\,\HPx = \HAlt\}$ containing all locations where $\HAlt$ holds. %We cannot directly observe $\HPx,\pxIdx \in[\numPx]$.
	\newMinor{At the sensors, the hypothesis-dependent }\old{The }models for the measured field levels $y_{\old{\pxIdx}\newMinor{\senIdx}}(\timeIdx), \newMinor{\senIdx}\in[\newMinor{\numSen}],\timeIdx\in[\numSam],$ \newMinor{are}\old{differ under $\HNul$ and $\HAlt$, i.e.,} % under $\HNul$ and $\HAlt$, i.e.,
%	\begin{equation}
%		\label{eq:spa-tmp-sig-mod}
%		\spaTmpObs[\HPx] = \begin{cases}
%			\spaTmpNoi, &\quad \HPx = \HNul, \\
%			\spaSig + \spaTmpNoi, &\quad \HPx = \HAlt,
%			\
%		\end{cases}
%	\end{equation}
	\begin{align}
		\label{eq:spa-tmp-sig-mod}
			\begin{split}
			\old{\HPx}\newMinor{\HSen} = \HNul: \qquad\qquad& y_{\old{\pxIdx}\newMinor{\senIdx}}(\timeIdx) = \spaTmpNoi,\\
			\old{\HPx}\newMinor{\HSen} = \HAlt: \qquad\qquad& y_{\old{\pxIdx}\newMinor{\senIdx}}(\timeIdx) = \spaSig + \spaTmpNoi,
		\end{split}
	\end{align}
	where $\spaSig\neq 0$ is the non-zero level of the phenomenon at location \old{$\pxIdx$}\newMinor{$\senIdx$} and $\spaTmpNoi$ the temporally i.i.d. measurement noise. This noise process is spatially independent but not necessarily identically distributed in different nodes. %$\HNul$ is a simple hypothesis, whereas $\HAlt$ is composite. %, whose statistical properties in general depend on $\pxIdx\in[\numPx]$, for example, because of the deployment of different sensor types at different locations.	
	 $\spaSig$ varies with $\old{\pxIdx}\newMinor{\senIdx}\in[\old{\numPx}\newMinor{\numSen}]$, but takes on similar values at close-by locations due to the spatial smoothness assumption. This is well-justified by the underlying mechanisms of many physical phenomena. Radio waves, for example, are subject to path-loss and shadow fading \cite{Molisch2012} that vary slowly.
	 %	  Thus, $\forall\, \timeIdx\in[\numSam]$$ \HPx = \HPxTime$ and $\spaTmpSig=\spaSig$ , i.e., the true hypothesis and the field levels are constants over time. $\spaTmpNoi$, $\spaTmpSig$ and $\spaTmpObs[\HPx]$ are i.i.d. \gls{rv}s. The dependence structure in the spatial domain depends on the value of $\HPx,\pxIdx\in[\numPx]$. In general, the statistical properties of the measurement noise $\spaTmpNoi$ might vary for different $\pxIdx\in[\numPx]$. Regardless, the $\spaTmpNoi$ are spatially independently distributed. This is a reasonable assumption in practice, where $\spaTmpNoi$ represents, for example, thermal sensor noise. Consequently, $\spaTmpObs[\HPx]$ are i.i.d. $\forall\,\,\{\pxIdx\in[\numPx]|\HPx= \HNul\}, \timeIdx\in[\numSam]$. In contrast, the $\spaTmpObs[\HPx] \,\forall\,\{\pxIdx\in[\numPx]|\HPx = \HAlt\}, \timeIdx\in[\numSam]$ are spatially correlated, because $\spaTmpSig$ is a spatially correlated signal. The correlation originates in the physics of the observed phenomenon. For example, in a radio field, the signal $\spaTmpSig$ is subject to path loss and shadow fading \cite{Molisch2012}.
	 %The individual sensors condense their raw observations in local summary statistics $\locStPx = \locSt(\spaObs, \dots, \spaObs[\numSam])$ with $\locSt: \mathbb{R}^\numSam \rightarrow \mathbb{R}$, i.e., $\locStPx$ is a function of the observations at $\pxIdx\in[\numPx]$ over period $\numSam$.
	 We cannot directly observe $\HPx,\,\pxIdx \in[\numPx]$. \newMinor{However, we can exploit the model differences in Eq.~\eqref{eq:spa-tmp-sig-mod} to determine local decision statistics for each grid point $\pxIdx\in[\numPx]$ based on the measurements at each sensor $\senIdx\in[\numSen]$. Those can then be used to decide on $\HPxEst = \HNul$ or $\HPxEst = \HAlt$}\old{ Eq.~\eqref{eq:spa-tmp-sig-mod} suggests to utilize the measurements $y_{\old{\pxIdx}\newMinor{\senIdx}}(\timeIdx),\,\forall\,\timeIdx\in[\numSam],$ to obtain the empirical models for the hypotheses $\HPxEst = \HNul$ or $\HPxEst = \HAlt$} and form the estimated regions $\nulRegEst$ and $\altRegEst$ associated with null hypothesis and alternative.
	 \old{The individual sensors condense their raw observations over observation period $\numSam$ into soft local decisions statistics $\tau_\pxIdx, \pxIdx\in[\numPx]$. These are more informative than local binary decisions, used, e.g., in \cite{Marano2019}. }%Proceeding with classic hypothesis testing, one would make a decision by thresholding each $\tau_\pxIdx$. However, one could not quantify the regions' accuracy. In contrast,
	 \newMinor{The decisions are made such that}\old{A procedure that controls} the \gls{fdr} \cite{Benjamini1995}, the expected ratio between the number of false discoveries and all discoveries 
	 \begin{equation}
	 	\label{eq:fdr-def}
	 	\fdr = \E{\frac{\sum_{\pxIdx\in\nulReg}\idc{\HPxEst = \HAlt}}{\sum_{\pxIdx = 1}^\numPx\idc{\HPxEst = \HAlt}}},
	 \end{equation}
 	\newMinor{is controlled }at a nominal level $\thr$\newMinor{. This} guarantees that on average, a proportion $(1-\alpha)$ of the locations in $\altRegEst$ are actual members of the true alternative region $\altReg$. %\gls{fdr} control demands comparability of the individual decision statistics with one another.
 	
 	\newMinor{Each sensor condenses its raw measurements over the observation period $\numSam$ into a local summary statistic $\tau_{\old{\pxIdx}\newMinor{\senIdx}}, \old{\pxIdx}\newMinor{\senIdx}\in[\old{\numPx}\newMinor{\numSen}]$.}
 	The type of deployed sensor and the distribution of the measurement noise may differ from sensor to sensor \cite{Rovatsos2020}. Thus, the $\tau_{\old{\pxIdx}\newMinor{\senIdx}}$ cannot be directly fused with each other. Instead, one defines local \old{summary}\newMinor{(soft) decision} statistics $\locStPxRV[\newMinor{\senIdx}]$ that are normalized such that they are i.i.d. $\forall\,\old{\pxIdx}\newMinor{\senIdx}\in\nulReg$, but not necessarily for $\old{\pxIdx}\newMinor{\senIdx}\in\altReg$. \newMajor{Conditioned on the field level $\spaSig$, however, the $\locStPxRV[\senIdx]$ from sensors where $\HAlt$ is in place are independently distributed.}%In other words, $\locStPxRV\sim\nulLocStPdf\,\forall\pxIdx\in\nulReg$.
%	 The individual sensors condense their raw observations to the scalar value of a sensor type dependent local function $\tau_\pxIdx = g_\pxIdx(y_\pxIdx(1), \dots, y_\pxIdx(\numSam))$. The $\tau_\pxIdx$ are not necessarily identically distributed for all $\pxIdx\in\nulReg$. However, identically distributed local summary statistics under $\HNul$ are a central prerequisite to \gls{fdr} control via \gls{lfdr}'s \cite{Efron2010}. Consequently, we define the local summary statistics $\locStPx\sim\nulLocStPdf\,\forall\,\pxIdx\in\nulReg$, where $\nulLocStPdf$ denotes their \gls{pdf} under $\HNul$. The distribution of $\locStPx$ is independent of the exact location if $\pxIdx\in\nulReg$. For locations where the alternative is in place, the distribution of $\locStPx$ can differ for different locations and we write $\locStPx\sim\altLocStPdfLoc{\locSt}\,\forall\,\pxIdx\in\altReg$.
	%The set of the local summary statistics from all locations is denoted by $\locStSetPx = \{\locStPx[1], \dots, \locStPx[\numPx]\}$. %To comply with this prerequisite, since different types of sensors might summarize their observations in summary statistics $\tau_{\pxIdx}$. To facilitate the comparability between sensors, we compute \pval s
	\old{The null and alternative regions for a given set of observations $\locStSetPx = \{\locStPx[1], \dots, \locStPx[\numPx]\}$ of the local summary statistics $\locStPxRV[1], \dots, \locStPxRV[\numPx]$ can be inferred by standard methods, such as the \gls{bh} \cite{Benjamini1995}.} $\domLocSt$ denotes the domain of $\locStPxRV[\newMinor{\senIdx}]$.
	 Common choices \old{for $\locStPxRV$} are \pval s
	 \begin{equation}
	 	\locStPxRV[\newMinor{\senIdx}] \coloneqq \int_{\old{-\infty}\newMinor{\uptau_\senIdx}}^{\infty}f_{\tau_{\old{\pxIdx}\newMinor{\senIdx}}|\HNul}(\tau)\mathrm{d}\tau = \pPxRV[\newMinor{\senIdx}],
	 \end{equation}
	 $\forall\,\old{\pxIdx}\newMinor{\senIdx}\in[\old{\numPx}\newMinor{\numSen}]$, where $f_{\tau_\pxIdx|\HNul}(\old{\pxIdx}\newMinor{\tau})$ is the \gls{pdf} of $\tau_\pxIdx$ under $\old{\HPx}\newMinor{\HSen} = \HNul$ \newMinor{and $\uptau_\senIdx$ is a realization of random variable $\tau_\senIdx$}, or \zval s $\locStPxRV[\newMinor{\senIdx}] \coloneqq \Phi^{-1}(\pPxRV[\newMinor{\senIdx}]) = \zPxRV[\newMinor{\senIdx}]$, where $\Phi(\cdot)$ is the standard normal \gls{cdf} \cite{Efron2010}. For \pval s, the domain is $\domLocSt = [0, 1]$ and for \zval s, $\domLocSt = \mathbb{R}$. $f_{\tau_{\old{\pxIdx}\newMinor{\senIdx}}|\HNul}(\tau)$ has to be known to compute \pval s or \zval s. If $f_{\tau_{\old{\pxIdx}\newMinor{\senIdx}}|\HNul}(\tau)$ is unknown, it can be estimated from the data using for example the bootstrap \cite{Zoubir2001, Goelz2017a}. Small \pval s indicate \old{very }little support for \old{the null hypothesis}\newMinor{$\HNul$}.
	\old{We propose to solve the spatial inference problem with the help of a suitable \gls{mht} soft decision statistic.}\newMinor{The soft decision statistics are transmitted from each sensor to the \gls{fc} via a wireless communication channel and have to be quantized in practice. Similar to \cite{Wang2018, Fonseca2019a, Fonseca2019b}, the proposed inference method is designed assuming the availability of infinitely precise local soft decision statistics at the \gls{fc}. However, our results in Sec.~\ref{sec:sim-res} underline, that close to optimal performance is achieved even when the local soft decision statistics are quantized using only few bits. }%Traditional methods for hypothesis testing in \glspl{wsn} only transmit hard $1$ bit local decisions \cite{Viswanathan1997} to keep the communication cost low. In contrast, recent works \cite{Ray2008, Ermis2005, Ermis2006, Vempaty2014, Nurellari2016, Panigrahi2019, Ciuonzo2020} assume that a few bits can be transmitted per sensors. In \cite{Wang2018, Fonseca2019a, Fonseca2019b}, the availability of unquantized local soft decision statistics at the \gls{fc} is assumed}. %Soft decision statistics quantify the believe in the true value of $\HPx$. They are more informative than hard decisions $\HPxEst$, because they can be used to make a hard decision, whereas the $\HPxEst$ cannot be transformed back into a soft decision statistic.

	Define the random variable $\locStRV$ that represents the mixture of all local \old{summary}\newMinor{decision} statistics from the sensors. The \gls{pdf} of $\locStRV$ is
	 \begin{equation}
	 	\label{eq:theo-locst-pdf}
	 	\mixLocStPdf = \nulFrc\nulLocStPdf + \newMinor{\frac{\textcolor{black}{(1-\nulFrc)}}{\newMinor{\sum_{\senIdx\in\altReg}}1}}\sum_{\old{\pxIdx}\newMinor{\senIdx} \in\altReg}\altLocStPdfLoc[S_{\newMinor{\senIdx}}]{\locSt},
	 \end{equation}
 	with $\nulFrc = \newMinor{\sum_{\senIdx\in\nulReg}1/\numSen}$ the \newMinor{fraction of sensors located in the null region}\old{relative size of the null region}, $\nulLocStPdf$ the \gls{pdf} for $\locStPxRV[\newMinor{\senIdx}]\,\forall\,\old{\pxIdx}\newMinor{\senIdx}\in\nulReg$ and $\altLocStPdfLoc[S_{\newMinor{\senIdx}}]{\locSt}$ the \gls{pdf} for $\locStPxRV[\newMinor{\senIdx}]$ if $\newMinor{\senIdx}\old{\pxIdx}\in\altReg$. The model in Eq.~\eqref{eq:theo-locst-pdf} exploits that the local \old{summary}\newMinor{decision} statistics are i.i.d. across locations $\old{\pxIdx}\newMinor{\senIdx}\in\nulReg$. Finally, the local false discovery rate is \cite{Efron2005, Efron2010}
	\begin{equation}
		\label{eq:lfdr-def}
		\lfdr[\locSt] = \frac{\nulFrc \nulLocStPdf}{\mixLocStPdf}.
	\end{equation}
	Appealingly, $\lfdrLoc[\newMinor{\senIdx}] = \lfdr[{\locStPx[\newMinor{\senIdx}]}]$ is the posterior empirical Bayes probability that $\old{\pxIdx}\newMinor{\senIdx}\in\nulReg$. %Thus, it allows us to assess the probability of making a false discovery.
	\newMinor{We interpolate the $\lfdrLoc[\senIdx]$ to obtain $\lfdrLoc$ for each grid point $\pxIdx\in[\numPx]$ in between sensors.}	To solve the spatial inference problem while controlling $\fdr\leq\thr$, we form the region associated with the alternative hypothesis
	\begin{equation}
	\label{eq:alt-reg-est}
%		\altRegEst = \underset{\mathcal{H}\subseteq[\numPx]}{\mathrm{argmax}}\bigg\{\sum_{\pxIdx\in\mathcal{H}}\lfdr[\locStPx]\Big|\sum_{\pxIdx\in\mathcal{H}}\lfdr[\locStPx]\leq\thr\bigg\}.
		\altRegEst = \underset{\mathcal{H}\subseteq[\numPx]}{\mathrm{argmax}}\bigg\{|\mathcal{H}|:\newNewMinor{\frac{1}{|\mathcal{H}|}}\sum_{\pxIdx\in\mathcal{H}}\lfdrLoc\leq\thr\bigg\}.
	\end{equation}
	This approach guarantees \gls{fdr} control at level $\thr$ while maximizing detection power, since the so-called \gls{bfdr} $\bfdr[\altRegEst] = \sum_{\pxIdx\in\altRegEst}\newNewMinor{\frac{\textcolor{black}{\lfdrLoc}}{|\altRegEst|}}$ is an upper bound of the Frequentist \gls{fdr} from Eq.~\eqref{eq:fdr-def} \cite{Efron2010}. Note that the \gls{bfdr} is the \text{average} false discovery probability \textit{across the alternative region} $\altRegEst$, while the \gls{lfdr} asserts each location $\pxIdx\in[\numPx]$ with the \textit{individual} risk of being a false discovery. 
	
	\newMinor{A key element of our proposed lfdr-based spatial inference method for \glspl{wsn} is the estimation of the \gls{lfdr}'s. The general concept of the \gls{lfdr} and \gls{lfdr} estimation are described in detail in Sec.~\ref{sec:lfdr-est-gen}. In Sec.~\ref{sec:prop-met}, we propose a novel method for \gls{lfdr} estimation at sensor locations. In Sec.~\ref{sec:lfdr-ipl}, we propose the interpolation of the sensor \gls{lfdr}'s in between sensor locations. \newMinor{The communication cost and power consumption of the proposed lfdr-based spatial inference approach are discussed in Sec.~\ref{sec:cc}}.}
	\old{In what follows, we develop a novel method to compute the \gls{lfdr}'s. The concept of the local false discovery rate is described in Sec.~\ref{sec:lfdr-est-gen}. The proposed method is based on a sophisticated probability model for the local \old{summary}\newMinor{decision} statistics. We introduce the novel model and a novel algorithm to compute \gls{lfdr}'s from observed data in Sec.~\ref{sec:prop-met}. To determine decision statistics in between the sparse sensor locations, we propose to interpolate the sensor \gls{lfdr}'s in Sec.~\ref{sec:lfdr-ipl}. \newMinor{The communication cost and power consumption of our proposed approach are discussed in Sec.~\ref{sec:cc}}.} We conclude by the simulation results in Sec.~\ref{sec:sim-res}.

	\vspace{-15pt}
\section{Local False Discovery Rate Estimation}
\label{sec:lfdr-est-gen}
	\old{Inference based on local false discovery rates contains estimation and hypothesis testing components, since the theoretical \gls{lfdr}'s defined in Eq.~\eqref{eq:lfdr-def} are not known in practice. Thus, the null and alternative regions are determined based on \gls{lfdr} estimates.}
	\newMinor{The theoretical \gls{lfdr}'s defined in Eq.~\eqref{eq:lfdr-def} are unavailable in practice. Hence, a central component of our proposed lfdr-based spatial inference approach (Fig.~\ref{fig:flowchart}) is the estimation of the \gls{lfdr}'s.} The accuracy of the deployed estimators has immediate consequences on \gls{fdr} control: underestimation of the \gls{lfdr}'s leads to violations of the nominal \gls{fdr} level, whereas overestimation reduces \newMinor{the power of testing}\old{power}. \newMinor{In this section, we develop an estimator for the lfdrs at sensor locations where local decision statistics are available. In Sec.~\ref{sec:lfdr-ipl}, we discuss how to obtain the \gls{lfdr} estimates in between sensor locations.}
	
	The general structure of \gls{lfdr} estimators follows from Eq.~\eqref{eq:lfdr-def}. The \gls{pdf} of $\locStRV$ for $\old{\pxIdx}\newMinor{\senIdx}\in\nulReg$ is most often assumed to be known \newMinor{or reliably estimated} \cite{Efron2010}\old{.}\newMinor{, whereas}%In particular, if \pval s are deployed as local summary statistics $\locStRV = \pRV$, then $\nulLocStPdf = \nulPPdf$ is the \gls{pdf} of the uniform distribution on $\domLocSt = [0, 1]$. For \zval s $\locStRV = \zRV$, $\nulLocStPdf = \nulZPdf$ is the \gls{pdf} of the standard normal distribution. 
	%In some cases, the theoretical $\nulLocStPdf$ might not hold due to violated assumptions during the process of computing the local summary statistics and one needs to resort to so-called empirical null distributions. In this work, we focus on \pval s and \zval s as local summary statistics which, as previously outlined, can be computed entirely based on the observed data by means of empirical methods. Consequently, the theoretical $\nulLocStPdf$ is not violated.
	\old{The relative size of the null region} $\nulFrc$ \newMinor{and $\altLocStPdf$} are unknown. \newMinor{Hence, the}\old{The} mixture $\mixLocStPdf$ from Eq.~\eqref{eq:theo-locst-pdf} is not available. The common approach to \gls{lfdr} estimation relies on \newMajor{the} separate estimation of $\nulFrc$ and $\mixLocStPdf$ by estimators $\nulFrcHat$ and $\mixLocStPdfEst$, which are then plugged into Eq.~\eqref{eq:lfdr-def} \cite{Efron2010}, \old{i.e.,}
	\begin{equation}
		\label{eq:lfdr-est}
		\lfdrHat = \frac{\nulFrcHat\nulLocStPdf}{\mixLocStPdfEst}.
	\end{equation}
	The \newMinor{unknown} underlying physical \old{effects}\newMinor{phenomenon} drive\newMinor{s} the statistical behavior of the \gls{lfdr} via $\mixLocStPdf$. Thus, a generally optimal estimator $\lfdrHat$ does not exist.
	
	The increasing interest in the incorporation of covariate information into \gls{mht}, e.g. \cite{Scott2015, Tansey2018, Chen2019a, Halme2019, Goelz2019}, has lead to a number of sophisticated \gls{lfdr} estimators that treat $\mixLocStPdf$ as a two component mixture, the \old{so-called }two-groups model \cite{Efron2001, Pounds2003}
	\begin{equation}
	\label{eq:two-grp-mod}
		\mixLocStPdfEst = \nulFrcHat\nulLocStPdf + (1-\nulFrcHat)\altLocStPdfHat.
	\end{equation}
	We also adopt the two-groups model.	In spatial inference, $\altLocStPdfHat$ is an estimator for the mixture \newMinor{\gls{pdf}} of \newMinor{the} local \old{summary}\newMinor{decision} statistics in the alternative region $\altLocStPdf = \sum_{\old{\pxIdx}\newMinor{\senIdx} \in\altReg}\altLocStPdfLoc[S_{\newMinor{\senIdx}}]{\locSt}$, see Eq.~\eqref{eq:theo-locst-pdf}.

	\begin{figure}
		\centering
		\includegraphics[scale=0.45]{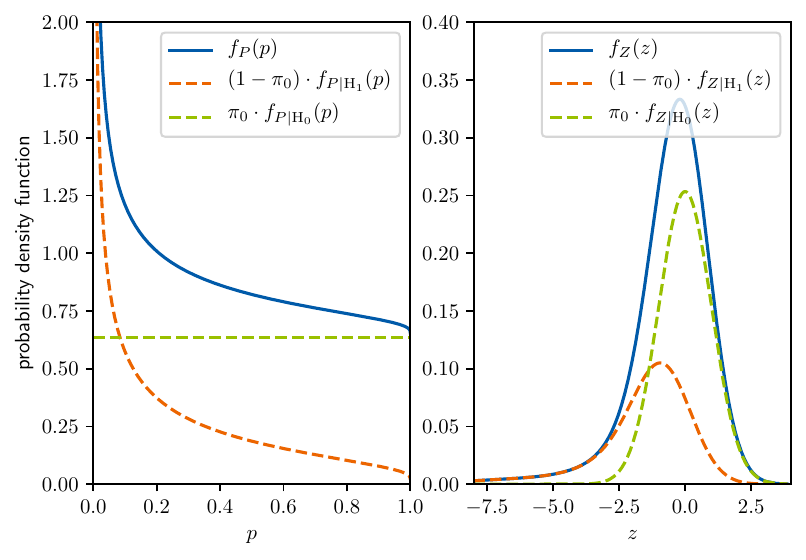}
		\caption{The true \gls{pdf}s $\mixPPdf$ and $\mixZPdf$ for \pval s and \zval s from \old{Sc.~B}\newMinor{ScB} described in Sec.~\ref{sec:sim-res}. The two-groups model from Eq.~\eqref{eq:two-grp-mod} always holds, i.e., $\mixLocStPdf$ is composed of a null and an alternative component. $\nulPPdf$ and $\nulZPdf$ are known analytically. $\nulFrc$ and $\altPPdf$ and $\altZPdf$ are unknown. In general, the alternative component for one-sided \pval s is a monotonically decreasing function. For one-sided \zval s, the alternative component exhibits a heavy left tail.}
		\label{fig:pdfs}
		\vspace{-15pt}
	\end{figure}
	\old{\subsection{The choice of \old{a}\newMinor{the} local \old{summary}\newMinor{decision} statistic}
	\label{sec:p-vs-z}}
		\old{We infer the alternative region $\altRegEst$ for a given set of realizations of the local summary statistics $\locStSetPx = \{\locStPx[1],\dots, \locStPx[\numPx]\}$ by Eq.~\eqref{eq:alt-reg-est} with $\lfdrLocHat[1], \dots, \lfdrLocHat[\numPx]$. We follow the traditional approach to \gls{lfdr} estimation. $\lfdrHat[\locSt] = \lfdrLocHat[q]$ is found from Eq.~\eqref{eq:lfdr-est} using plug-in estimates $\nulFrcHat$ and $\mixLocStPdfEst$. The latter obeys the two-groups model from Eq.~\eqref{eq:two-grp-mod}. $\nulLocStPdf$ is assumed to be known, while the behavior in the alternative region is captured by $\altLocStPdfHat$. If the \gls{lfdr}'s are underestimated at locations in $\altRegEst$, the nominal \gls{fdr} level $\thr$ is violated. Thus, on average more than a fraction $\alpha$ of the members in $\altRegEst$ are false positives.}
		\newMinor{The \gls{lfdr} can be computed on the basis of any local summary statistics which are i.i.d. under $\HNul$. Nevertheless, the local decision statistic plays an important role in \gls{lfdr} estimation since it determines the shape of $\mixPPdf$, as shown in Fig.~\ref{fig:pdfs}.}

		\newMinor{
			We use \pval s as local decision statistics. The domain of the \pval s $\domLocSt = [0, 1]$ is bounded and parametric models are used for \pval~density estimation. The exact analytical form of $\mixPPdf$ is unknown. However, if a sufficiently flexible parametric \pval~\gls{pdf} model is used, \gls{lfdr} estimation based on \pval s offers the following advantages. The \pval s from sensors where $\HNul$ is in place are uniformly distributed. The \pval s from sensors where $\HAlt$ is in place are highly concentrated towards $0$. Hence, $\altPPdf$ is a monotonically decreasing function and most of the mass of $\mixPPdf$ is located in a subregion of $\domLocSt$ that contains the statistically significant (small) \pval s.	Additionally, \pval-based \gls{lfdr} estimation allows for a simple way \cite{Pounds2003} to decompose estimate $\mixPPdfEst$ into the components of the two-groups model $\nulFrcHat\nulPPdf = \nulFrcHat$ and $\altPPdfEst$, $\p \in\domLocSt = [0, 1]$,}
		
		\newMinor{
			\begin{equation}
				\label{eq:extract-two-groups}
				\nulFrcHat = \min\mixPPdfEst, \qquad \altPPdfEst = \frac{\mixPPdfEst - 	\nulFrcHat}{1-\nulFrcHat}.
			\end{equation}
		}
		\newMinor{An alternate popular choice for local decision statistics are \zval s, for which a one-to-one mapping to $p$-values exists. The \zval s from sensor where $\HNul$ is in place follow a standard normal distribution. \zval-based \gls{lfdr} estimators have been designed assuming a finite Gaussian mixture model \cite{Le2003} or an exponential family model \cite{Muralidharan2010, Efron2010}. Non-parametric methods have been studied using kernel estimates \cite{Robin2007} and, more recently, predictive recursion \cite{Martin2011, Scott2015, Martin2018, Newton2002}. We prefer $p$-values for the following reasons. The \zval s from sensors where $\HAlt$ is in place lie in the tails of the \zval~mixture \gls{pdf}. Obtaining high tail-accuracy for an estimated \gls{pdf} is extremely difficult. In addition, the decomposition of $\mixZPdfEst$ into $\nulFrcHat \nulZPdf$ and $(1-\nulFrcHat)\altZPdfHat$ is non-trivial.}
		\old{The suitability of different density estimation techniques to determine $\mixLocStPdfEst$ depends on the local summary statistic.} %To this end, consider the histograms in Fig.~\ref{fig:p-vs-z}, which display histograms for the two most common choices \cite{Efron2010} of local summary statistics for the same data. The left histogram shows left-sided \pval s $\locStRV  = \pRV \sim\mixPPdf$ based on signal energies, the one on the right contains the corresponding left-sided \zval s with $\locStRV = \zRV\sim\mixZPdf$.
%		\begin{figure}
%			\caption{Histograms for \pval s and \zval s.}
%			\label{fig:p-vs-z}
%		\end{figure}
		\old{The two-groups model from Eq.~\eqref{eq:two-grp-mod} applies to \pval s and \zval s alike. %However, the properties of its components $\nulLocStPdf$ and $\altLocStPdfHat$ differ significantly for different $\locStRV$.
		As illustrated in Fig.~\ref{fig:pdfs}, the shapes of $\mixLocStPdf$, $\nulLocStPdf$ and $\altLocStPdf$ differ significantly. %for different choices of $\locStRV$.
		For \pval s, $\nulLocStPdf= \nulPPdf$ follows an $\mathcal{U}[0, 1]$. For \zval s, $\nulLocStPdf = \nulZPdf$ is the standard normal distribution, $\phi(\z)$.}
		
		\old{Much of the existing literature on \gls{lfdr} estimation focuses on \zval s. $\mixZPdfEst$ is often modeled as a parametric \gls{pdf} with infinite support $\zRV \in\domLocSt = \mathbb{R}$. $\mixZPdfEst$ is found by estimating the parameters of a finite Gaussian mixture model \cite{Le2003} or an exponential family model \cite{Muralidharan2010, Efron2010}. The decomposition of $\mixZPdfEst$ into $\nulFrcHat \nulZPdf$ and  $(1-\nulFrcHat)\altZPdfHat$ for Eq.~\eqref{eq:two-grp-mod} is to be performed in a subsequent, non-trivial step. Non-parametric methods, which directly decompose $\mixZPdfEst$ into the components of the two-groups model have been studied using kernel estimates \cite{Robin2007} and, more recently, predictive recursion \cite{Martin2011, Scott2015, Martin2018, Newton2002}.}
		
		\old{To maintain \gls{fdr} control, the estimator of $\mixZPdfEst$ is required to be highly accurate for values of $\z\in\domLocSt$ that are more likely under the alternative, that is, in the left tail for left-sided \zval s. High tail accuracy of the estimate $\mixZPdfEst$ is difficult to achieve, especially at small sample sizes. An overall well-fitting $\mixZPdfEst$ does not necessarily imply a good fit in the tails, due to their small probability mass. Tail behavior is difficult to capture. Finally, the domain of $\z$, $\domLocSt = \mathbb{R}$, covers the entire real axis. Hence, the support of $\mixZPdf$ and its tails is infinite.}

		%The most popular \gls{lfdr} estimators estimate $\nulFrc$ and $\altLocStPdf$ for \zval s. Due to $\nulLocStPdf = \nulZPdf$ being a standard normal distribution, a variety of estimation schemes based on semi-parametric approaches like Gaussian deconvolution, Kernel fitting, Poisson regression, predictive recursion and Gaussian mixture models exist. TODO: INSERT SEVERAL REFERENCES.
		
		\old{The \gls{lfdr} estimators based on \pval s are not subject to the previously discussed tail-accuracy-related issues of \zval-based \gls{lfdr} estimators. To this end, observe that the assumption of  uniformly distributed \pval s under $\HNul$, i.e., $\nulPPdf$ follows  $\mathcal{U}[0, 1]$, spreads statistically insignificant observations uniformly across the bounded domain $\domLocSt = [0, 1]$ instead of producing a bulk in the center of the domain. Consequently, most of the mass of $\mixPPdf$ is located in the subregion of $\domLocSt$ that contains statistically significant (small) \pval s, i.e., where $\altPPdf$ is large. %This is the left side of the interval $\domLocSt = [0, 1]$ for left-sided \pval s, see Fig.~\ref{fig:p-vs-z}.
		Additionally, \pval-based \gls{lfdr} estimation allows for %Due to the assumption of uniformly distributed \pval s under the null hypothesis,
		a simple way \cite{Pounds2003} to decompose estimate $\mixPPdfEst$ into the components of the two-groups model $\nulFrcHat\nulPPdf = \nulFrcHat$ and $\altPPdfEst$, $\p \in\domLocSt = [0, 1]$,
		\begin{equation}
		\label{eq:extract-two-groups}
			\nulFrcHat = \min\mixPPdfEst, \qquad \altPPdfEst = \frac{\mixPPdfEst - \nulFrcHat}{1-\nulFrcHat}.
		\end{equation}}

		\old{Since the domain of the \pval s $\domLocSt = [0, 1]$ is bounded, \pval~density estimators are commonly parametric. The true analytical form of $\mixPPdf$ is unknown. Consequently, one has to identify a parametric probability model $\mixPPdfPM$ and a parameter vector value $\hat{\boldsymbol{{\theta}}}$ such that $\mixPPdfPMid \approx \mixPPdf$. The quality of the \gls{pdf} estimate $\mixPPdfEst = \mixPPdfPMid$ can be evaluated by difference measures $d(\mixPPdf, \mixPPdfEst)$, such as the \gls{kld} divergence or the \gls{wsd} distance.}

		\newMinor{The presumeably most popular parametric \pval~\gls{pdf} model is the}
		\old{Since our proposed method relies on modeling \pval~mixture distributions, we consider the popular}\gls{bum} model \cite{Pounds2003},
		\begin{equation}
		\label{eq:bum-def}
			\mixPPdfBUM	= {w} + (1-{w})\,\mathrm{beta}(\newMinor{p;}\,{a}) = {w} + (1-{w}){a}\p^{{a}-1}.
		\end{equation}
		Superscript $\text{BUM}$ indicates the dependency on the \gls{bum} parameters ${w}\in[0, 1]$ and ${a}\in(0, 1)$. $\mixPPdfBUMEst$ denotes the \gls{bum} model with the respective \gls{mle} $\hat{w}$ for $w$ and $\hat{a}$ for $a$. %Note, that ${w}$ and $(1-{w})\textrm{beta}({a})$ are not the estimated mixture components $\nulFrcHat$ and $(1-\nulFrcHat)\altPPdfEst$ from Eq.~\eqref{eq:two-grp-mod}.
		%However, these can be determined from Eq.~\eqref{eq:extract-two-groups}.
		The \gls{bum} model exploits two known properties of $\mixPPdf$, which are also apparent in Fig.~\ref{fig:pdfs}. First, the uniform distribution under $\HNul$ is captured by the constant $w$. Second, $\altPPdf$ is known to be monotonically decreasing. A single-parameter beta distribution decreases monotonically $\forall\,a \in (0, 1)$. The \gls{bum} model is simple and has been applied successfully in a number of applications. However, it lacks flexibility due to its limited number of tuning parameters. Estimating $\mixPPdf$ by $\mixPPdfEst = \mixPPdfBUMEst$ leads to overly pessimistic \gls{lfdr} estimates, as the simulations in Sec.~\ref{sec:sim-res} underline. We introduce a more flexible model in Sec.~\ref{sec:prop-met}.

	%\subsection{To be moved elsewhere}
	%	One might resort to something one might call a multiple beta-uniform mixture model, where $\altPPdf = \sum_{\cmpIdx = 1}^{\numCmp}\altFrcLoc[\cmpIdx]\beta(a_\cmpIdx, 1)$ and $\sum_{\cmpIdx = 1}^{\numCmp}\altFrcLoc[\cmpIdx] = (1-\nulFrc)$.
	%	For each increment of $k$ by 1, two more parameters have to be estimated. This quickly renders the \gls{mle} approach of \cite{Pounds2003} computationally expensive, in particular, when working with large data sets. Finding \gls{mle}s for mixture models typically requires to deploy a variant of the \gls{em} algorithm TODO: INSERT REFERENCE, which is computationally demanding especially for large data sets and is at risk of finding only a locally optimal solution TODO: INSERT REFERENCE. Also, an ideal number of components $\numCmp$ has to be determined. Classic approaches for model order selection, like information theoretic criteria rely on the assumption of Gaussian distributed data and require the model to be estimated for a number of candidate models. Despite that specific \gls{itc} for beta distributed data can be derived \cite{Golz2019}, the computational burden of having to solve an already demanding, high-dimensional maximum likelihood estimation problem multiple times cannot be avoided.

	\vspace{-8pt}
\section{The Proposed \GLS{lfdr} Estimator}
\label{sec:prop-met}
	In this section, we introduce a novel \gls{lfdr} estimator. Our approach estimates \gls{lfdr}'s from \pval s. %, due to the previously highlighted advantages of \pval-based \gls{lfdr} estimators. %We focus on the model and estimation strategies for $\mixPPdf$, since $\altPPdfEst$ and $\nulFrcHat$ follow from Eq.~\eqref{eq:extract-two-groups}.
	%
%	The key problem is to find a systematic way of estimating the \pval~mixture $\mixPPdf$ accurately. The generating model of $\mixPPdf$ in Eq.~\eqref{eq:theo-locst-pdf} is unidentifiable, since we only have access to a single observation per alternative component. Instead, we deploy a parametric estimator by identifying a class of identifiable parametric models whose parameters can be estimated such that the true $\mixPPdf$ and estimated $\mixPPdfEst$ are as similar as possible. This similarity can be quantified by difference measures $d(\mixPPdf, \mixPPdfEst)$, such as the \gls{kld} or the \gls{wsd}.
	%
	%We suggest a more flexible model for the \pval~mixture than the \gls{bum} model from Eq.~\eqref{eq:bum-def}. In particular, be
	We propose the parametric probability model
	\begin{equation}
	\label{eq:pval-mix-def}
		\mixPPdfBM = \sum_{\cmpIdxAlt = 1}^\numCmpAlt\cmpWgt[\cmpIdxAlt]\,\mathrm{beta}\big(\newMinor{p;}\,\betaA[\cmpIdxAlt]\big),
	\end{equation}
	a finite single-parameter \gls{bm} with shape parameters $\betaA[\cmpIdxAlt]\in\mathbb{R}_{>0}$ and mixture weights $\cmpWgt[\cmpIdxAlt]\in[0, 1], \, \forall\,\cmpIdxAlt\in[\numCmpAlt]$ such that $\sum_{\cmpIdxAlt = 1}^\numCmpAlt\cmpWgt[\cmpIdxAlt] = 1$ and $\numCmpAlt<\infty$. For $\betaA[\cmpIdxAlt] < 1$, $\betaA[\cmpIdxAlt]=1$ and $\betaA[\cmpIdxAlt]>1$, the $\cmpIdxAlt$-th component $\mathrm{beta}(\newMinor{p;}\,\betaA[\cmpIdxAlt])$ is monotonically decreasing, constant and monotonically increasing in $\p\in[0, 1]$, respectively. Due to the increased number of components, $\mixPPdfBM$ is more flexible than $\mixPPdfBUM$. Estimating its parameters is more involved than for the \gls{bum} model% A more flexible model provides a better approximation to the true $\mixPPdf$. %Recovering all the individual components from Eq.~\eqref{eq:theo-locst-pdf} is neither possible nor required in large-scale inference \cite{Efron2010}. Yet, the inference results improve if an \gls{lfdr} estimator captures the underlying processes more accurately.
%	Eq.~\eqref{eq:pval-mix-def} allows us to discriminate with greater detail between groups of differently distributed \pval s than the \gls{bum} model. When observing radio waves, for example, we expect to obtain very small \pval s very close to the source, but also more intermediate \pval s at locations $\pxIdx\in\altReg$ further away from the source, due to the typically (at least) quadratic signal attenuation \cite{Molisch2012} over distance to the source. For this example, $\mixPPdfEst$ from Eq.~\eqref{eq:pval-mix-def} with $L=3$ and $\betaA[1]\approx 1$, $0<\betaA[2]<1$ and $0<\betaA[3] \ll 1$ could provide a good fit to $\mixPPdf$, where the values for $\cmpWgt[\cmpIdxAlt], \cmpIdxAlt\in[\numCmpAlt]$ would depend on the number of sources and their respective transmission power.
	, since $2\numCmpAlt-1$ model parameters are to be determined from the observations $\old{\old{\pSetPx}\newMinor{\pSetSen}}\newMinor{\pSetSen} = \{\pPx[1], \dots, \pPx[\old{\numPx}\newMinor{\numSen}]\}$. Closed-form \gls{mle}s for the parameters of mixture distributions are difficult to obtain. Instead, \gls{mle}s are commonly found iteratively by \gls{em} \cite{Dempster1977}, which is computationally expensive for larger model orders. Also, the parameter estimates are only locally optimal, which may result in poorly fitting models for non-convex likelihood functions.
	In this work, we target computationally light-weighted procedures suitable for large-scale sensor networks. Our approach bases upon the \gls{mom} that estimates model parameters by solving pre-defined equation systems.
	\vspace{-5pt}
	\subsection{The method of moments}
		The principle of moment-based parameter estimation \cite{Pearson1936, Iskander1999} is to match population and empirical moments. To this end, multivariate systems of moment equations are solved. The \gls{mom} is conceptionally simple, but also entails challenges. Its analytic complexity rapidly increases with the number of model parameters. In addition, empirical higher-order moments are prone to large variance \cite{Bowman2006}. Therefore, the sample size required to provide meaningful estimates grows exponentially in the number of model parameters \cite{Hsu2012}. As a consequence, the standard \gls{mom} is not well-suited to determining the parameters for Eq.~\eqref{eq:pval-mix-def}.
		
		The \gls{smom}, a recent approach \cite{Anandkumar2012, Hsu2012}, allows to determine the parameters of multivariate Gaussian mixtures from only the first three moments. Thus, \gls{smom} avoids higher-order moments. In contrast to other work on the field of low-order moment-based parameter estimation, the method in \cite{Hsu2012} does not require a minimum distance between the locations of the mixture components to guarantee identifiability. This suits particularly well to this work, since we are dealing with \pval s on the domain $\domLocSt = [0, 1]$ and need to discriminate between mixture components that are located closely to one another. Combining the \gls{bm} model and the \gls{smom} provides a base for a computationally efficient \pval~density estimator that we introduce in what follows. %In what follows, we introduce a computationally efficient \gls{mom} estimator for the parameters of $\mixPPdfBM$ by designing a suitable data model for \pval~mixture densities. %In what follows, we establish a method to exploit the computational efficiency of the \gls{mom} to estimate the model parameters of $\mixPPdfBM$ by designing a data model that enables its application to \pval~mixture density estimation.
	\vspace{-5pt}
	\subsection{\newMajor{Overview on the proposed method}}
		\newMajor{We follow the traditional approach to \gls{lfdr} estimation, which plugs estimates for $\mixPPdf$ and $\nulFrc$ into Eq.~\eqref{eq:lfdr-def}. Since we work with $p$-values, $\nulFrc$ can be estimated as the maximum value of $\mixPPdf$. Hence, we focus on the estimation of $\mixPPdf$. We start from the assumption that the model in Eq.~\ref{eq:pval-mix-def} holds for the \gls{pdf} $\mixPPdf$ of \gls{rv} $\pRV$ with realizations $\pSetSen = \{\pSen[1], \dots, \pSen[\numSen]\}$. We then subdivide $\pSetSen$ into $M$ subsets of equal size $\numVecEl$ and form \pval~vectors $\pvec_1,\dots,\pvec_{M}$. The $\pvec_m, m\in[M]$ are observations of a $\numVecEl$-dimensional random vector $\pvecRV=[\tilde{P}_1,\dots, \tilde{P}_d]^\top$, where each $\tilde{P}_i, i\in[d]$ represents the statistical behavior of a subset of all observed $p$-values. Due to the way that the $\pvec_m$ are obtained, there is a direct relation between the \glspl{pdf} of $\pRV$ and $\pvecRV$. Thus, the parametric multivariate \gls{pdf} model for $\pvecRV$ follows directly from Eq.~\ref{eq:pval-mix-def}. The details are provided in Sec.~\ref{sec:prop-met-mvmod}. Then, we estimate the parameters of this multivariate model from the first three moments of $\pvecRV$. The empirical moments are computed using the observations $\pvec_1,\dots,\pvec_{M}$. To this end, we exploit the one-to-one relations between the moments and the model parameters derived in Theorem~\ref{theo:estimators-mix-par} and Theorem~\ref{theo:relation-pop-mom} of Sec.~\ref{sec:prop-met-theo}. Finally, we again use the relation between the \glspl{pdf} of $\pRV$ and $\pvecRV$ to obtain the estimate of the univariate \gls{pdf} of $\pRV$ from the estimate for the multivariate \gls{pdf} of $\pvecRV$. The entire procedure is presented in detail in Sec.~\ref{sec:prop-met-alg}.}
	\vspace{-5pt}
	\subsection{The multivariate \pval~model}
	\label{sec:prop-met-mvmod}
		Traditional statistical techniques would treat the input data $\old{\pSetPx}\newMinor{\pSetSen} = \{\p[1], \dots, \p[\old{\numPx}\newMinor{\numSen}]\}$ as a single observation of a\newMinor{n} $\old{\numPx}\newMinor{\numSen}$-dimensional random vector with elements $\pRV[\old{\pxIdx}\newMinor{\senIdx}], \old{\pxIdx}\senIdx\in[\old{\numPx}\newMinor{\numSen}]$. The identification of the regions associated with null hypothesis and alternative $\nulReg$ and $\altReg$ based on a single observation of a high-dimensional random vector is fairly challenging. To perform spatial inference, we adopt the idea of \textit{learning from the experience of others} \cite{Efron2010} and treat $\old{\pSetPx}\newMinor{\pSetSen}$ as $\old{\numPx}\newMinor{\numSen}$ realizations of the same scalar random variable $\pRV\sim\mixPPdf$.  We first model the \pval s as a $\numVecEl$-dimensional random vector $\pvecRV$ instead of estimating $\mixPPdf$ directly from $\old{\pSetPx}\newMinor{\pSetSen}$. Then, we estimate its joint \gls{pdf} $\mulPPdfMix$. Finally, we average over the $\numVecEl$ marginals to obtain the univariate estimate $\mixPPdfEst$. Estimating a joint \gls{pdf} appears intuitively more challenging. However, our multivariate \pval~model enables fast and reliable estimation of $\mixPPdf$, since it facilitates the application of the \gls{smom}. To the best of our knowledge, this concept is entirely new to \gls{lfdr} estimation.
		
		In what follows, assume $\pRV\sim\mixPPdf = \mixPPdfBM$. We divide the set of observations $\old{\pSetPx}\newMinor{\pSetSen} = \{\p[1], \dots, \p[\old{\numPx}\newMinor{\numSen}]\}$ for random variable $\pRV\sim\mixPPdfBM$ into $\numTls$  distinct subsets $\pSubSetPx[1], \dots, \pSubSetPx[\numTls]$ of equal size $\numVecEl$, $\numTls = \lfloor\old{\numPx}\newMinor{\numSen}/\numVecEl\rfloor$. \newMinor{Any remaining \pval s are not used for the \gls{pdf} estimation}. Next, arrange the elements of each $\pSubSetPx$ in no particular order into $\numTls$ $\numVecEl$-dimensional \pval~vectors $\pvec_{\tlIdx}\,\forall\,\tlIdx\in[\numTls]$. $\pvec_{1}, \dots, \pvec_{\numTls}$ are observations of the random vector $\pvecRV\sim\mulPPdfMix$, whose $\vecElIdx$-th entry be random variable $\tilde{\pRV[\vecElIdx]}, \vecElIdx\in[\numVecEl]$. Since $\pRV\sim\mixPPdfBM$, the marginal distribution of each $\tilde{\pRV[\vecElIdx]}$ can be described without loss of generality by a $\numCmp$-component mixture
		\begin{equation}
		\label{eq:pvec-marginal}
			\tilde{\pRV[\vecElIdx]} \sim \sum_{\cmpIdx = 1}^{\numCmp}\cmpWgt\mathrm{beta}\big(\newMinor{p;}\, \betaMixA\big),
		\end{equation}
		with mixture proportion vector $\cmpWgtVec= [\cmpWgt[1],\dots, \cmpWgt[\numCmp]]^\top$ such that $\sum_{\cmpIdx = 1}^{\numCmp} \cmpWgt = 1$ and shape parameters $\betaMixA\in\mathbb{R}_{>0}$. The partitioning of $\old{\pSetPx}\newMinor{\pSetSen}$ into $\pSubSetPx[1], \dots, \pSubSetPx[\numTls]$ and the ordering of the entries within each $\pvec_{\tlIdx}$, $\tlIdx\in[\numTls]$\old{, is not to be based} \newMinor{must be found independently of the values} $\pPx[\newMinor{\senIdx}]$, $\old{\pxIdx}\newMinor{\senIdx}\in[\old{\numPx}\newMinor{\numSen}]$. Then, the $\tilde{\pRV[\vecElIdx]}\, \forall\,\vecElIdx\in[\numVecEl]$ are mutually independent random variables. Thus, $\mulPPdfMix$ is fully characterized by its marginals, which relate to $\mixPPdfBM$ through %The distributions of the random vector entries are hence mixtures of beta distributions with $\numCmp$ mixture components.
		\begin{equation}
		\label{eq:mul-p-to-uni}
			\mixPPdfBM = \numVecEl^{-1}\sum_{\cmpIdx = 1}^\numCmp\cmpWgt\sum_{\vecElIdx=1}^\numVecEl\mathrm{beta}\big(\newMinor{p;}\,\betaMixA\big).
		\end{equation}
		\newMinor{Note, that the total number of mixture components $\numCmpAlt$ from Eq.~\eqref{eq:pval-mix-def} is then $\numCmpAlt = \numCmp\cdot\numVecEl$.}
		
		\newMinor{Based on the result in} \cite[Chapter~24]{Johnson1995}, the first two cumulants, mean $\meanMvCmp{\vecElIdx}$ and variance $\varMvCmp{\vecElIdx}$, of the $\vecElIdx$-th marginal's $\cmpIdx$-th component are found $\forall\, \vecElIdx\in[\numVecEl], \,\forall\,\cmpIdx\in[\numCmp]$, as 
		\begin{equation}
		\label{eq:first-two-mom}
			\meanMvCmp{\vecElIdx} = \frac{\betaMixA}{\betaMixA+1}, \quad\varMvCmp{\vecElIdx} = \frac{\betaMixA}{\betaMixA + 2} - \Bigg(\frac{\betaMixA}{\betaMixA+1}\Bigg)^2.
		\end{equation}
		The third-order cumulant, is
		\begin{equation}
		\label{eq:third-mom}
			\thirdCenMoMMvCmp{\vecElIdx} \!=\! \frac{\betaMixA}{\betaMixA\! + 3}-\frac{3\betaMixA}{\big(\betaMixA\! + 2\big)}\frac{\betaMixA}{\big(\betaMixA\! + 1\big)} + 2\Bigg(\frac{\betaMixA}{\betaMixA\! + 1}\Bigg)^3\!.
		\end{equation}
		%The first three cumulants are mean, variance and third-order central moment. To simplify notation, we denote the mean, variance and third-order central moment of element $\vecElIdx\in[\numVecEl]$ for mixture component $\cmpIdx \in[\numCmp]$ by $\elCmpExp{\cmpIdx}$, $\elCmpVar{\cmpIdx}$ and $\elCmpThrdCenMom{\cmpIdx}$, respectively. From \cite[Chapter~24, Section~3]{Johnson1995}, we find
		%\begin{equation}
		%	\label{eq:beta-cum-def}
		%	\cum{\cumOrd_\vecElIdx}^{(\cmpIdx)}	= \prod_{\cumOrdIdx = 0}^{\cumOrd-1} \frac{\betaMixA+\cumOrdIdx}{\betaMixA+\betaMixB+\cumOrdIdx}
			%\elCmpVar{\cmpIdx}			&= \E[|\cmpIdx]{\p[\vecElIdx]^2} - \Big(\elCmpExp{\cmpIdx}\Big)^2 =\frac{\betaMixA\betaMixB}{\Big(\betaMixA+\betaMixB\Big)^2\Big(\betaMixA+\betaMixB+1\Big)},\label{eq:beta-mix-def-entry-var}\\
			%			\elCmpVar{\cmpIdx}			&\frac{\Big(\betaMixA+1\Big)}{den}\\
			%			\elCmpThrdCenMom{\cmpIdx} 	&= \E[|\cmpIdx]{\p[\vecElIdx]^3} - 3\elCmpVar{\cmpIdx}\elCmpExp{\cmpIdx}+2\Big(\elCmpExp{\cmpIdx}\Big)^3
		%\end{equation}
		%for $\cumOrd\in[3]$.
		Additionally, denote the $\cmpIdx$-th mixture component mean vector by $\vecCmpExp = \big[\elCmpExp[1]{\cmpIdx}, \dots, \elCmpExp[\numVecEl]{\cmpIdx}\big]^\top$ and the $\cmpIdx$-th component vector of third-order cumulants by $\vecCmpThrdOrdCenMom = \big[\elCmpThrdCenMom[1]{\cmpIdx}, \dots, \elCmpThrdCenMom[\numVecEl]{\cmpIdx}\big]^\top$ for all $\cmpIdx\in[\numCmp]$. We also define the averages across mixture components, $\vecExp = [\elExp[1], \dots, \elExp[\numVecEl]]^\top = \sum_{\cmpIdx = 1}^\numCmp\cmpWgt\vecCmpExp$, and $\avThrdOrdCenMom= [\avElCmpThrdCenMom[1], \dots, \avElCmpThrdCenMom[\numVecEl]]^\top=\sum_{\cmpIdx = 1}^\numCmp\cmpWgt\vecCmpThrdOrdCenMom $.
		Since the $\tilde{\pRV[\vecElIdx]}\,\forall\,\vecElIdx\in[\numVecEl]$ are independently distributed, the $\cmpIdx$-th component's $\numVecEl\times\numVecEl$ covariance matrix $\vecCmpCov$ is diagonal with the $\vecElIdx$-th entry $\varMvCmp{\vecElIdx}$. The $\numVecEl\times\numVecEl$ mixture covariance matrix is $\vecCov = \E{(\pvecRV-\vecExp)(\pvecRV-\vecExp)^\top}$.

		To conclude this section, we formulate the following assumption on the \pval s.
		\begin{ass}[Similar component variances]
		\label{ass:same-var}
			The marginal variances of the $\cmpIdx$-th mixture component are similar, such that they can be treated as approximately equivalent across the marginals, i.e., $\cmpVar \approx \varMvCmp{\vecElIdx}\,\forall\,\vecElIdx\in[\numVecEl]$, $\cmpIdx\in[\numCmp]$.
		\end{ass}
		%Note, that Assumption~\ref{ass:same-var} allows different variances for different mixture components, i.e., $\cmpVar[\cmpIdx]\neq\cmpVar[h]$, $\cmpIdx, h \in[\numCmp]$ in general. 
		In other words, for a certain $\tlIdx\in[\numTls]$, the entries of $\pvec_{\tlIdx}$ can be treated as observations of random variables with approximately equivalent variances.
		\old{\begin{prop}
		\label{prop::model}}

		\newMinor{Our simulation results in Sec.~\ref{sec:sim-res} confirm that Assumption~\ref{ass:same-var} is fairly mild. We conducted a large number of numerical experiments with very diverse underlying spatial signals. Our proposed lfdr estimator was always able to find a subset size $\numTls<\old{\numPx}\newMinor{\numSen}$ and a number of mixture components $\numCmp<\numVecEl$ for which the \pval~subsets $\pSubSetPx$ and vectors $\pvec_{\tlIdx}$, $\tlIdx\in[\numTls]$ were formed such that Assumption~\ref{ass:same-var} holds for each mixture component $\cmpIdx\in[\numCmp]$. Hence, }%
		\old{\end{prop}}%
		\old{Thus, }the $\pvec_{\tlIdx}, \tlIdx\in[\numTls]$, can be divided into $\numCmp$ groups such that joint \gls{pdf} of the \pval~vectors in each group is described by mixture component $\cmpIdx$.
		For illustration purposes, consider that the subsets $\pSubSetPx$ are formed based on spatial proximity, i.e., each $\pSubSetPx$ is composed of \pval s from $\numVecEl$ neighboring locations. For those subsets containing exclusively \pval s from locations $\old{\pxIdx}\newMinor{\senIdx}\in\nulReg$, the statistical properties of each element are similar by design. Due to the assumed spatial smoothness of \old{the}\newMinor{spatial}\old{physical} phenomena\old{ of interest}, also the \pval s obtained at close-by locations $\old{\pxIdx}\newMinor{\senIdx}\in\altReg$ have similar statistical properties.
		\old{Our simulation results in Sec.~\ref{sec:sim-res} confirm that Assumption~\ref{ass:same-var} is fairly mild.} %Its validity depends on the process of partitioning $\old{\pSetPx}\newMinor{\pSetSen}$ into subsets and the value of $\numVecEl$, since in particular for larger $\numVecEl$, one might end up with some subsets being composed of \pval s from locations in $\nulReg$ and $\altReg$. On the other hand, small $\numVecEl$ limits the number of mixture components and hence the flexibility of the model.
%		In practice, a violation of Assumption \ref{ass:same-var} can be identified by a reduced fit of $\mixPPdfEst$ from Eq.~\eqref{eq:pvec-marginal} estimated with the spectral \gls{mom} to the true $\mixPPdf$.
		
		Under Assumption~\ref{ass:same-var}, the $\cmpIdx$-th mixture component covariance matrix is $\vecCmpCov \approx \cmpVar\mathbf{I}_{\numVecEl\times\numVecEl}$, where $\mathbf{I}_{\numVecEl\times\numVecEl}$ is the $\numVecEl\times\numVecEl$ identity matrix. The average variance over mixture components is $\avVar \approx \sum_{\cmpIdx = 1}^\numCmp \cmpWgt\cmpVar$.
	\vspace{-5pt}
	\subsection{The spectral method of moments}
	\label{sec:prop-met-theo}
		The spectral method of moments was formulated for multivariate spherically Gaussian distributed data in \cite{Hsu2012}. Their approach builds on the relation between the population moments and the model parameters, namely, the mixture weights, means and variances. In this section, we formulate similar relations for the \pval~vectors, given that they follow the model from Eq.~\eqref{eq:pvec-marginal} and fulfill Assumption~\ref{ass:same-var}.
		We first extend \cite[Theorem 2]{Hsu2012} such that it fits to our proposed data model.
		%\subsubsection{Theory}
		\begin{theo}[Relation of mixture model parameters to spectral quantities]
		\label{theo:estimators-mix-par}
			For a $\numVecEl\times 1$ random vector $\pvecRV\newMinor{ = [\tilde{\pRV[1]}, \dots, \tilde{\pRV[\numVecEl]}]^\top}$ with joint \gls{pdf} $\mulPPdfMix$ \newMinor{and the marginals of $\tilde{\pRV[1]}, \dots, \tilde{\pRV[\numVecEl]}$ are distributed as} defined in Eq.~\eqref{eq:pvec-marginal}\newMinor{,} \old{and }under Assumption~\ref{ass:same-var}, the beta distribution shape parameters $\betaMixA\,\forall\,\vecElIdx\in[\numVecEl]$ and mixture component weights $\cmpWgt\,\forall\,\cmpIdx\in[\numCmp]$ can be expressed by $\betaMixA = \frac{\meanMvCmp{\vecElIdx}}{1-\meanMvCmp{\vecElIdx}}$ and $\cmpWgtVec =  \Big[\vecCmpExp[1],\dots, \vecCmpExp[\numCmp]\Big]^\dagger\vecExp = \Big[\cmpWgt[1], \dots, \cmpWgt[\numCmp]\Big]^\top$ with 
			\begin{equation}
			\label{eq:theo2-mean-vec}
%				\vecCmpExp[\cmpIdx] = \frac{\lambda^{(\cmpIdx)}}{\boldsymbol{\eta}^\top\secondPopSMoMTheo^{\frac{1}{2}} \mathbf{v}^{(\cmpIdx)}}\secondPopSMoMTheo^{\frac{1}{2}}\mathbf{v}^{(\cmpIdx)},
				\vecCmpExp[\cmpIdx] = \frac{\lambda^{(\cmpIdx)}}{\boldsymbol{\eta}^\top\B \mathbf{v}^{(\cmpIdx)}}\B\mathbf{v}^{(\cmpIdx)},
			\end{equation}
%			if $\numCmp<\numVecEl$, $\cmpWgt>0, \,\forall\,\cmpIdx\in[\numCmp]$ and the $\vecCmpExp\,\forall\,\cmpIdx\in[\numCmp]$ are linearly independent.
%			Here, $\boldsymbol{\eta} = [\eta_1, \dots, \eta_\numVecEl]^\top$ is a vector chosen uniformly at random from the unit sphere in $\mathbb{R}^\numVecEl$, $\big(\lambda^{(\cmpIdx)}, \mathbf{v}^{(\cmpIdx)}\big),\, \cmpIdx\in[\numCmp]$ are the non-zero eigenvalue, eigenvector pairs of a $\numVecEl\times\numVecEl$ matrix $\secondPopSMoMTheo^{\dagger^\frac{1}{2}}\thirdPopSMoMTheoVec{\boldsymbol{\eta}}\secondPopSMoMTheo^{\dagger^\frac{1}{2}}$ and $(\cdot)^\dagger$ and $(\cdot)^{\frac{1}{2}}$ denote the Moore-Penrose pseudoinverse and matrix square root, respectively. In addition, $\secondPopSMoMTheo, \thirdPopSMoMTheoVec{\boldsymbol{\eta}}\in\mathbb{R}^{\numVecEl\times\numVecEl}$ and $\thirdPopSMoMTheo\in\mathbb{R}^{\numVecEl\times\numVecEl\times\numVecEl}$ such that
			if $\numCmp<\numVecEl$, $\cmpWgt>0, \,\forall\,\cmpIdx\in[\numCmp]$ and the $\cmpIdx$-th mixture component mean vectors $\vecCmpExp\,\forall\,\cmpIdx\in[\numCmp]$ are linearly independent.
			Here, $\boldsymbol{\eta} = [\eta_1, \dots, \eta_\numVecEl]^\top$ is a vector chosen uniformly at random from the unit sphere in $\mathbb{R}^\numVecEl$, $\big(\lambda^{(\cmpIdx)}, \mathbf{v}^{(\cmpIdx)}\big),\, \cmpIdx\in[\numCmp]$ are the (eigenvalue, eigenvector) pairs of a $\numCmp\times\numCmp$ matrix $\W^\top\thirdPopSMoMTheoVec{\boldsymbol{\eta}}\W$. % and $(\cdot)^\dagger$ and $(\cdot)^{\frac{1}{2}}$ denote the Moore-Penrose pseudoinverse and matrix square root, respectively.
			The projection matrices $\B = \U\big(\U^\top\secondPopSMoMTheo\U\big)^\frac{1}{2}\in\mathbb{R}^{\numCmp\times\numVecEl}$ and $\W = \U\big(\U^\top\secondPopSMoMTheo\U\big)^{\dagger \frac{1}{2}}\in\mathbb{R}^{\numVecEl\times\numCmp}$ are based on the matrix of left singular vectors $\U\in\mathbb{R}^{\numVecEl\times\numCmp}$ of the thin \gls{svd} of $\secondPopSMoMTheo = \U\mathbf{S}\mathbf{R}^\top$. In addition, $\secondPopSMoMTheo, \thirdPopSMoMTheoVec{\boldsymbol{\eta}}\in\mathbb{R}^{\numVecEl\times\numVecEl}$ and $\thirdPopSMoMTheo\in\mathbb{R}^{\numVecEl\times\numVecEl\times\numVecEl}$ such that
			\begin{align}
				\secondPopSMoMTheo 			&= \sum_{\cmpIdx = 1}^\numCmp\cmpWgt\vecCmpExp\tensorprod\vecCmpExp,\\
				\thirdPopSMoMTheoVec{\boldsymbol{\eta}} 	&= \sum_{\vecElIdx_1 = 1}^\numVecEl\sum_{\vecElIdx_2 = 1}^\numVecEl\sum_{\vecElIdx_3 = 1}^\numVecEl\big[\thirdPopSMoMTheo\big]_{\vecElIdx_1, \vecElIdx_2, \vecElIdx_3}[\boldsymbol{\eta}]_{\vecElIdx_3}\baseVec[\vecElIdx_1]\!\tensorprod\baseVec[\vecElIdx_2],\label{eq:third-mom-op}\\
				\thirdPopSMoMTheo 			&= \sum_{\cmpIdx = 1}^\numCmp\big(\cmpWgt\vecCmpExp\tensorprod\vecCmpExp\tensorprod\vecCmpExp\big), \label{eq:theo1_M3}
			\end{align}
			with $\baseVec[\vecElIdx]$ a $\numVecEl\times 1$ vector of $\numVecEl-1$ zeros and $1$ as its $\vecElIdx$-th entry. $\tensorprod$ is the outer product.
		\end{theo}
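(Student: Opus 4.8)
The plan is to adapt the whitening-and-eigendecomposition strategy of \cite[Theorem~2]{Hsu2012} to the beta-mixture moment tensors defined in the statement. Two of the three claims are essentially immediate and I would dispatch them first. The shape-parameter identity $\betaMixA = \meanMvCmp{\vecElIdx}/(1-\meanMvCmp{\vecElIdx})$ follows by algebraically inverting the mean expression in \eqref{eq:first-two-mom}. The weight identity follows once the component mean vectors $\vecCmpExp\,\forall\,\cmpIdx\in[\numCmp]$ are in hand: since by definition $\vecExp = \sum_{\cmpIdx=1}^\numCmp\cmpWgt\vecCmpExp$, stacking the mean vectors as the columns of $\big[\vecCmpExp[1],\dots,\vecCmpExp[\numCmp]\big]$ and applying the Moore--Penrose inverse returns $\cmpWgtVec$ exactly, because the $\vecCmpExp$ are linearly independent by assumption. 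The substantive content is therefore \eqref{eq:theo2-mean-vec}, and the identities hold purely algebraically once $\secondPopSMoMTheo$ and $\thirdPopSMoMTheo$ are taken in their defined mean-outer-product forms.

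First I would establish the whitening step. Because the $\vecCmpExp$ are linearly independent and $\cmpWgt>0$, the matrix $\secondPopSMoMTheo = \sum_{\cmpIdx=1}^\numCmp\cmpWgt\vecCmpExp\tensorprod\vecCmpExp$ has rank exactly $\numCmp$, so its thin \gls{svd} produces $\U\in\mathbb{R}^{\numVecEl\times\numCmp}$ whose columns span the same subspace as the $\vecCmpExp$. Writing $\mathbf{G} = \U^\top\secondPopSMoMTheo\U$, which is an invertible $\numCmp\times\numCmp$ matrix, a direct computation gives $\W^\top\secondPopSMoMTheo\W = \mathbf{G}^{\dagger\frac{1}{2}}\mathbf{G}\,\mathbf{G}^{\dagger\frac{1}{2}} = \mathbf{I}_{\numCmp\times\numCmp}$. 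Consequently the vectors $\sqrt{\cmpWgt}\,\W^\top\vecCmpExp$, $\cmpIdx\in[\numCmp]$, satisfy $\sum_{\cmpIdx=1}^\numCmp\big(\sqrt{\cmpWgt}\,\W^\top\vecCmpExp\big)\big(\sqrt{\cmpWgt}\,\W^\top\vecCmpExp\big)^\top = \W^\top\secondPopSMoMTheo\W = \mathbf{I}_{\numCmp\times\numCmp}$, so they form an orthonormal set in $\mathbb{R}^\numCmp$.

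Next I would contract the third-order tensor with $\boldsymbol{\eta}$ and whiten. Using \eqref{eq:theo1_M3} and \eqref{eq:third-mom-op}, the matrix $\thirdPopSMoMTheoVec{\boldsymbol{\eta}}$ equals $\sum_{\cmpIdx=1}^\numCmp\cmpWgt(\boldsymbol{\eta}^\top\vecCmpExp)\,\vecCmpExp\tensorprod\vecCmpExp$, so that
\[
\W^\top\thirdPopSMoMTheoVec{\boldsymbol{\eta}}\W = \sum_{\cmpIdx=1}^\numCmp(\boldsymbol{\eta}^\top\vecCmpExp)\big(\sqrt{\cmpWgt}\,\W^\top\vecCmpExp\big)\big(\sqrt{\cmpWgt}\,\W^\top\vecCmpExp\big)^\top .
\]
By the orthonormality just shown, this is the spectral decomposition of the symmetric $\numCmp\times\numCmp$ matrix $\W^\top\thirdPopSMoMTheoVec{\boldsymbol{\eta}}\W$: its eigenvalues are $\lambda^{(\cmpIdx)} = \boldsymbol{\eta}^\top\vecCmpExp$ and its eigenvectors are $\mathbf{v}^{(\cmpIdx)} = \sqrt{\cmpWgt}\,\W^\top\vecCmpExp$. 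Since $\boldsymbol{\eta}$ is drawn uniformly from the unit sphere, the projections $\boldsymbol{\eta}^\top\vecCmpExp$ are almost surely distinct, so the eigendecomposition is unique up to the sign of each eigenvector.

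Finally I would invert the whitening. Because $\mathbf{G}$ is invertible, $\B\W^\top = \U\mathbf{G}^{\frac{1}{2}}\mathbf{G}^{\dagger\frac{1}{2}}\U^\top = \U\U^\top$, the orthogonal projector onto the column space of $\U$; as each $\vecCmpExp$ lies in that space, $\B\mathbf{v}^{(\cmpIdx)} = \sqrt{\cmpWgt}\,\U\U^\top\vecCmpExp = \sqrt{\cmpWgt}\,\vecCmpExp$. The unknown factor $\sqrt{\cmpWgt}$ is then eliminated by observing $\boldsymbol{\eta}^\top\B\mathbf{v}^{(\cmpIdx)} = \sqrt{\cmpWgt}\,\lambda^{(\cmpIdx)}$, which yields $\vecCmpExp = \big(\lambda^{(\cmpIdx)}/\boldsymbol{\eta}^\top\B\mathbf{v}^{(\cmpIdx)}\big)\B\mathbf{v}^{(\cmpIdx)}$, i.e.\ \eqref{eq:theo2-mean-vec}; this ratio is invariant under flipping the sign of $\mathbf{v}^{(\cmpIdx)}$, which resolves the only remaining ambiguity. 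The hard part will be the whitening bookkeeping---verifying the pseudo-inverse square-root identities $\W^\top\secondPopSMoMTheo\W = \mathbf{I}_{\numCmp\times\numCmp}$ and $\B\W^\top = \U\U^\top$, checking that the contracted tensor $\W^\top\thirdPopSMoMTheoVec{\boldsymbol{\eta}}\W$ genuinely inherits the orthonormal eigenbasis of $\secondPopSMoMTheo$, and arguing that a random $\boldsymbol{\eta}$ almost surely separates the eigenvalues so that the component vectors can be read off without ambiguity.
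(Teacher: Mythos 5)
Your proposal is correct and takes essentially the same route as the paper: the paper's proof simply defers the whitening-and-eigendecomposition argument to \cite[Theorem~2]{Hsu2012}, noting only that $\W^\top\thirdPopSMoMTheoVec{\boldsymbol{\eta}}\W$ is diagonalizable with eigenvalues $\lambda^{(\cmpIdx)}=\boldsymbol{\eta}^\top\vecCmpExp$, which is exactly the argument you carry out in full (whitening via $\W^\top\secondPopSMoMTheo\W=\mathbf{I}$, orthonormality of $\sqrt{\cmpWgt}\,\W^\top\vecCmpExp$, almost-sure eigenvalue separation by the random $\boldsymbol{\eta}$, and un-whitening via $\B\W^\top=\U\U^\top$). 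Your write-up is a self-contained version of what the paper cites, with the same key identities and the same resolution of the sign ambiguity.
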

		\begin{proof}
			$\betaMixA = \frac{\meanMvCmp{\vecElIdx}}{1-\meanMvCmp{\vecElIdx}}$ follows from Eq.~\eqref{eq:first-two-mom}. The relations for the mixture component means $\vecCmpExp$, $\cmpIdx\in[\numCmp]$ and the mixture proportion vector $\cmpWgtVec$ result directly from the proof of \cite[Theorem 2]{Hsu2012}. Note, that $\B\B^\top = \secondPopSMoMTheo$ and $\W\W^\top = \secondPopSMoMTheo^{-1}$, which implies that $\W^\top\thirdPopSMoMTheoVec{\boldsymbol{\eta}}\W = \mathbf{R}^\top\mathbf{D}(\boldsymbol{\eta})\mathbf{R}$ is diagonalizable along the lines of $\mathbf{M}_\text{GMM}(\boldsymbol{\eta})$ in \cite[Theorem~2]{Hsu2012} by a diagonal matrix $\mathbf{D}(\boldsymbol{\eta})\in\mathbb{R}^{\numCmp\times\numCmp}$ with diagonal entries $\lambda^{(\cmpIdx)}=\boldsymbol{\eta}^\top\vecCmpExp, \cmpIdx\in[\numCmp]$.
		\end{proof}

		The relations established in Theorem~\ref{theo:estimators-mix-par} allow to estimate the mixture model by means of its parameters $\vecCmpExp$ and $\cmpWgtVec$, since $\vecExp$, $\secondPopSMoMTheo$, and $\thirdPopSMoMTheo$ enable a one-to-one mapping between the model parameters and the observable population moments. In particular, $\vecExp = \E{\pvecRV}$ is the first moment of $\pvecRV$, whereas $\secondPopSMoMTheo$ and $\thirdPopSMoMTheo$ are related to the second and third moments $\E{\pvecRV\tensorprod\pvecRV}$ and $\E{\pvecRV\tensorprod\pvecRV\tensorprod\pvecRV}$ of $\pvecRV$. The exact relationships are derived in Theorem~\ref{theo:relation-pop-mom}.

		\begin{theo}[Relation of spectral to observable quantities]
			\label{theo:relation-pop-mom}
			Under the assumptions in Theorem~\ref{theo:estimators-mix-par}, the average variance over mixture components $\avVar = \sum_{\cmpIdx = 1}^\numCmp \cmpWgt\cmpVar$ is the smallest eigenvalue of the population covariance matrix $\vecCov$. With $\noiseEigVec \in \mathbb{R}^\numVecEl$ any unit-norm eigenvector of eigenvalue $\avVar$, we find
			\begin{align}
				%\firstPopSMoM 	&= \E{\pvec \Big(\noiseEigVec^\top\big(\pvec-\vecExp\big)\Big)^2},\\ %- \noiseEigVec\odot\noiseEigVec\odot\avThrdOrdCenMom \\
				\secondPopSMoMTheo 	&= \E{\pvecRV\tensorprod\pvecRV} - \avVar\mathbf{I}_{\numVecEl\times\numVecEl}, \label{eq:theo2_M2}\\
				\thirdPopSMoMTheo 	&= \thirdPopSMoM- \thirdPopSMoMErr,\label{eq:theo2_M3}
			\end{align}
			where the observable $\thirdPopSMoM$ and the difference $\thirdPopSMoMErr$ to the non-observable $\thirdPopSMoMTheo$ are
			
			\begin{align}
				\begin{split}
					\label{eq:third-mom-obs}
					\thirdPopSMoM 		&= \E{\pvecRV\tensorprod\pvecRV\tensorprod\pvecRV}-\sum_{\vecElIdx = 1}^{\numVecEl}\big(\firstPopSMoM\tensorprod \baseVec\tensorprod\baseVec\\ &\qquad+\baseVec\tensorprod\firstPopSMoM\tensorprod\baseVec+\baseVec\tensorprod\baseVec\tensorprod\firstPopSMoM\big),
				\end{split}\\
				\thirdPopSMoMErr	&= \thirdPopSMoMErrThird - \thirdPopSMoMErrFirst,
			\end{align}
			with third-order tensors $\thirdPopSMoM, \thirdPopSMoMErrThird, \thirdPopSMoMErrFirst \in\mathbb{R}^{\numVecEl\times\numVecEl\times\numVecEl}$, $\numVecEl\times1$ vector $\firstPopSMoM = \firstPopSMoMTheo + \firstPopSMoMErr \in\mathbb{R}^{\numVecEl}$, % $\secondPopSMoM \in \mathbb{R}^{\numVecEl\times\numVecEl}$, $\thirdPopSMoM\in\mathbb{R}^{\numVecEl\times\numVecEl\times\numVecEl}$ and the tensor product $\tensorprod$.
%			These observable quantities are related to the population parameters through
			\begin{align}
				\label{eq:first-mom-obs}
				\firstPopSMoM  	&= \E{\pvecRV \Big(\noiseEigVec^\top\big(\pvecRV-\vecExp\big)\Big)^2}, \quad  \firstPopSMoMErr 		= \vphantom{\sum_{\cmpIdx = 1}^\numCmp}\noiseEigVec\odot\noiseEigVec\odot\avThrdOrdCenMom.%= \firstPopSMoMTheo + \firstPopSMoMErr,\\
%				\secondPopSMoM	&= \sum_{\cmpIdx = 1}^\numCmp\cmpWgt\vecCmpExp\tensorprod\vecCmpExp \coloneqq \secondPopSMoMTheo,\\
%				\thirdPopSMoM	&= \thirdPopSMoMTheo + \thirdPopSMoMErr, \\
%								&= \sum_{\cmpIdx = 1}^\numCmp\Big(\cmpWgt\cmpVar\vecCmpExp\Big) + \vphantom{\sum_{\cmpIdx = 1}^\numCmp}\noiseEigVec\odot\noiseEigVec\odot\avThrdOrdCenMom
			\end{align}
%			where we distinguish between the terms depending on the mixture model defining parameters, i.e., the weights $\cmpWgt$, means $\vecCmpExp$ and variances $\cmpVar$,
%			\begin{align*}
%				\firstPopSMoMTheo 	&= \sum_{\cmpIdx = 1}^\numCmp\Big(\cmpWgt\cmpVar\vecCmpExp\Big) \\
%				\thirdPopSMoMTheo 	&= \sum_{\cmpIdx = 1}^\numCmp\big(\cmpWgt\vecCmpExp\tensorprod\vecCmpExp\tensorprod\vecCmpExp\big)
%			\end{align*}
%			and the terms depending on the average third cumulant $\avThrdOrdCenMom$,
%			\begin{align*}
%				&\firstPopSMoMErr 		= \vphantom{\sum_{\cmpIdx = 1}^\numCmp}\noiseEigVec\odot\noiseEigVec\odot\avThrdOrdCenMom\qquad\qquad\qquad\qquad\thirdPopSMoMErr	= \thirdPopSMoMErrThird - \thirdPopSMoMErrFirst,
%			\end{align*}
			$\odot$ denotes the Hadamard product and
			\begin{align}
				\thirdPopSMoMErrThird	= \sum_{\vecElIdx = 1}^\numVecEl&\Big(\avElCmpThrdCenMom\,\baseVec\tensorprod\baseVec\tensorprod\baseVec\Big)\label{eq:theo2_M3_delta3}\\
				\begin{split}
					\thirdPopSMoMErrFirst	= \sum_{\vecElIdx = 1}^\numVecEl&\Big(\firstPopSMoMErr\tensorprod\baseVec\tensorprod\baseVec \\
					&+ \baseVec\tensorprod\firstPopSMoMErr\tensorprod\baseVec + \baseVec\tensorprod\baseVec\tensorprod\firstPopSMoMErr\Big).
				\end{split}
			\end{align}
			$\avThrdOrdCenMom = \big[\avElCmpThrdCenMom[1],\dots, \avElCmpThrdCenMom[\numVecEl]\big]^\top$ is the vector composed of the $\numVecEl$ marginals' mean third cumulants, i.e., Eq.~\eqref{eq:third-mom} $\forall\,\vecElIdx\in[\numVecEl]$ averaged over the $\numCmp$ mixture components.
		\end{theo}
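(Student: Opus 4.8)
The plan is to condition on the active mixture component and evaluate the raw moments of $\pvecRV$ up to order three. Under Assumption~\ref{ass:same-var} the $\cmpIdx$-th component has the spherical covariance $\cmpVar\mathbf{I}_{\numVecEl\times\numVecEl}$, and, since the marginals are mutually independent, its third-cumulant tensor is purely diagonal with entries $\elCmpThrdCenMom{\cmpIdx}$. Writing $\mathbf{X}$ for a draw from component $\cmpIdx$ and $\mathbf{Y}=\mathbf{X}-\vecCmpExp$ for its centering, Eq.~\eqref{eq:first-two-mom} gives $\E{X_\vecElIdx X_\vecElIdxAltI\mid\cmpIdx}=\cmpVar\delta_{\vecElIdx\vecElIdxAltI}+\elCmpExp[\vecElIdx]{\cmpIdx}\elCmpExp[\vecElIdxAltI]{\cmpIdx}$, while independence together with Eq.~\eqref{eq:third-mom} makes the component third moment a sum of the rank-one tensor $\vecCmpExp\tensorprod\vecCmpExp\tensorprod\vecCmpExp$, a symmetric $\cmpVar$-term $\cmpVar\sum_{\vecElIdx}\big(\vecCmpExp\tensorprod\baseVec\tensorprod\baseVec+\baseVec\tensorprod\vecCmpExp\tensorprod\baseVec+\baseVec\tensorprod\baseVec\tensorprod\vecCmpExp\big)$, and a fully diagonal cumulant term $\sum_{\vecElIdx}\elCmpThrdCenMom{\cmpIdx}\,\baseVec\tensorprod\baseVec\tensorprod\baseVec$. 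Averaging over $\cmpIdx$ with weights $\cmpWgt$ then produces the population moments.

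Carrying this out for the second moment gives $\E{\pvecRV\tensorprod\pvecRV}=\avVar\mathbf{I}_{\numVecEl\times\numVecEl}+\secondPopSMoMTheo$, that is, Eq.~\eqref{eq:theo2_M2}. For the smallest-eigenvalue claim I would then write $\vecCov=\E{\pvecRV\tensorprod\pvecRV}-\vecExp\tensorprod\vecExp=\avVar\mathbf{I}_{\numVecEl\times\numVecEl}+\big(\secondPopSMoMTheo-\vecExp\tensorprod\vecExp\big)$. The bracketed matrix equals $\sum_{\cmpIdx=1}^{\numCmp}\cmpWgt\big(\vecCmpExp-\vecExp\big)\tensorprod\big(\vecCmpExp-\vecExp\big)$, the covariance of the discrete law placing mass $\cmpWgt$ on $\vecCmpExp$; it is positive semidefinite of rank at most $\numCmp<\numVecEl$ and hence has at least one zero eigenvalue. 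Therefore the spectrum of $\vecCov$ is that of the bracketed matrix shifted up by $\avVar$, so $\avVar$ is its smallest eigenvalue and any unit eigenvector $\noiseEigVec$ for $\avVar$ lies in the null space of $\secondPopSMoMTheo-\vecExp\tensorprod\vecExp$. The decisive by-product I would record here is $\noiseEigVec^\top\big(\vecCmpExp-\vecExp\big)=0$ for every $\cmpIdx\in[\numCmp]$, since $\noiseEigVec^\top\big(\secondPopSMoMTheo-\vecExp\tensorprod\vecExp\big)\noiseEigVec=\sum_{\cmpIdx}\cmpWgt\big(\noiseEigVec^\top(\vecCmpExp-\vecExp)\big)^2=0$.

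For the third moment the spherical-Gaussian identity of \cite[Theorem~1]{Hsu2012} fails in two respects, and reconciling them is the hard part. First, because $\cmpVar$ now depends on $\cmpIdx$, the averaged $\cmpVar$-term carries the weighted mean $\sum_{\cmpIdx=1}^{\numCmp}\cmpWgt\cmpVar\vecCmpExp$ in place of the clean factor $\avVar\vecExp$; second, the nonvanishing beta third cumulants contribute the extra diagonal tensor $\sum_{\vecElIdx=1}^{\numVecEl}\avElCmpThrdCenMom\,\baseVec\tensorprod\baseVec\tensorprod\baseVec=\thirdPopSMoMErrThird$. The construction that disentangles the first effect is the auxiliary vector $\firstPopSMoM$ of Eq.~\eqref{eq:first-mom-obs}. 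Conditioning on component $\cmpIdx$ and invoking $\noiseEigVec^\top(\vecCmpExp-\vecExp)=0$ collapses $\noiseEigVec^\top(\pvecRV-\vecExp)$ to $\noiseEigVec^\top\mathbf{Y}$, so that $\E{(\noiseEigVec^\top\mathbf{Y})^2\mid\cmpIdx}=\cmpVar\|\noiseEigVec\|^2=\cmpVar$ and, by the same independence argument as above, $\E{\mathbf{Y}\,(\noiseEigVec^\top\mathbf{Y})^2\mid\cmpIdx}=(\noiseEigVec\odot\noiseEigVec)\odot\vecCmpThrdOrdCenMom$. Averaging yields $\firstPopSMoM=\sum_{\cmpIdx=1}^{\numCmp}\cmpWgt\cmpVar\vecCmpExp+\noiseEigVec\odot\noiseEigVec\odot\avThrdOrdCenMom$, whence $\firstPopSMoMTheo=\firstPopSMoM-\firstPopSMoMErr$ is exactly the problematic weighted mean $\sum_{\cmpIdx}\cmpWgt\cmpVar\vecCmpExp$.

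It then remains to assemble the pieces. Substituting $\firstPopSMoMTheo=\firstPopSMoM-\firstPopSMoMErr$ into the averaged third-moment expansion, the symmetric $\cmpVar$-contribution splits as $\sum_{\vecElIdx}\big(\firstPopSMoM\tensorprod\baseVec\tensorprod\baseVec+\baseVec\tensorprod\firstPopSMoM\tensorprod\baseVec+\baseVec\tensorprod\baseVec\tensorprod\firstPopSMoM\big)$, which is subtracted from $\E{\pvecRV\tensorprod\pvecRV\tensorprod\pvecRV}$ to form the observable $\thirdPopSMoM$ of Eq.~\eqref{eq:third-mom-obs}, minus the matching $\firstPopSMoMErr$-sum $\thirdPopSMoMErrFirst$. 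Collecting this with the cumulant tensor $\thirdPopSMoMErrThird$ and tracking signs gives $\thirdPopSMoMTheo=\thirdPopSMoM+\thirdPopSMoMErrFirst-\thirdPopSMoMErrThird=\thirdPopSMoM-\thirdPopSMoMErr$, which is Eq.~\eqref{eq:theo2_M3}. The main obstacle throughout is the careful bookkeeping of the component-dependent variance $\cmpVar$ and the nonzero third cumulants, both absent from \cite{Hsu2012}, and recognising that the eigenvector identity $\noiseEigVec^\top(\vecCmpExp-\vecExp)=0$ is precisely what lets the auxiliary vector $\firstPopSMoM$ single out the weighted mean $\sum_{\cmpIdx}\cmpWgt\cmpVar\vecCmpExp$ that the non-Gaussianity introduces.
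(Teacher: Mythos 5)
Your proof is correct and takes essentially the same route as the paper's: conditioning on the mixture component, expanding the second and third raw moments with centered vectors, using that the eigenvector $\noiseEigVec$ annihilates every $\vecCmpExp-\vecExp$ so that $\firstPopSMoM = \sum_{\cmpIdx=1}^{\numCmp}\cmpWgt\cmpVar\vecCmpExp + \firstPopSMoMErr$, and performing the same final bookkeeping with $\thirdPopSMoMErrThird$ and $\thirdPopSMoMErrFirst$. The only departure is that you prove the smallest-eigenvalue claim and the null-space property explicitly (via the rank and positive semidefiniteness of $\secondPopSMoMTheo-\vecExp\tensorprod\vecExp$), where the paper imports both from the derivation in \cite[Theorem~1]{Hsu2012}.
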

	\vspace{-5pt}
		\begin{proof}
			See Appendix~\ref{apx:smom-theo}.
		\end{proof}
	\vspace{-5pt}
		The first and second population moments $\vecExp$ and $\E{\pvecRV\tensorprod\pvecRV}$, the population covariance matrix $\vecCov$ and consequently also $\avVar$, $\noiseEigVec$ and $\secondPopSMoMTheo$ can be found using consistent sample estimates of the moments and covariance. %, using the observations $\pvec_{1},\dots, \pvec_{\numTls}$.
		$\thirdPopSMoMTheoVec{\boldsymbol{\eta}}$ cannot be estimated directly, since only $\thirdPopSMoM$ depends exclusively on sample moments, but not $\thirdPopSMoMErr$. However, we show in Appendix~\ref{apx:smom-conv} that $\thirdPopSMoMVec{\boldsymbol{\eta}} = \sum_{\vecElIdx_1 = 1}^\numVecEl\sum_{\vecElIdx_2 = 1}^\numVecEl\sum_{\vecElIdx_3 = 1}^\numVecEl\big[\thirdPopSMoM\big]_{\vecElIdx_1, \vecElIdx_2, \vecElIdx_3}[\boldsymbol{\eta}]_{\vecElIdx_3}\baseVec[\vecElIdx_1]\!\tensorprod\baseVec[\vecElIdx_2]$ is a sufficiently good approximation of $\thirdPopSMoMTheoVec{\boldsymbol{\eta}}$ for estimating the \pval~mixture density. $\thirdPopSMoMVec{\boldsymbol{\eta}}$ is observable. % With Corollary~\ref{cor:third-mom}, the sample estimates $\secondPopSMoMTheoEst$ and $\thirdPopSMoMEstVec{\boldsymbol{\eta}}$ can be used to estimate $\cmpWgtVec$ and $\betaMixA$, $\forall\,\vecElIdx\in[\numVecEl], \forall\,\cmpIdx\in[\numCmp]$, from Theorem~\ref{theo:estimators-mix-par}.
%		\begin{cor}
%		\label{cor:third-mom}
%			The sample-based estimator
%			$\thirdPopSMoMEstVec{\boldsymbol{\eta}} = \sum_{\vecElIdx_1 = 1}^\numVecEl\sum_{\vecElIdx_2 = 1}^\numVecEl\sum_{\vecElIdx_3 = 1}^\numVecEl\big[\thirdPopSMoMEst\big]_{\vecElIdx_1, \vecElIdx_2, \vecElIdx_3}[\boldsymbol{\eta}]_{\vecElIdx_3}\baseVec[\vecElIdx_1]\!\tensorprod\baseVec[\vecElIdx_2]$, with $\thirdPopSMoMEst$ found by plugging in $\pvec_{1},\dots, \pvec_{\numTls}$ into Eq.~\eqref{eq:third-mom-obs} and Eq.~\eqref{eq:first-mom-obs}, is a sufficiently good estimator of $\thirdPopSMoMTheoVec{\boldsymbol{\eta}}$. 
%		\end{cor}
%		\begin{proof}
%			See Appendix~\ref{apx:smom-conv}
%		\end{proof}
		In the following section, we exploit these relations to estimate the component first central moments $\vecCmpExp$ and the model parameters $\betaMixA$, $\cmpWgt$ $\,\forall\,\cmpIdx\in[\numCmp], \,\forall\, \vecElIdx\in[\numVecEl]$ for Eq.~\eqref{eq:mul-p-to-uni}.
	\vspace{-5pt}
	\subsection{The \gls{lfdr}-\gls{smom} estimator}
	\label{sec:prop-met-alg}
%		In this section, we present the complete \gls{lfdr} estimation algorithm. The simulation results in Sec.~\ref{sec:sim-res} indicate that this \gls{lfdr} estimator is conservative in the sense of Definition~\ref{def:con-lfdr-est}, thereby enabling the identification of alternative regions with strict control of the \gls{fdr}, while being more powerful than existing methods for a variety of spatial inference scenarios.
%		
		The proposed method is summarized in Alg.~\ref{alg:spa-lfdr-est}. It partitions the \pval s into subsets, estimates the parameters for the joint \gls{pdf} of the resulting \pval~vectors and determines $\mixPPdfEst=\mixPPdfBMEst$ from Eq.~\eqref{eq:mul-p-to-uni}. The density fit is repeated several times for \newMinor{$G$} different \pval~vectors, increasing subset sizes $\numVecEl = 2, 3, \dots$ and increasing model orders $\numCmp < \numVecEl$. \newMinor{$M \cdot\numVecEl = \lfloor\numSen/\numVecEl\rfloor\cdot\numVecEl$ $p$-values are used for the density fit if $\numSen/\numVecEl$ is not an integer.} If the distinct subsets $\pSubSetPx\subset\old{\pSetPx}\newMinor{\pSetSen}$ are \old{immovable}\newMinor{fixed}, because they are formed based on fixed covariate information like spatial proximity, we randomly rearrange the elements \newMinor{$G$ times} within each \pval~vector for different runs. \newMinor{Selecting $G=10$ created sufficient flexibility to find $p$-value vectors that yield sufficiently accurate density model estimates in all considered scenarios}. The \gls{lfdr}'s are estimated using $\mixPPdfEst$, Eq.~\eqref{eq:extract-two-groups} and Eq.~\eqref{eq:lfdr-est}. The goodness of fit is assessed by the value $\disMsr^\ast$ of a difference measure between $\mixPPdfEst$ and the data. $\disMsr^\ast$ is initialized with a large value, to ensure that the algorithm finds a solution.
		The best solution has been found, if additional degrees of freedom do not lead to a better fit.
		
		%Since the validity of Assumption~\ref{ass:same-var} and the impact of the neglected error terms $\thirdPopSMoMErr$ on the parameter estimates depends on the \pval~vectors, we also require also different runs with different $\numVecEl$ and $\pvec_{1},\dots, \pvec_{\numTls}$.
		
%		The multiple runs with differently sorted tile vectors are required since we aim to assign two tiles composed of similarly behaved \pval s to the same component $\cmpIdx$, regardless of the position of the corresponding pixes within the tile. Consider, for example, two $2\times2$ tiles. Say, both contain three pixels and one pixel at which $\HNul$ and $\HAlt$ are in place. Also, $\HAlt$ leads to statistically similarly behaving alternate \pval s in both tiles. However, for the first tile, $\HAlt$ is in place at the pixel located in the top left corner, whereas in the second, $\HAlt$ is in place at the pixel located in the bottom right corner. If the ordering of the entries of the tile vectors depended on the location of pixels within the tile, associating those two tiles the same mixture model component would not be justified.
		The parameters of the \pval~vector mixture density are estimated by Alg.~\ref{alg:learnSB}, which determines the right side of Eq.~\eqref{eq:theo2-mean-vec} from the sample moments. In Line 1, the data is split into two distinct sets of equal size. If $\numTls$ is odd, we drop one $\pvec_{\tlIdx}$. The sample moment-based estimates $\secondPopSMoMTheoEst$ for $\secondPopSMoMTheo$ and $\thirdPopSMoMEstVec{\boldsymbol{\eta}}$ for $\thirdPopSMoMVec{\boldsymbol{\eta}}$ are multiplied during the estimation process. Hence, they must be computed from different data to guarantee their independence. The sample covariance matrix estimates $\hat{\vecCov}_{\mathcal{S}}$ in Line 3 are full rank, but Line 5 reduces the rank to its assumed value $\numCmp$. $\thirdPopSMoMEst$ is determined in Lines 6 and 7. Lines 8 to 13 are dedicated to the estimation of the eigenvector, eigenvalue pairs $\big(\lambda^{(\cmpIdx)}, \mathbf{v}^{(\cmpIdx)}\big)$ for Eq.~\eqref{eq:theo2-mean-vec}. %Our proposed way to generate $\boldsymbol{\eta}_u$ ensures that $\thirdPopSMoMVec{\boldsymbol{\eta}_u}\approx\thirdPopSMoMTheoVec{\boldsymbol{\eta}_u}$ is justified. %To enable finite sample performance analysis, the authors of \cite{Hsu2012} estimate $\big(\lambda^{(\cmpIdx)}, \mathbf{v}^{(\cmpIdx)}\big)$ from estimates of $\W\thirdPopSMoMTheoVec{\boldsymbol{\theta}}\W^\top$ instead of $\secondPopSMoMTheo^{\dagger^\frac{1}{2}}\thirdPopSMoMTheoVec{\boldsymbol{\eta}}\secondPopSMoMTheo^{\dagger^\frac{1}{2}}$. The relation for $\vecCmpExpEst$ in Line 16 is the equivalent to Eq.~\eqref{eq:theo2-mean-vec}, since $\secondPopSMoMTheo^{\dagger^\frac{1}{2}} = \W\U^\top$.
		We generate $U$ different vectors $\boldsymbol{\eta}_u$ and select the best run in Alg.~\ref{alg:spa-lfdr-est} to fulfill Lemma~\ref{lem:apx} \newMinor{given in Appendix~\ref{apx:smom-conv}}. We found a value as low as $U = 10$ to provide satisfying results.
		\begin{algorithm}
			\caption{The proposed algorithm \gls{lfdr}-\gls{smom}}
			\label{alg:spa-lfdr-est}
			\hspace*{\algorithmicindent} \textbf{Input}: $\old{\pSetPx}\newMinor{\pSetSen} = \{\p[1], \dots, \p[\old{\numPx}\newMinor{\numSen}]\}$, \old{$\mathcal{K}$, $\mathcal{D}$, }$\numPer$, $\disFct{\cdot}{\cdot}$, $U$\\%$\numPxX$, $\numPxY$,\\
			\hspace*{\algorithmicindent} \textbf{Output}: $\lfdrLocHat[1], \dots, \lfdrLocHat[\old{\numPx}\newMinor{\numSen}]$
			\begin{algorithmic}[1]
				\vspace{1pt}
				\Statex\textbf{Step 1:} \textit{Estimation of $\mixPPdf$}
				\State Initialize $\disMsr^\ast$ as the largest possible number
				\State Compute the histogram $\pHist$ for the data in $\old{\pSetPx}\newMinor{\pSetSen}$
				\For{$\numVecEl\in\mathbb{N}_{\geq 2}$}
				\State Divide $\old{\pSetPx}\newMinor{\pSetSen}$ into subsets $\mathcal{P}^{\numVecEl}_\tlIdx$, $\tlIdx \in[\numTls]$
				\For{$\perIdx\in[\numPer]$}
				\State \Longunderstack[l]{Form vectors $\pvec_\tlIdx\in\mathbb{R}^{\numVecEl}, \forall\,\tlIdx\in[\numTls]$, by random\\ordering of the elements in $\mathcal{P}^{\numVecEl}_\tlIdx$}
				\State Define $\pvecSet^{\numVecEl\times\numTls} = \big\{\pvec_{\tlIdx}\in\mathbb{R}^{\numVecEl}\,|\,\tlIdx\in[\numTls] \big\}$, $\disMsr^{\numVecEl} = \infty$
				\For{$\numCmp\leq\numVecEl$}
				\State \Longunderstack[l]{Obtain $U$ sets of parameters $\cmpWgtEst_u$, $\betaMixAEst[\vecElIdx, u], $\\$ \forall\,\cmpIdx\in[\numCmp], \forall\,\vecElIdx\in[\numVecEl], \forall\, u\in[U]$ via Alg.~\ref{alg:learnSB}}
				\State Find $\mixPPdfEst[u]$ for $\cmpWgtEst_u,\betaMixAEst[\vecElIdx, u]$ and Eq.~\eqref{eq:mul-p-to-uni} $u\in[U]$
				\State Select $u^\ast\!=\!\mathrm{argmin}_u\big(\disFct{\pHist}{ \mixPPdfEst[u]}\big)$
				\If{$\disFct{\pHist}{\mixPPdfEst[u^\ast]}<\disMsr^{\numVecEl}$}
				\State $\disMsr^{\numVecEl} = \disFct{\pHist}{\mixPPdfEst[u^\ast]}$
				\If{$\disMsr^{\numVecEl}\leq\disMsr^\ast$}
				\State $\disMsr^{\ast} =\disMsr^{\numVecEl}, \mixPPdfEst = \mixPPdfEst[u^\ast]$, $\numVecEl^\ast = \numVecEl$
				\EndIf
				\Else{ break.}
				\EndIf 
				\EndFor
				\EndFor
				\If{$\numVecEl^\ast \neq \numVecEl$} break.
				\EndIf
				\EndFor
				\vspace{3pt}
				\Statex \textbf{Step 2:} \textit{Estimation of $\nulFrc$ and $\altPPdf$}
				\State Compute $\nulFrcHat = \min\big(\mixPPdfEst\big)$
				\State Compute $\altPPdfEst = \big(1-\nulFrcHat\big)^{-1}(\mixPPdfEst-\nulFrcHat\big)$
				\State Determine $\lfdrLocHat[\old{\pxIdx}\newMinor{\senIdx}], \forall\,\old{\pxIdx}\newMinor{\senIdx}\in[\old{\numPx}\newMinor{\numSen}]$ from Eq.~\eqref{eq:lfdr-est}% using $\nulFrcHat$, $\altPPdfEst$
			\end{algorithmic}
		\end{algorithm}
		\begin{algorithm}
			\caption{Mixture model parameter estimation}
			\label{alg:learnSB}
			\hspace*{\algorithmicindent} \textbf{Input}: $\pvecSet^{\numVecEl\times\numTls} = \big\{\pvec_{\tlIdx}\in\mathbb{R}^{\numVecEl}\,|\,\tlIdx = 1, \dots, \numTls\big\}$, $\numCmp$\\
			\hspace*{\algorithmicindent} \textbf{Output}: $\cmpWgtVecEst_u\in\mathbb{R}^\numCmp$, $\betaMixAEst[\vecElIdx, u],\forall\,\vecElIdx\in[\numVecEl],\forall\,\cmpIdx\in[\numCmp], \forall\, u\in[U]$
			\begin{algorithmic}[1]
				\State Randomly split $\pvecSet^{\numVecEl\times\numTls}$ into $\big\{\!\pvecSetOne,\pvecSetTwo\,\big|\,|\pvecSetOne| = |\pvecSetTwo|, \pvecSetOne\,\cap\,\pvecSetTwo\!=\! \emptyset\!\big\}$
				\State Compute $\vecExpEstSet{\mathcal{S}} = {|\mathcal{S}|}^{-1}\sum_{\pvec\in\mathcal{S}}\pvec, \mathcal{S} = \pvecSetOne, \pvecSetTwo$
				\State Find $\hat{\vecCov}_{\mathcal{S}} = {|\mathcal{S}|}^{-1}\big(\sum_{\pvec\in\mathcal{S}}\pvec\pvec^\top\big)-\vecExpEstSet{\mathcal{S}}\vecExpEstSet{\mathcal{S}}^\top\old{, \mathcal{S} = \pvecSetOne, \pvecSetTwo}$
				\State Determine $\avVarEstSet{\mathcal{S}}$, the smallest eigenvalue of $\hat{\vecCov}_{\mathcal{S}}$ and $\noiseEigVecEst{\mathcal{S}}$, its corresponding eigenvector\old{, $\mathcal{S} = \pvecSetOne, \pvecSetTwo$}
				\State Compute the best rank-$\numCmp$ estimate of $\secondPopSMoMTheo$,
				\begin{equation*}
					\secondPopSMoMTheoEst = \arg\min_{\mathbf{X}\in\mathbb{R}^{\numVecEl\times\numVecEl}|\mathrm{rank}(\mathbf{X})\leq\numCmp}\big|\big|\hat{\vecCov}_{\pvecSetOne}-\mathbf{X}\big|\big|_2.
				\end{equation*}
				\State Find $\firstPopSMoMEst{\mathcal{S}} = |\mathcal{S}|^{-1}$ $\sum_{\pvec\in\mathcal{S}}\pvec\Big(\noiseEigVecEst{\mathcal{S}}\big(\pvec-\vecExpEstSet{\mathcal{S}}\big)^2\Big)\old{, \mathcal{S} = \pvecSetOne, \pvecSetTwo}$
				\State Compute $\thirdPopSMoMEst = |\pvecSetTwo|^{-1}\Big(\sum_{\pvec\in\pvecSetTwo}\pvec\tensorprod\pvec\tensorprod\pvec\Big) - \sum_{\vecElIdx = 1}^{\numVecEl}\Big(\firstPopSMoMEst{\pvecSetTwo}\tensorprod \baseVec\tensorprod\baseVec+\baseVec\tensorprod\firstPopSMoMEst{\pvecSetTwo}\tensorprod\baseVec+\baseVec\tensorprod\baseVec\tensorprod\firstPopSMoMEst{\pvecSetTwo}\Big)$
				\State Find $\hat{\U}\in\mathbb{R}^{\numVecEl\times\numCmp}$, the left singular vectors of $\secondPopSMoMTheoEst$
				\State Compute $\hat{\W} = \hat{\U}\big(\hat{\U}^\top\secondPopSMoMTheoEst\hat{\U}\big)^{\dagger\frac{1}{2}}\in\mathbb{R}^{\numVecEl\times \numCmp}$ and $\hat{\B} = 	\hat{\U}\big(\hat{\U}^\top\secondPopSMoMTheoEst\hat{\U}\big)^{\frac{1}{2}}\in\mathbb{R}^{\numCmp\times\numVecEl}$
				\For{$u \in [U]$}
				\State Select $\boldsymbol{\eta}_u$ uniformly at random from the $\mathbb{R}^\numVecEl$ unit sphere % and define $\boldsymbol{\theta}_u = \hat{\W}\boldsymbol{\eta}_u\in\mathbb{R}^\numVecEl$
				\State Find $\hat{\W}^\top\!\thirdPopSMoMEstVec{\boldsymbol{\eta}_u}\hat{\W}\!\in\!\mathbb{R}^{\numCmp\!\times\!\numCmp}$ $\forall\,\cmpIdx, \cmpIdxAlt\!\in[\numCmp]$,
				\begin{equation*}
%					\Big[\hat{\W}^\top\thirdPopSMoMEstVec{\boldsymbol{\theta}_u}\hat{\W}\Big]_{\cmpIdx_1, \cmpIdx_2}=\sum_{\vecElIdx_1 = 1}^\numVecEl\sum_{\vecElIdx_2 = 1}^\numVecEl\sum_{\vecElIdx_3= 1}^\numVecEl[\hat{\W}]_{\vecElIdx_1, \cmpIdx_1}[\hat{\W}]_{\vecElIdx_2, \cmpIdx_2}[\boldsymbol{\theta}_u]_{\vecElIdx_3}
					\Big[\hat{\W}^\top\thirdPopSMoMEstVec{\boldsymbol{\eta}_u}\hat{\W}\Big]_{\cmpIdx, \cmpIdxAlt}\!=\!\sum_{h, i, j\in[\numVecEl] }\![\hat{\W}]_{h, \cmpIdx}[\hat{\W}]_{i, \cmpIdxAlt}[\boldsymbol{\eta}_u]_{j}\Big[\thirdPopSMoMEst\Big]_{h, i, j}
				\end{equation*}
				\State \Longunderstack[l]{Determine the (eigenvalue, eigenvector) pairs\\ $\Big(\hat{\lambda}^{(\cmpIdx)}_u, \hat{\mathbf{v}}^{(\cmpIdx)}_u\Big), \cmpIdx\in[\numCmp]$, of $\hat{\W}^\top\thirdPopSMoMEstVec{\boldsymbol{\eta}_u}\hat{\W}$}
				%\State Find $u^\ast = \arg_{u\in[U]}\max\Big(\min_{\cmpIdx_1, \cmpIdx_2\in[\numCmp]}\Big(\Big\{\big|\hat{\lambda}^{(\cmpIdx_1)}_u-\hat{\lambda}^{(\cmpIdx_2)}_u\big|:\cmpIdx_1\neq \cmpIdx_2\Big\}\cup\Big\{\big|\hat{\lambda}^{(\cmpIdx_1)}\big|\Big\}\Big)\Big)$
				\State Compute $\vecCmpExpEst = \Big(\boldsymbol{\eta}_{\old{t}\newMinor{u}}^\top\hat{\B}\hat{\mathbf{v}}^{(\cmpIdx)}_{\old{t}\newMinor{u}}\Big)^{-1}\hat{\lambda}^{(\cmpIdx)}_{\old{t}\newMinor{u}}\hat{\B}\hat{\mathbf{v}}^{(\cmpIdx)}_{\old{t}\newMinor{u}}$, $\cmpIdx \in[\numCmp]$
				\State Determine $\cmpWgtVecEst_u = \Big[\vecCmpExpEst[1],\dots, \vecCmpExpEst[\numCmp]\Big]^\dagger\vecExpEstSet{\pvecSetOne}$%, $\cmpIdx\in[\numCmp]$
				%\State Find $\cmpVarEst\!=\!\frac{1}{\cmpWgtEst}\baseVec[\cmpIdx]^\top\bigg(\Big[\vecCmpExpEst[1],\dots, \vecCmpExpEst[\numCmp]\Big]^\dagger\!\firstPopSMoMEst{\pvecSetOne}\!\bigg)$, $\cmpIdx \in\numCmp$
				\State Obtain $\betaMixAEst[\vecElIdx, u]=\frac{\vecCmpExpEst}{1-\vecCmpExpEst},\forall\,\vecElIdx\in[\numVecEl],\forall\,\cmpIdx\in[\numCmp]$
				\EndFor
			\end{algorithmic}

		\end{algorithm}			
		The best density fit is determined based on a difference measure $\disFct{\mixPPdfEst[u]}{\pHist}$ between model estimates $\mixPPdfEst[u]$ and the histogram of the observed data. We ran Alg.~\ref{alg:spa-lfdr-est} with some of the most popular measures for quantifying closeness based on probability densities and distributions. We found Alg.~\ref{alg:spa-lfdr-est} to be robust w.r.t. the selected difference measure in terms of \gls{fdr} control. We observed that using \gls{edf}-based distances such as the \gls{wsd} or \gls{ksd} distance resulted in slightly higher detection power for very small $\nulFrc$ than \gls{pdf}-based divergences, such as the \gls{kld} or the \gls{jsd} divergence. Thus, we stick to \gls{edf} distances.
%		
%		Model-overfitting, i.e., selecting always the largest candidate model order due to marginal improvements with every added model component, can be prevented by a couple of techniques. Particularly popular are the previously mentioned \gls{itc}, which account for increasing model complexity by a model-order-dependent penalty term. However, standard \gls{itc} like the \gls{bic} or the \gls{aic} rely on \gls{mle}s for the model parameters and are thus not applicable for \gls{mom}-based density fitting. This problem has been noted in the literature. The authors of \cite{Schroeder2017}, who deployed an iterative \gls{em}-like algorithm for two-parameter beta mixtures with moment parameter estimators in the M-step, proposed to threshold the \gls{ksd} between data and density fit.

		Our method inherently avoids overfitting. The model order is limited by the number of elements per subset, $\numCmp<\numVecEl$. Hence, increasing $\numVecEl$ adds flexibility and reduces the mismatch $\thirdPopSMoMErr$ between the observable and non-observable theoretical third-order moment dependent terms. Contrarily, increasing $\numVecEl$ decreases the validity of Assumption~\ref{ass:same-var}, i.e., equal variance among the multivariate mixture components. In addition, an increase in $\numVecEl$ reduces the accuracy of the model parameter estimates due to the decrease in sample size $\numTls = \lfloor\old{\numPx}\newMinor{\numSen}/\numVecEl\rfloor$ and increase in the number of parameters to be estimated. % \cite[Theorem~3]{Hsu2012}.

	\vspace{-4pt}
\section{Interpolation of local false discovery rates}
\vspace{-1pt}
\label{sec:lfdr-ipl}
	In this section, we discuss how to determine the decision statistics between the spatially sparse sensor locations. Like in Fig.~\ref{fig:ex-fd-lfdr}, sensors are typically positioned at a subset of locations $\{\crdSen\}_{\senIdx\in\numSen}, \numSen\leq\numPx$. These report local \pval s $\pSetSen = \{\pPx[1], \dots, \pPx[\numSen]\}$ to the \gls{fc}. The \gls{fc} computes sensor-level \gls{lfdr} estimates $\lfdrLocHat[\senIdx], \forall\,\senIdx\in[\numSen]$ using \gls{lfdr}-\gls{smom}. \old{Finally, }The \gls{lfdr}'s are interpolated \old{by \gls{rbf} interpolation with \gls{tps}} to estimate the \gls{lfdr}'s $\lfdrLocHat[\pxIdx]$ at locations $\{\crdPx\}$ between sensors, $\numSen<\pxIdx\leq\numPx$. \old{Finally, the region of interesting behavior is formed with Eq.~\eqref{eq:alt-reg-est}}\newMajor{The lfdr’s are unknown deterministic quantities. Thus, we deploy a deterministic interpolation method. \Gls{rbf} interpolation is an advanced mesh-free method to reconstruct an unknown deterministic function from observed data \cite{Schaback2007, Fasshauer2007}. The function value at a location of interest is calculated as a weighted sum of smooth basis functions whose value depends on the location's distance to the sampling locations. \Gls{rbf} is well suited to the problem at hand: The sensors can be located at arbitrary locations within the observation area. The interpolant is stable also for a large number of sensors \cite{Keller2019}. Finally, it produces a smooth interpolant, whose properties depend on the deployed radial basis function. The properties of \gls{tps} \cite{Duchon1977} fit particularly well to our central assumption of spatial smoothness, i.e., that the null and alternative regions are formed of locally continuous sub-regions with quick transitions in between. \gls{tps} finds values in between sampling points under a constraint on the energy of the	interpolant. Thus, it produces a very smooth interpolant with sharp transitions \cite{Eberly2018} in between sensor locations where the lfdr’s differ significantly. In addition, \gls{tps} \gls{rbf}s does not require ad-hoc tuning of additional parameters. This would be necessary, if other popular basis functions, such as the Gaussian or multiquadric were used.}
	
	The interpolated \gls{lfdr}'s are calculated as
	\begin{equation}
	\label{eq:lfdr-ipl}
		\lfdrLocHat[\pxIdx] = \sum_{\senIdx=1}^{\numSen}\alpha_\senIdx \varphi(\radius{\senIdx}{\pxIdx}), \qquad\pxIdx\in[\numPx],
	\end{equation}
	where $\radius{\senIdx}{\pxIdx}=||\crdSen-\crdPx||_2$ is the Euclidean distance between sensor and location, $\varphi(\radius{\senIdx}{\pxIdx}) = \radius{\senIdx}{\pxIdx}^2\ln(\radius{\senIdx}{\pxIdx})$ is the \gls{tps} \gls{rbf} and the weights $\alpha_\senIdx\in\mathbb{R}$ are determined based on the estimated sensor \gls{lfdr}'s and locations by solving \old{\cite[Chapter~7]{Fasshauer2007}}\newMinor{\cite[Chapter~8.3]{Fasshauer2007}}
	\begin{align*}
		&\begin{bmatrix}
			\varphi(\radius{1}{1}) & \cdots & \varphi(\radius{1}{\numSen})\\
			\vdots & \ddots & \vdots \\
			\varphi(\radius{\numSen}{1}) & \cdots & \varphi(\radius{\numSen}{1})
		\end{bmatrix}
		\begin{bmatrix}
			\alpha_1\\
			\vdots\\
			\alpha_\numSen
		\end{bmatrix}\\
		&\qquad\qquad\qquad+
		\old{
			\begin{bmatrix}
				\beta_1 \\
				\vdots \\
				\beta_\numSen
			\end{bmatrix}
		}
		\newMajor{
				\begin{bmatrix}
				1 & \xSen[1] & \ySen[1]\\
				\vdots & \vdots & \vdots \\
				1 & \xSen[\numSen] & \ySen[\numSen]
			\end{bmatrix}
			\begin{bmatrix}
				\beta_1 \\
				\beta_2 \\
				\beta_3
		\end{bmatrix}
		}
		=
		\begin{bmatrix}
			\lfdrLocHat[1]\\
			\vdots\\
			\lfdrLocHat[\numSen]
		\end{bmatrix},
	\end{align*}
	\old{such that $\sum_{\senIdx = 1}^\numSen\beta_\senIdx\alpha_\senIdx = 0 \,\forall\, \beta_\senIdx\in\mathbb{R}, \senIdx\in[\numSen]$.}\newMajor{with regularization parameters such that $\sum_{\senIdx=1}^\numSen\alpha_\senIdx = 0$, $\sum_{\senIdx=1}^\numSen\alpha_\senIdx\xSen = 0$, and $\sum_{\senIdx=1}^\numSen\alpha_\senIdx\ySen = 0$.} % $[\beta_1,\dots,\beta_\numSen][\alpha_1,\dots, \alpha_\numSen]^\top = 0$
	\newMajor{\section{Communication Cost and Power Consumption}
\label{sec:cc}
In practice, the $p$-values have to be quantized before the transmission over the wireless communication channel. Only few bits will suffice so that no significant performance loss is experienced. In this section, we quantitatively analyze the total network communication cost $\mathrm{CC}_N$ and communication-related power consumption $\mathrm{PC}_N$ for a \gls{wsn} of size $N$ w.r.t. the number of bits that each sensor is allowed to transmit. The $\mathrm{CC}_N$ and $\mathrm{PC}_N$ are expressed quantitatively as
\begin{align}
	\label{eq:cc}
	\mathrm{CC}_N &= \sum_{n = 1}^{N}\big(\mathrm{CF} + \mathrm{CB}(B)\big) \cdot \mathrm{NP}_n,\\
	\mathrm{PC}_N &= \sum_{n = 1}^N\big(\mathrm{NP}_n \cdot \mathrm{CT}(B)\big),
\end{align}

where $\mathrm{NT}_n$ is the number of packets that sensor $n\in[N]$ transmits, i.e., the number of transmissions per sensor. For simplicity, we assume that each transmission of sensor $n$ is equally costly. The total cost per transmission depends on $\mathrm{CF}$ which represents the payload-independent cost of the deployed transmission protocol and the cost $\mathrm{CB}(B)$ that depends on the number of payload bits $B$. $\mathrm{CT}(B)$ is the power consumption per transmitted packet with $B$ payload bits. For simplicity, we assume that all costs and the number of information-carrying bits are identical for all sensors. Transmitting data is the most power-intensive task for the sensors in our \gls{wsn}. Simple local computations to determine the soft decision statistics consume much less power.

$\mathrm{CC}_N$ and $\mathrm{PC}_N$ of our proposed method with $\mathrm{NP}_n = 1\, \forall\,n\in[N]$ is considerably different from a fully centralized approach where each local measurement is communicated to the \gls{fc}. For example, assume that each sensor records a total number of $256$ local measurements. If the $B$ is the same for the $p$-values and the raw measurements, a centralized approach consumes $256$ times more communication bandwidth and power than the proposed approach.

Consider the following practical example for the relevance of the number of payload bits. A simple and very important internet protocol (IP) is the user datagram protocol (UDP). An empty, i.e., no payload, UDP packet consumes $\mathrm{CF} = 28$ bytes. In addition, UDP transmits bytes instead of single bits, i.e., $\mathrm{CC}_N$ does not grow linear in the number of payload bits $B$, bit is constant for $B = 1$ to $B = 8$. Hence, if one sends $1$, $4$ or $8$ bits of payload data, the frame structure overhead is well above $90\%$ of the total communication cost and there is no difference if $1$, $4$ or $8$ bits of information is transmitted. 

The best way to reduce the communication cost and power consumption is to limit the total number of transmissions in the WSN. This could be achieved by imposing communication constraints and transmitting only informative data as is done in censoring \cite{Tay2007, Blum2007}. Our simulation results demonstrate that the proposed method also works if $p$-values above a certain threshold $\lambda\leq 1$ are censored, i.e., if a sensor $n\in[N]$ may only transmit its quantized $p$-value $p^{\text{q}}_n$ if $p^{\text{q}}_n>\lambda$. No censoring occurs if $\lambda = 1$. 
}
	\section{Simulation Results}
\label{sec:sim-res}
	\begin{figure}
		\centering
		\begin{subfigure}{.24\textwidth}
			\centering
			\includegraphics[scale=0.45]{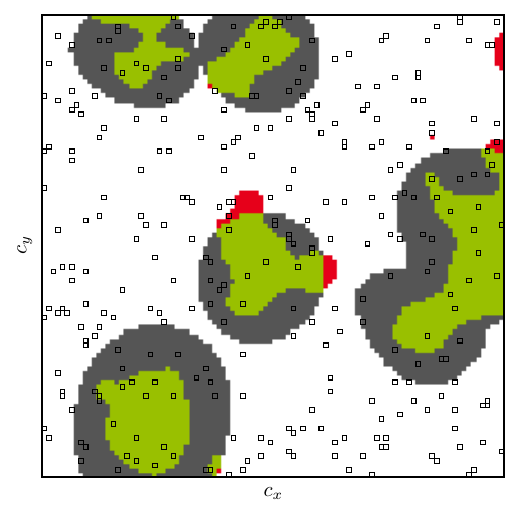}
			%\caption{\gls{fdr}}
			\caption{Detection pattern}
		\end{subfigure}
		\begin{subfigure}{.24\textwidth}
			\centering
			\includegraphics[scale=0.487]{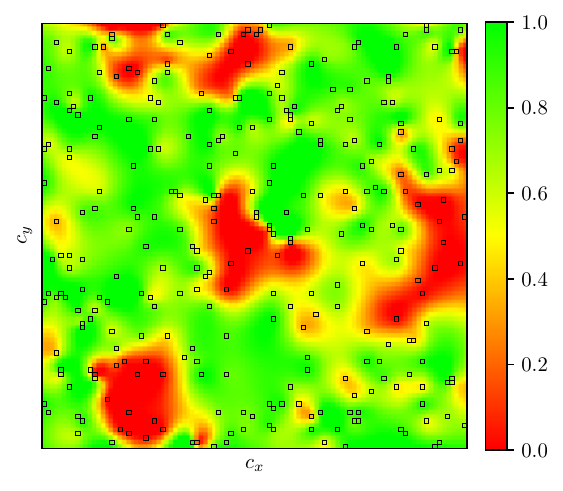}
			\caption{Interpolated \gls{lfdr}'s}
		\end{subfigure}
		\caption{Example for \old{Sc.~B}\newMinor{ScB}, Cnfg.~2. In a), the \gls{fdr} threshold is set to $\fdrThr=10\%$. Green, red and gray indicate true, false and missed discoveries. Boxes mark sensor locations. The interpolated \gls{lfdr}'s of b) are discussed in Sec.~\ref{sec:sim-res_spa-ipl}.}
		\label{fig:ex-fd-lfdr}
		\vspace{-15pt}
	\end{figure}
	We evaluate the performance of our proposed \newMinor{spatial inference} method on simulated radio frequency electromagnetic field data. The observed signals are simulated by the nonuniform sampling method from \cite{Cai2003}, which models the propagation of radio waves in a 2D spatial area with path loss and shadow fading \cite{Molisch2012}. The observations are subject to additive white Gaussian sensor noise. If the power of the received signal is below the noise floor, this sensor is classified as located in the true null region $\nulReg$. \newMinor{If no sensor is present at a grid point, this grid point is assigned to $\nulReg$ if the signal level of the field at this grid point is so low that it would be below the noise floor of a reference sensor placed at this location.} In our simulations, we alter the sensor noise power for different scenarios to obtain different sizes $|\altReg| = 1-\nulFrc$ of the region\newMinor{s of interesting, different or anomalous behavior}\old{ of interest}. %To obtain the ground true null and alternative regions $\HNul$ and $\HAlt$, we clip the signal below a certain The \gls{snr} is chosen such that the true desired alternative region To obtain the ground true null and alternative regions $\nulReg$ and $\altReg$, we clip the signal below a scenario-dependent threshold

	For simplicity, the sensor \pval s are computed from signal energies. This facilitates the analysis of the results, since its known distributions under null and alternative hypotheses allow for benchmarking the \gls{lfdr} estimation techniques against the true \gls{lfdr}'s. Also, it offers a simple way to simulate a heterogeneous sensor network by alternating the number of measurements $\numSam$ at different nodes. In practice, any type of sufficient test statistic could be deployed at the sensor level.

	We evaluated our method for a variety of simulated propagating radio wave fields in different environments and sensor network configurations. We discuss \newMinor{the} results for three scenarios typical for radio frequency sensing. %The spatial area of interest is a square of $100\times 100$ pixels, i.e., the total number of pixels is $\numPx=10\,000$.
	The monitored area is discretized by a grid of $100 \times 100$ spatial elements for all scenarios and the results are averaged over $200$ independent Monte Carlo runs, unless stated otherwise. The sources are placed at random locations, i.e., the fraction $\nulFrc$ of grid points in the null region varies slightly from run to run.
	%\textbf{\old{Sc.~A}\newMinor{ScA}}: A single source in an urban environment. The average relative size of $\altReg$ is $70\%$. This scenario is moderately challenging, since many of the pixels in the alternative region $\altReg$ are exposed to a strong signal. % are exposed for the \zval-based competitors, since many of the observed \zval s stem from the tail of the alternative \gls{pdf} $\altZPdf$ which facilitates its estimation.  %A large number of pixels is under the influence of a single source, i.e., the received signal power is large at many locations in the alternative region. Hence, many observations stem from the tails of the alternative \zval~distribution $\altZPdf$, which facilitates its estimation.

	\textbf{\old{Sc.~A}\newMinor{ScA}}: Five sources located in a suburban environment covering on average $1-\nulFrc \approx 55\%$ of grid points.
	
	\textbf{\old{Sc.~B}\newMinor{ScB}}: Eight sources located in a suburban environment covering on average $1-\nulFrc \approx 34\%$ of grid points. %More sources cover less pixels than in \old{Sc.~A}\newMinor{ScA}, i.e., the observation area in \old{Sc.~B}\newMinor{ScB} is larger and the spectral resolution of the grid is lower.
	
	 % This scenario is more challenging for the competitors than \old{Sc.~A}\newMinor{ScA}, since the pixels in the alternative region $\altReg$ are exposed to a weaker signal. The corresponding local summary statistics are hence less distinct from what would be expected in $\nulReg$. %.Many of the observed \zval s stem from $\altZPdf$, but the tails are less pronounced. %A large number of pixels is under the influence of the sources, but due to the propagation model,  since a significant amount of \pval s is generated under $\HAlt$. In contrast to Thus, the alternative a large the joint \gls{pdf} i.e., the 
	\textbf{\old{Sc.~C}\newMinor{ScC}}: Two sources in a suburban environment covering on average $1-\nulFrc\approx 10\%$ of grid points.
	
	{\old{Sc.~C}\newMinor{ScC}} is particularly challenging, regardless of the deployed method. Only a small proportion of test statistics provide information on the shape of the alternative component $\altPPdf$.%In addition, a small region of interesting/different/anomalous behavior $\altReg$ makes violations of the nominal \gls{fdr} level more likely, if individual \gls{lfdr}'s of locations in $\nulReg$ are underestimated. 
	
	We investigate \old{three}\newMajor{four} different sensor network configurations.
	
	\textbf{Cnfg.~1}: $\numSen = 10\,000$ identical sensors with $\numSam=256$. Decisions are only made at sensor locations, i.e., no interpolation of decision statistics. This is the classic \gls{mht} problem.
	%\textbf{Cnfg.~1}: Fully loaded. Each grid point hosts a sensor. This configuration of the sensor network is unrealistic in practice, but corresponds to the classic \gls{mht} setting where each hypothesis test is based on a local summary statistic.

	\textbf{Cnfg.~2}: $\numSen=300$ identical sensors with $\numSam = 256$, homogeneously distributed across the monitored area. The tests at the sensors are based on local summary statistics, but in between sensors, the \gls{lfdr}'s are interpolated.
	
	\textbf{Cnfg.~3}: $\numSen_1=170$ sensors with $\numSam_1= 256$, $\numSen=80$ sensors with $\numSam_2= 512$ and $\numSen_3=50$ sensors $\numSam_3 = 1024$, all types homogeneously distributed across the monitored area. Decisions in \old{areas }between sensors base upon interpolated \gls{lfdr}'s.
	
	\newMajor{\textbf{Cnfg.~4}: $\numSen = 2\,000$ identical sensors with $\numSam=256$, homogeneously distributed across the monitored area. The decisions at the sensors are based on local summary statistics, but in between sensors, the \gls{lfdr}'s are interpolated.}
	
	\textbf{Competitors}: \newMajor{To the best of our knowledge, there exist no other methods to identify the regions of anomaly with guarantees on error probabilities or the \gls{fdr} when sensors are located at a sparse but arbitrary subset of grid points. In \cite{Ermis2010}, the authors consider the classic \gls{mht} problem for \glspl{wsn}, hence, we can compare our proposed lfdr-based inference approach to their \gls{dbh} procedure \cite{Ermis2010} for a \gls{wsn} in Cnfg.~1. In addition, we compare the proposed lfdr-based inference approach when different lfdr-estimators are used.} We compare our \newMinor{proposed lfdr-sMoM}\old{method} to a variety of \old{widely used}\newMinor{popular} \gls{lfdr} estimators. We deploy the classic \gls{bum} model-based \gls{mle} approach from \cite{Pounds2003}. We also show results obtained with \gls{lm} as proposed by Efron \cite{Efron2010}. \Gls{lm} which approximates the \zval~\gls{pdf} by an exponential family model, fitted to the data using Poisson regression. In addition, \gls{pr} \cite{Newton2002, Martin2011} is applied, which computes the alternative \zval~\gls{pdf} by estimating the density of the mean shift of \zval s from locations in $\altReg$. Predictive recursion has recently \cite{Scott2015, Martin2018, Tansey2018} gained considerable attention in \gls{lfdr} estimation, due to its high accuracy and comparably low computation time. Our implementation of \gls{pr} follows \cite[Appendix~A]{Scott2015}. Finally, we include a standard \gls{gmm}. We do not consider methods whose computational complexity prevents scaling to large-scale sensor networks, as kernel density \gls{lfdr} estimators \cite{Robin2007}.
%	\vspace{-5pt}
	\subsection{The classic multiple hypothesis testing problem}
		\begin{figure}
			\centering
			% For single column!
			\begin{subfigure}{.49\textwidth}
				\begin{subfigure}{.48\textwidth}
					\centering
					\includegraphics[scale=0.42]{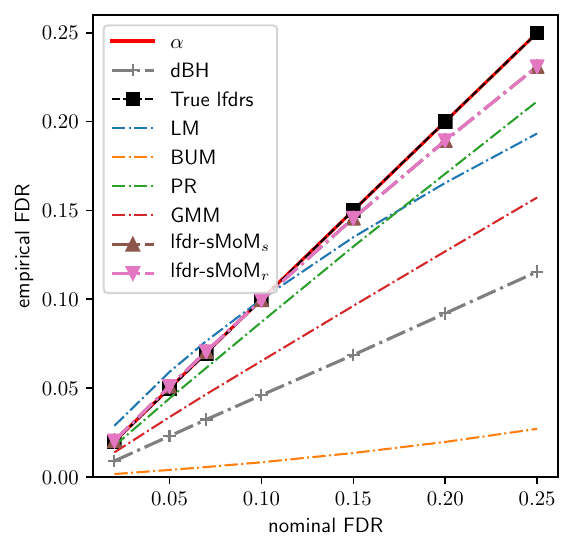}
			% For double column!
%			\begin{subfigure}{.5\textwidth}
%				\begin{subfigure}{.48\textwidth}
%					\centering
%					\includegraphics[scale=0.45]{figures/sc21_fdr.pdf}
				\end{subfigure}
				\begin{subfigure}{.48\textwidth}
					\centering
					% For single column!
					\includegraphics[scale=0.42]{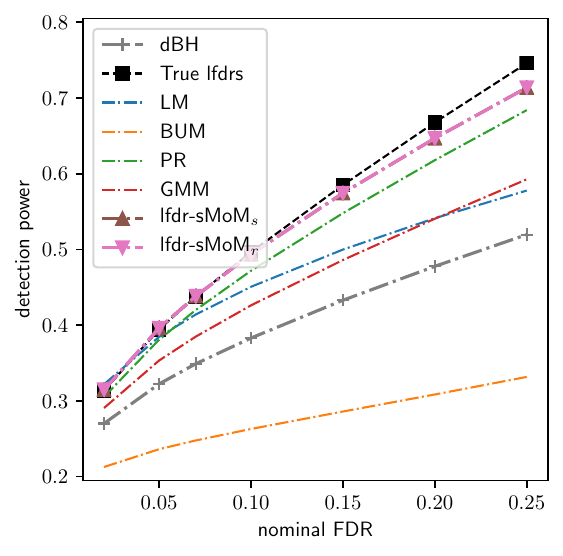}
					% For double column!
%					\includegraphics[scale=0.45]{figures/sc21_pow.pdf}
				\end{subfigure}
			\caption{\old{Sc.~A}\newMinor{ScA}}
			\label{fig:sca-cnfg1}
			\end{subfigure}
			% For double column!
%			\begin{subfigure}{.5\textwidth}
%				\begin{subfigure}{.48\textwidth}
%					\centering
%					\includegraphics[scale=0.45]{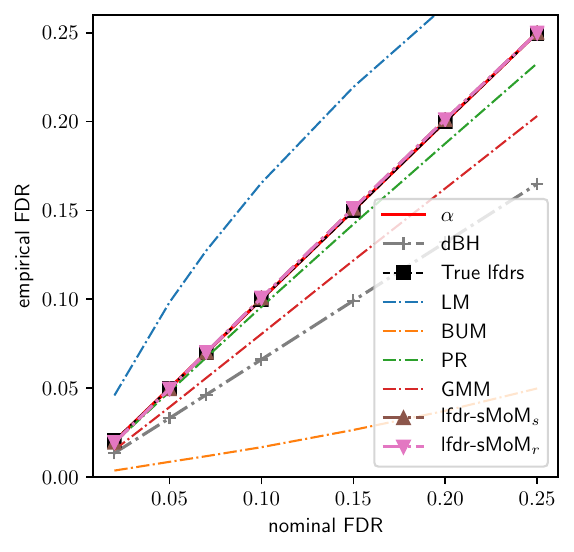}
			% For single column!
			\begin{subfigure}{.49\textwidth}
				\begin{subfigure}{.48\textwidth}
					\centering
					\includegraphics[scale=0.42]{figures/sc19_fdr.pdf}
				\end{subfigure}
				\begin{subfigure}{.48\textwidth}
					\centering
					% For single column!
					\includegraphics[scale=0.42]{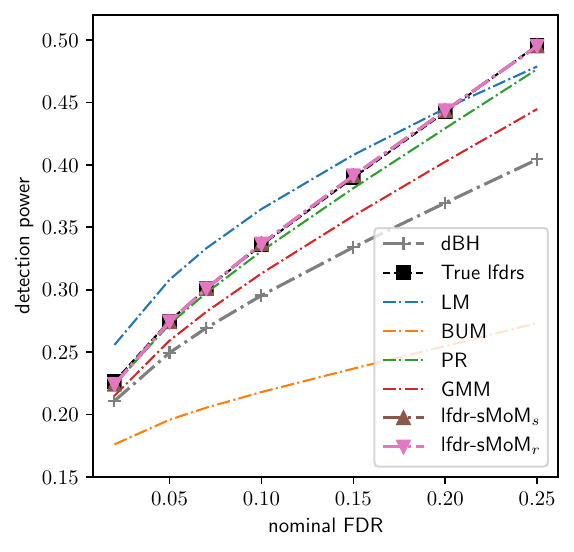}
					% For double column!
%					\includegraphics[scale=0.45]{figures/sc19_pow.pdf}
				\end{subfigure}
				\caption{\old{Sc.~B}\newMinor{ScB}}	
				\label{fig:scb-cnfg1}		
			\end{subfigure}
			% For double column!
%			\begin{subfigure}{.5\textwidth}
%				\begin{subfigure}{.48\textwidth}
%					\centering
%					\includegraphics[scale=0.45]{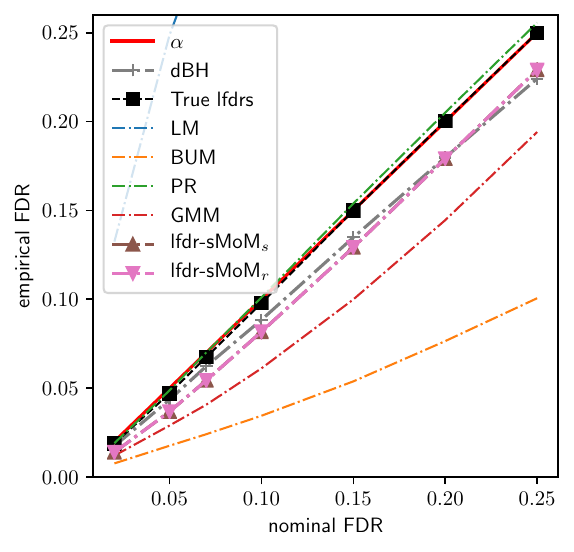}
			% For single column!
			\begin{subfigure}{.49\textwidth}
				\begin{subfigure}{.48\textwidth}
					\centering
					\includegraphics[scale=0.42]{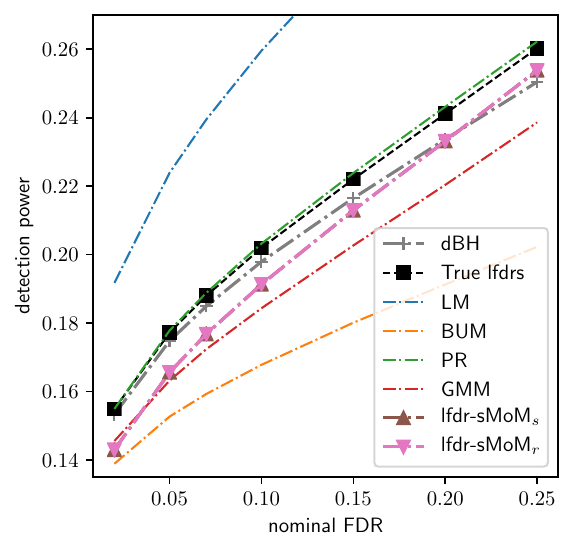}
				\end{subfigure}
				\begin{subfigure}{.48\textwidth}
					\centering
					% For double column!
%					\includegraphics[scale=0.45]{figures/sc18_pow.pdf}
					% For single column
					\includegraphics[scale=0.42]{figures/sc18_pow.pdf}
					%\caption{Detection Power}
				\end{subfigure}
				\caption{\old{Sc.~C}\newMinor{ScC}}	
				\label{fig:scc-cnfg1}		
			\end{subfigure}
			\caption{\Gls{fdr} and detection power for all scenarios and a sensor network composed of $N=10\,000$ nodes. \old{Inference is only performed}\newMinor{Decisions are only made} at the sensor locations (Cnfg.~1). The \gls{fdr} is strictly controlled for all scenarios \old{by all methods}\newMinor{with all estimators} except \gls{lm}. Among \old{the \gls{fdr} controlling methods,}\newMinor{those for which the \gls{fdr} is controlled, the proposed lfdr-sMoM}\old{ ours provides } \newMinor{yields} the highest power in \old{Sc.~A}\newMinor{ScA, ScB}. In \old{Sc.~C}\newMinor{ScC}, where $\altReg$ contains only a small fraction of all nodes, \old{\gls{pr}'s}\newMinor{the} power \newMinor{with PR} is slightly higher.}
			\label{fig:cngf1}
			\vspace{-13pt}
		\end{figure}
		The results for Cnfg.~1 are shown in Fig.~\ref{fig:cngf1}. \newMajor{\gls{dbh} works well in ScC where $\nulFrc\approx10\%$ is small. When $\nulFrc$ gets larger, \gls{dbh} lacks detection power. The results for the different lfdr estimators with the proposed lfdr-based spatial inference method are ambiguous. We benchmark by the detection results obtained when using the (in practice unknown) true \gls{lfdr}'s. A higher detection power than with the true \gls{lfdr}'s can only be achieved if the respective lfdr estimator leads to a violation of the nominal \gls{fdr} level. }\Gls{lm} faces stability issues due to the heavy one-sided tail of $\mixZPdf$ when the relative size of the true null region $\nulFrc = |\nulReg|/\numSen$ increases. \old{All}\newMinor{With all} other \newMinor{lfdr estimators, the considered nominal \gls{fdr} levels are met. }\old{applied methods meet the nominal \gls{fdr} level for all scenarios. }\old{We benchmark by the detection results obtained when using the (in practice unknown) true \gls{lfdr}'s. A higher detection than with the true \gls{lfdr}'s can only be achieved, if the respective method violates the nominal \gls{fdr} level. }For \old{our proposed method}\newMinor{the proposed lfdr estimator}, we show results for two variants. For \gls{lfdr}-\gls{smom}${}_s$, the \pval~vectors are formed by subdividing the grid into square tiles of spatially close grid points. For \gls{lfdr}-\gls{smom}${}_r$, the $\{\pPx[\senIdx]\}_{\senIdx\in[\numSen]}$ are randomly partitioned into subsets.

		The results underline that our proposed multivariate \pval~vector probability density model is very flexible. Even if the \pval~subsets are formed at random, the method finds a parametrization such that the individual components have equal variance. This confirms that Assumption~\ref{ass:same-var} can be relaxed in practice.
		For the remainder of this section, we stick to the conceptually simpler random partitioning and label results obtained with randomly formed \pval~vectors by \textit{\gls{lfdr}-\gls{smom}}.
		
		\newMinor{For the}\old{The} traditional \gls{bum} \newMinor{estimator}\old{method controls the \gls{fdr}, but exhibits a low detection power}\newMinor{, the lfdr is controlled but the detection power is low} in all scenarios, as \old{does}\newMinor{for} \gls{gmm} for larger interesting regions. \gls{pr} performs best by a small margin when the relative size of the alternative region is very small (\old{Sc.~C}\newMinor{ScC}), but \old{our proposed method achieves}\newMinor{with our proposed estimator,} the largest detection power \newMinor{is achieved in ScB and ScA}\old{Sc.~A and Sc.~B}. Thus, \gls{lfdr}-\gls{smom} provides the best or very close to the best results \old{in }\newMinor{for }all \old{of the }considered scenarios.
		
		\begin{figure}
			\centering
			% For double column!
			\begin{subfigure}{.28\textwidth}
			% For single column!				
%			\begin{subfigure}{.48\textwidth}
				\centering
				\includegraphics[scale=0.45]{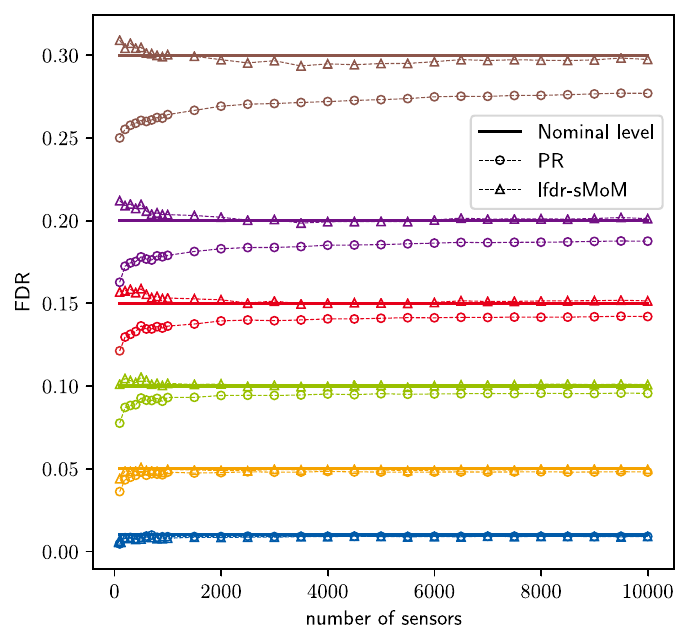}
				\caption{\gls{fdr}}
			\end{subfigure}
			% For double column!
			\begin{subfigure}{.18\textwidth}
			% For single column!			
%			\begin{subfigure}{.36\textwidth}	
				\centering
				\begin{subfigure}{\textwidth}
					\centering
					\includegraphics[scale=0.45]{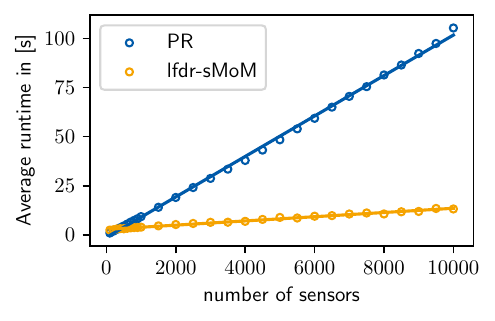}
				\end{subfigure}
				\begin{subfigure}{\textwidth}
					\centering
					\includegraphics[scale=0.45]{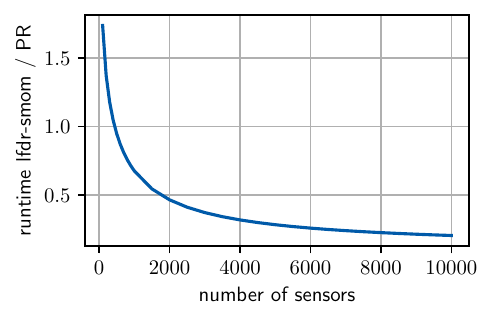}
				\end{subfigure}
				\caption{Execution time}
				\label{fig:rt-b}
			\end{subfigure}	
			\caption{\Gls{fdr}s and run times with increasing sensor network size for \gls{lfdr}-\gls{smom} vs. the best competitor (\gls{pr}) \newMinor{for ScB}.}
			\label{fig:rt}
			\vspace{-10pt}
		\end{figure}
	\vspace{-7pt}
	\subsection{Computational demands as the network size increases}
		In Fig.~\ref{fig:rt}, we compare the best competitor, \gls{pr}, to our proposed method for different sizes of the sensor network $N$ and $1\,000$ \gls{mc} runs. We obtained almost identical results for all scenarios and hence show only those for \old{Sc.~B}\newMinor{ScB}. Again, inference is only performed at the sensor locations. Both, \newMinor{the proposed lfdr-sMoM} and \gls{pr}, are subject to transient effects when the number of nodes is small. This was to be expected, as fitting a \old{heavy-tailed}\newMinor{complicated} distribution \newMinor{model} with such a low number of data points is challenging. Yet, the obtained \gls{fdr}s are close to the nominal level also for the smallest considered network sizes $\numSen$. As $\numSen$ grows, the nominal and empirical \gls{fdr} for \gls{lfdr}-\gls{smom} coincide almost perfectly. Hence, \gls{lfdr}-\gls{smom} is more efficient in exploiting all permitted false positives than \gls{pr}, which results in a larger detection power (plot not shown \newMinor{due to space limitations}). The comparison of the average execution time per Monte Carlo run in Fig.~\ref{fig:rt-b} illustrates that \old{our method}\newMinor{the proposed estimator} is considerably faster than \gls{pr} as $N$ increases. The upper plot shows that the execution time for both methods grows approximately linearly. For a small number of sensors, the runtimes of \gls{lfdr}-\gls{smom} and \gls{pr} are almost identical, but \old{our method}\newMinor{the former} scales significantly better for larger sensor networks. At $N=10\,000$, \gls{lfdr}-\gls{smom} is more than five times faster. Both methods were run in Python 3.8.3 on an AMD Ryzen~9 3900X 12-Core CPU.
		\newMinor{We conclude that while}\old{While} \gls{lfdr}-\gls{smom} often provides the highest detection power, it also outperforms its strongest competitor significantly in terms of computation time for growing $\numSen$. \old{\Gls{lfdr}-\gls{smom} is the best fit for large-scale sensor networks.}

	\subsection{Spatial interpolation of \gls{lfdr}'s}
		\label{sec:sim-res_spa-ipl}
		It is of high interest to find the boundaries of the regions associated with interesting, anomalous or different behavior. \newMinor{Our proposed lfdr-based spatial inference approach determines the values in between the sensor locations by spatial interpolation}\old{The \gls{lfdr} values between the sensor locations can be determined through spatial interpolation}. This allows for segmenting the observation area into $\nulReg$ and $\altReg$ \newMinor{with \gls{fdr} control}. %Finally, we examine Cnfg.~2 and Cnfg.~3, which involve the interpolation of \gls{lfdr}'s at locations between sensor to divide the entire observation area into null and alternative regions.
		
		An exemplary map of interpolated \gls{lfdr}'s with $\numSen = 300$ identical sensors (Confg.~2). is shown in Fig.~\ref{fig:ex-fd-lfdr}, along the detection pattern for the most commonly used \cite{Efron2010} nominal \gls{fdr} level $\fdrThr = 0.1$. In Tab.~\ref{tab:cnfg-2}, we present the numerical \gls{fdr}s and detection powers  obtained by interpolating the estimated sensor \gls{lfdr}'s. Note, that \newMinor{the interpolation step of the proposed inference approach is independent of the selected lfdr estimator, see also Fig.~\ref{fig:flowchart}.}\old{interpolating the \gls{lfdr}'s is a novel approach and could also be combined with any existing \gls{lfdr} estimator, such as \gls{pr}. }%Again, we compare our proposed estimator \gls{lfdr}-\gls{smom} to \gls{pr}.
		
		The \gls{fdr} is controlled at the nominal level, except for the very small $\fdrThr=0.01$. This nominal level is so small, that even the slightest interpolation error can lead to its violation. \newMinor{For higher, more realistic nominal \gls{fdr} levels, the FDR is strictly controlled}. In this example, \old{\gls{lfdr}'s are computed}\newMinor{sensors are located} at only $3\%$ of all grid points. \newMinor{The \gls{lfdr}'s are} interpolated at the remaining $97\%$. These results strongly indicate that the interpolation of \gls{lfdr}'s is a powerful tool to decide between $\HNul$ and $\HAlt$ at locations where no sensor is present and local summary statistics are not available.  
		
		In Cnfg.~3, we considered a heterogeneous sensor network composed of multiple types of sensors with different individual detection capabilities due to varying sensor noise levels. The results in Tab.~\ref{tab:cnfg-3} verify the applicability of our method to heterogeneous sensor networks, i.e., to accommodate various types of sensors into the inference process.
		\begin{table}
			\setlength{\tabcolsep}{5.5pt}
			\centering
			\caption{\old{Sc.~B}\newMinor{ScB} in Cnfg.~2. The columns indicate different nominal \gls{fdr} levels. The similarity in the empirical values of \gls{fdr} and detection power obtained when using \gls{lfdr}-\gls{smom} or in practice the unavailable true \gls{lfdr}'s as base for the interpolation underline the effectiveness of our method.}
			\begin{tabular}{*{9}{c}}
				& $\fdrThr$& $.01$ & $.05$	& $.1$ 	& $.15$ & $.2$ 	& $.25$ & $.3$\\\toprule
				\multirow{2}{*}{\gls{fdr}}		& \textit{True} & $\mathit{.020}$ & $\mathit{.030}$ & $\mathit{.05}$ & $\mathit{.08}$ & $\mathit{.12}$ 	& $\mathit{.17}$ & $\mathit{.22}$\\
				& \gls{lfdr}-\gls{smom} & $.029$ & $.043$	& $.07$ & $.10$ & $.15$ & $.19$ & $.24$	\\\midrule
				%& \gls{pr} & $.018$ & $.028$ & $.05$ & $.07$ & $.10$ 	& $.14$ & $.18$\\

				\multirow{2}{*}{Power}			& \textit{True} & $\mathit{.10}$ & $\mathit{.18}$ & $\mathit{.25}$ & $\mathit{.33}$ & $\mathit{.40}$& $\mathit{.47}$ & $\mathit{.55}$\\
				& \gls{lfdr}-\gls{smom} & $.11$ & $.19$	& $.26$	& $.34$ & $.41$	& $.48$ & $.55$	\\\bottomrule
			%	& \gls{pr} & $.10$ & $.17$	& $.24$	& $.30$ & $.37$	& $.43$ & $.50$\\

			\end{tabular}
			\label{tab:cnfg-2}
			\vspace{-7pt}
		\end{table}
		\begin{table}
			\centering
			\setlength{\tabcolsep}{5.5pt}
			\caption{\old{Sc.~B}\newMinor{ScB} in Cnfg.~3. The columns indicate different nominal \gls{fdr} levels. Also for a heterogeneous sensor network, the \gls{fdr} is controlled except for the very small $\fdrThr=.01$.}
			\begin{tabular}{*{9}{c}}
				& $\fdrThr$ & $.01$ & $.05$	& $.1$ 	& $.15$ & $.2$ 	& $.25$ & $.3$\\\toprule
				\multirow{1}{*}{\gls{fdr}}		& \gls{lfdr}-\gls{smom} & $.023$ & $.039$	& $.07$ & $.11$ & $.15$ & $.20$ & $.25$\\\midrule
				%& \gls{pr} & $.015$ & $.026$ & $.05$ & $.07$ & $.11$& $.15$ & $.19$\\
				
				\multirow{1}{*}{Power}			& \gls{lfdr}-\gls{smom} & $.15$ & $.24$	& $.33$	& $.42$ & $.49$	& $.56$ & $.63$\\\bottomrule
%				& \gls{pr} & $.14$ & $.22$	& $.30$	& $.38$ & $.45$	& $.52$ & $.59$\\
			\end{tabular}
			\label{tab:cnfg-3}
			\vspace{-10pt}
		\end{table}
	\newMajor{
		\subsection{Quantized $p$-values}
		\label{sec:sim-res_quan}
		In practice, the $p$-values have to be quantized prior to the transmission over the wireless communication channel. In this section, we demonstrate that the performance of our proposed spatial inference approach with lfdr-sMoM is close to the optimum when $p$-values are quantized with few bits. Since little support for the null hypothesis is indicated by small $p$-values, quantizers that provide	a higher resolution for smaller $p$-values and lower resolution for larger $p$-values are expected to be more efficient
		than for example uniform quantizers. For the purpose of illustration, we use the following ad-hoc $p$-value quantizer. Divide the $p$-value domain $[0, 1]$ with $B$ bits into $2^B$ intervals of width
		\begin{equation*}
			w_i = \frac{i}{\sum_{j=1}^{2^{B}} j}, \qquad i = 1, \dots, 2^{B}.
		\end{equation*}
		The left and right edges of the $i$-th quantization interval are
		\begin{equation*}
			\mathrm{le}_i	=\frac{\sum_{j=1}^{i-1} j}{\sum_{j=1}^{2^{B}} j},\quad
			\mathrm{re}_i	=\frac{\sum_{j=1}^i j}{\sum_{j=1}^{2^{B}} j},
		\end{equation*} respectively. 
		Fig.~\ref{fig:quan} shows the empirical FDR and detection power in Cnfg.~1, i.e., with a sensor located at each of the $10\,000$ grid points. We use $B \in[3, 5, 8]$ bits. Only the results with the true (unknown) lfdr's, the proposed lfdr estimator lfdr-sMoM and PR are shown.}
		
		\newMajor{The results underline that our proposed inference method works well with quantized $p$-values with only a negligible performance loss compared to unquantized case. We first analyze the performance with the true lfdr's, i.e., independent of the influence of an lfdr estimator. As $B$ increases, the detection power rises. Saturation is reached quickly, $B = 5$ bits provide nearly identical performance as in Fig.~\ref{fig:cngf1} with unquantized $p$-values. Only for the very challenging ScC and nominal FDR levels $\alpha < .15$, using more than $B = 5$ bits yields a noticeable increase in detection power. With the proposed lfdr-sMoM, the results are very similar: As few as $B = 5$ bits are enough for close to nominal performance at the commonly used nominal FDR level $\alpha= .1$ in ScA and ScB. In the very challenging ScC, a few bits are needed, but $B=8$ yields very good results. Again, a very small $B$ may lead to reduced detection power, but FDR control is strictly maintained with lfdr-sMoM. This is significantly different with PR. With PR, the nominal FDR level $\alpha$ is violated if $B$ is too small. Even for $B=8$, the nominal FDR level is violated in ScB and ScC. We obtained similar results for other $N$ and interpolated lfdr's.
		While these results confirm that our proposed method is applicable if the sensors report $p$-values that are quantized using few bits, these results also make a strong case in favor of the proposed lfdr estimator lfdr-sMoM for spatial inference with \glspl{wsn}.}
		\begin{figure}
			\centering
			% For double column
%			\begin{subfigure}{.5\textwidth}
%				\begin{subfigure}{.48\textwidth}
%					\centering
%					\includegraphics[scale=0.45]{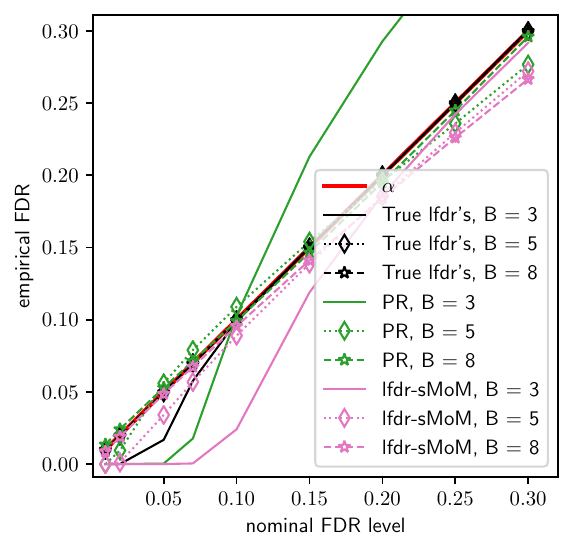}
			% For single column
			\begin{subfigure}{.49\textwidth}
				\begin{subfigure}{.48\textwidth}
					\includegraphics[scale=0.42]{figures/sc21_quan_fdr.pdf}
				\end{subfigure}
				\begin{subfigure}{.48\textwidth}
					\centering
					% For doublee column
%					\includegraphics[scale=0.45]{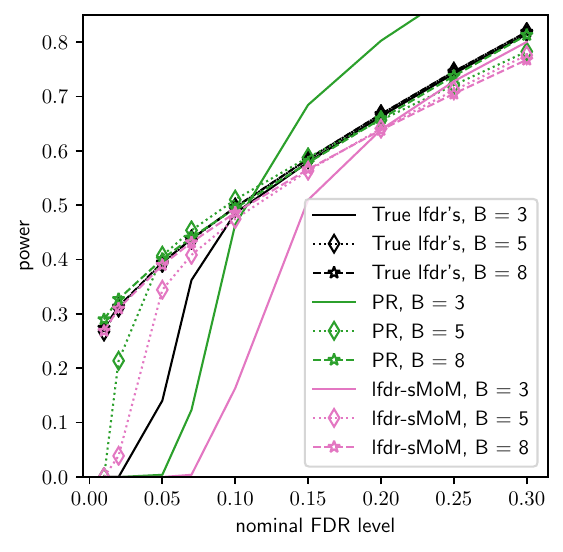}
					% For single column
					\includegraphics[scale=0.42]{figures/sc21_quan_pow.pdf}
				\end{subfigure}
				\caption{ScA}
				\label{fig:sca-cnfg4}
			\end{subfigure}
			% For double column
%			\begin{subfigure}{.5\textwidth}
%				\begin{subfigure}{.48\textwidth}
%					\centering
%					\includegraphics[scale=0.45]{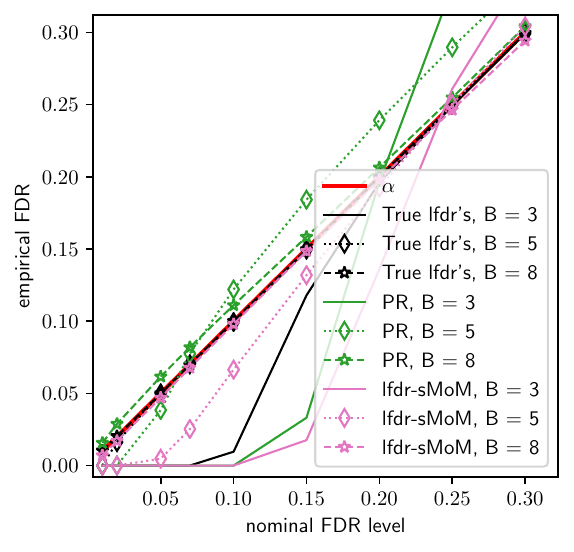}
			% For single column
			\begin{subfigure}{.49\textwidth}
				\begin{subfigure}{.48\textwidth}
					\centering
					\includegraphics[scale=0.42]{figures/sc19_quan_fdr.pdf}
				\end{subfigure}
				\begin{subfigure}{.48\textwidth}
					\centering
					% For double column
%					\includegraphics[scale=0.45]{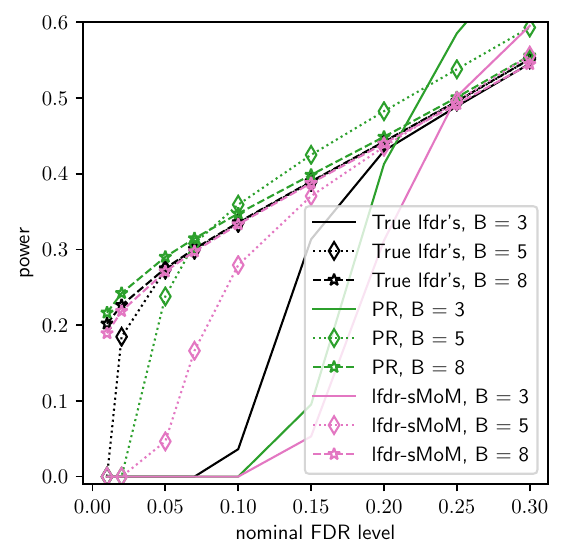}
					% For single column
					\includegraphics[scale=0.42]{figures/sc19_quan_pow.pdf}
				\end{subfigure}
				\caption{ScB}	
				\label{fig:scb-cnfg4}		
			\end{subfigure}
			% For double column
%			\begin{subfigure}{.5\textwidth}
%				\begin{subfigure}{.48\textwidth}
%					\centering
%					\includegraphics[scale=0.45]{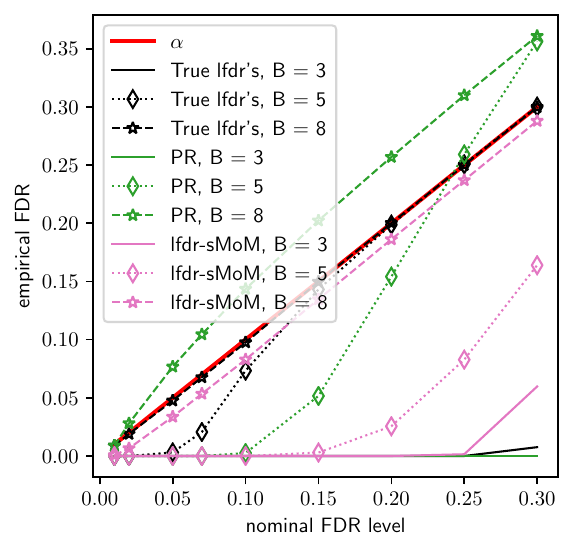}
			% For single column
			\begin{subfigure}{.49\textwidth}
				\begin{subfigure}{.48\textwidth}
					\centering
					\includegraphics[scale=0.42]{figures/sc18_quan_fdr.pdf}
				\end{subfigure}
				\begin{subfigure}{.48\textwidth}
					\centering
					% For double column
%					\includegraphics[scale=0.45]{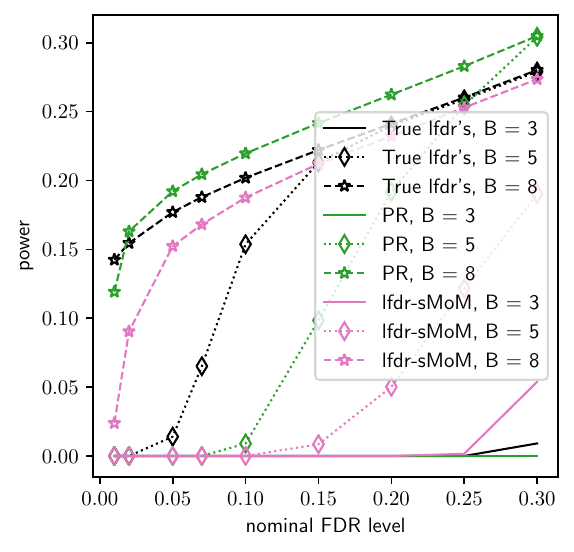}
					% For single column
					\includegraphics[scale=0.42]{figures/sc18_quan_pow.pdf}
				\end{subfigure}
				\caption{ScC}	
				\label{fig:scc-cnfg4}		
			\end{subfigure}
			\caption{\newMajor{\Gls{fdr} and detection power for all scenarios and a sensor network composed of $N=10\,000$ nodes and quantized $p$-values. Decisions are only made at the sensor locations (Cnfg.~1). lfdr-based spatial inference works also with $p$-values that are quantized using few bits. The FDR is strictly controlled with proposed lfdr estimator lfdr-sMoM.}}
			\label{fig:quan}
			\vspace{-13pt}
		\end{figure}
	
		\newMajor{We conclude the simulation results with a brief outlook on the effect that censoring \cite{Tay2007, Blum2007} has on the proposed approach. As discussed in Sec.~\ref{sec:cc}, allowing only sensors with quantized $p$-values $\leq \lambda$ to transmit can reduce the communication cost and power consumption considerably. In Fig.~\ref{fig:cen}, the empirical FDR and detection power for ScA and Cnfg.~4, i.e., $N=2\,000$ sensors are shown. The same quantizer as in the previous paragraph is used with $B = 5$ bits. We compare the lfdr-based spatial inference approach with true lfdr's and lfdr's estimated by our proposed lfdr-sMoM. If $\lambda = 1$, no censoring occurs. There is no performance difference for the considered values of $\lambda$ with the true lfdr's. However, this may be different for even smaller $\lambda$ or other scenarios, where $\lambda$ would censor $p$-values that would lead to a discovery at the \gls{fc}. The performance with lfdr-sMoM is close to optimum at standard nominal FDR levels for all $\lambda$. This is remarkable, as only about $1/3$ of all sensors transmit their $p$-values to the \gls{fc} for $\lambda=.12$. This offers great savings in communication cost and power consumption. The analytical derivation of the censoring regions remains a future research topic.}
		\begin{figure}
			\centering
			% For double column!
			\begin{subfigure}{.24\textwidth}
			% For single column!
%			\begin{subfigure}{.3\textwidth}
				\centering
				\includegraphics[scale=0.45]{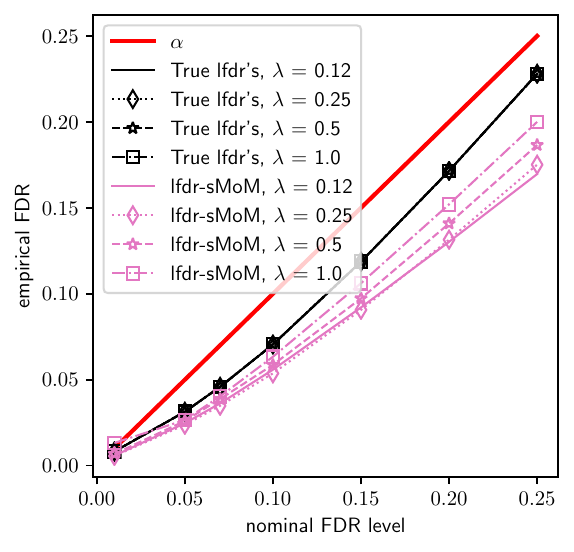}
				%\caption{\gls{fdr}}
			\end{subfigure}
			% For double column!
			\begin{subfigure}{.24\textwidth}
			% For single column!
%			\begin{subfigure}{.3\textwidth}
				\centering
				\includegraphics[scale=0.45]{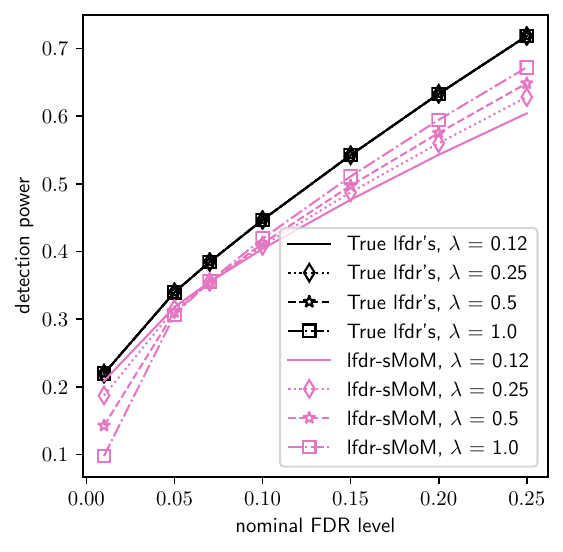}
			\end{subfigure}
			\caption{\newMajor{FDR and detection power for ScA, Cnfg.~4 and different censoring thresholds $\lambda$. Censoring has little impact on the detection power while offering great potential to saving on communication cost and power consumption.}}
			\label{fig:cen}
			\vspace{-13pt}
		\end{figure}
%	\subsection{Accuracy and computation time with increasing network size}

	\vspace{-5pt}
\section{The inclusion of domain-specific knowledge}
	While we kept this work general to maintain its applicability to a wide area of practical problems, extending the procedure to explicitly account for the particular nature of the observed physical phenomenon, such as electromagnetic spectrum, air quality or agricultural fields appears promising. This could further increase the detection power. \newMinor{Due to the modular nature of the proposed inference method, improving the individual blocks in Fig.~\ref{fig:flowchart} can lead to improved results without the necessity to come up with an entirely new approach. For example, one can exploit the spatial smoothness assumption not only in the estimation of the probability models, but also to increase the detection power. One possible approach is to replace the \gls{lfdr} with the \gls{clfdr}. The difference is that $\nulFrc$ depends on the sensor location in the \gls{clfdr}. Its location-dependent value can be estimated from the data.} One could plug in an novel or existing method \cite{Halme2019, Tansey2018} \old{to replace the constant $\nulFrc$ by a suitable spatially varying prior}in\newMinor{to} Eq.~\eqref{eq:lfdr-def}\old{ when calculating the sensor \gls{lfdr}s}. \old{Also, }\newMinor{Alternatively, one could include a penalty term into }\gls{rbf} interpolation\old{ allows to include a penalty} to spatially \old{smooth}\newMinor{smoothen} the interpolant \cite{Schaback2007}. Choosing the value of this parameter is non-trivial and should be done in an application-dependent manner.

\vspace{-5pt}
\section{Summary}
	We %investigated the general spatial inference problem of identifying the region of anomaly of a spatial phenomenon in an area of observation. We embedded the problem into the framework of large-scale multiple hypothesis testing by means of local false discovery rates.
	proposed \newMinor{a novel lfdr-based inference method}\old{ a general spatial inference method} for detecting interesting, different or anomalous regions \old{for}\newMinor{of} an observed physical phenomenon. Our approach provides statistical performance guarantees in terms of false positives. %Its computational efficiency makes it particularly suitable to solving real-world problems with the help of heterogeneous large-scale sensor networks, such as they are used in the \gls{iot}, for example. 
	The proposed method facilitates solving real-world spatial inference problems in which distributed heterogeneous large-scale sensor networks observe spatial phenomena. \old{A prime example is \newMinor{the} \gls{iot}. }The sensors communicate with a fusion center or cloud only in a limited fashion. This saves battery and ensures a long sensor life-span. \newMinor{Moreover, the method estimates the local false discovery rates in between the actual sensor locations. Consequently, the proposed method allows for identifying spatial regions where alternative hypotheses are in place while strictly controlling the FDR. The decision making takes place at the fusion center or in the cloud. In addition, we proposed a novel approach to estimating the local false discovery based on the spectral method of moments}. It outperforms existing methods in terms of detection power and runtime in a variety of scenarios.  It is considerably faster than state-of-the-art methods \newMajor{and yields close-to-optimum results for $p$-values that are quantized using few bits. This is important for inference with wireless sensor networks, since the sensors communicate their information over the wireless communication channel.}
	\old{Our individual contributions are as follows. We formulated the problem using a large-scale multiple hypothesis framework where the false discovery rate is controlled.	}%The observation area was monitored by a large-scale sensor network. % with limited sensor-to-sensor and sensor-to-cloud communication.
	\old{We developed a novel method to \old{compute}\newMinor{estimate the} local false discovery rates in a data-driven fashion based on \old{quantities estimated with} the spectral method of moments. It outperforms existing methods in terms of detection power and runtime in a variety of scenarios. We proposed a method for interpolating local false discovery rate values at locations where no sensor is present. This allows for identifying regions associated with null hypothesis and alternative, as well as estimating their boundaries.}
	The performance was evaluated by an application to spatially varying radio frequency waves. %In addition, the simulation results demonstrate the superior performance of the proposed method in large-scale sensor networks, such as used for example in the \gls{iot}.
	 % suggested to interpolate the \gls{lfdr}s to perform inference everywhere in the observation area.
	The code to reproduce the results is available on \url{https://github.com/mgoelz95/lfdr-sMoM}.
	\appendices
\vspace{-6pt}
	\section{Proof of Theorem~\ref{theo:relation-pop-mom}}
	\label{apx:smom-theo}
		We establish the one-to-one relations between the first three population moments $\E{\pvecRV}, \E{\pvecRV\tensorprod\pvecRV}, \E{\pvecRV\tensorprod\pvecRV\tensorprod\pvecRV}$ and the model parameter-dependent operators $\secondPopSMoMTheo\in\mathbb{R}^{\numVecEl\times\numVecEl}$, $\thirdPopSMoMTheo\in\mathbb{R}^{\numVecEl\times\numVecEl\times\numVecEl}$, which are given in Eq.~\eqref{eq:theo2_M2} and Eq.~\eqref{eq:theo2_M3}. Since the sample moments are consistent estimates of the population moments, Theorem~\ref{theo:estimators-mix-par} enables model parameter estimation based on the sample moments.
		\vspace{-9pt}
		\subsection{Proof of Eq.~\eqref{eq:theo2_M2}}
			Under Assumption~\ref{ass:same-var} with $\numVecEl > \numCmp$, the \old{$\numVecEl - (\numCmp-1)$ smallest eigenvalues of covariance matrix $\vecCov\in\mathbb{R}^{\numVecEl\times\numVecEl}$ with $\mathrm{rank}(\vecCov)= \min(\numCmp, \numVecEl)-1$,}\newMinor{covariance matrix $\vecCov\in\mathbb{R}^{\numVecEl\times\numVecEl}$ can be written as}
			\begin{align}
				\vecCov &= \E{(\pvecRV-\vecExp)\tensorprod(\pvecRV - \vecExp)}\nonumber\\
						%&= \E{\pvec\pvec^\top - \vecExp\pvec^\top - \pvec\vecExp^\top + \vecExp\vecExp^\top}\\
						%&= \E{\pvec\pvec^\top} - \vecExp\E{\pvec^\top} - \E{\pvec}\vecExp^\top + \vecExp\vecExp^\top\\
						%&= \sum_{\cmpIdx = 1}^\numCmp\Big(\cmpWgt\E[|\cmpIdx]{\pvec\pvec^\top}\Big) - \vecExp\vecExp^\top\\
						%&= \sum_{\cmpIdx = 1}^\numCmp\cmpWgt \bigg(\cmpVar\mathbf{I} + \vecCmpExp\Big(\vecCmpExp\Big)^\top\bigg) - \vecExp\vecExp^\top\\
						%&= \sum_{\cmpIdx = 1}^\numCmp\cmpWgt\bigg(\Big(\vecCmpExp - \vecExp\Big)\tensorprod\Big(\vecCmpExp-\vecExp\Big)+\cmpVar\mathbf{I}\bigg)\\
						&= \avVar\mathbf{I}_{\numVecEl\times\numVecEl} + \sum_{\cmpIdx = 1}^\numCmp\cmpWgt\Big(\vecCmpExp-\vecExp\Big)\tensorprod\Big(\vecCmpExp-\vecExp\Big),\label{eq:apx_cov}
			\end{align}
			\newMinor{where $\sum_{\cmpIdx = 1}^\numCmp\cmpWgt\Big(\vecCmpExp-\vecExp\Big)\tensorprod\Big(\vecCmpExp-\vecExp\Big)$ is a $\numVecEl\times\numVecEl$ matrix of rank $\numCmp-1$. Consequently, the $\numVecEl-(\numCmp-1)$ smallest eigenvalues of $\vecCov$} are all equal to $\avVar$. $\noiseEigVec\in\mathbb{R}^\numVecEl$ is any of the unit norm eigenvectors corresponding to $\avVar$. The derivation for Eq.~\eqref{eq:apx_cov} in the proof of \cite[Theorem~1]{Hsu2012} applies also to our data model. Thus, also
			
			\begin{align*}
				\secondPopSMoMTheo 	&= \E{\pvecRV\tensorprod\pvecRV} - \avVar\mathbf{I}_{\numVecEl\times\numVecEl}
%				&= \sum_{\cmpIdx = 1}^\numCmp \cmpWgt\Big(\E[|\cmpIdx]{\pvec\pvec^\top}- \cmpVar\mathbf{I}\Big)\\
%				&= \sum_{\cmpIdx = 1}^\numCmp \cmpWgt\Big(\cmpVar\mathbf{I} + \vecCmpExp\big(\vecCmpExp\big)^\top- \cmpVar\mathbf{I}\Big)\\
				= \sum_{\cmpIdx = 1}^\numCmp\cmpWgt\vecCmpExp\tensorprod\vecCmpExp
			\end{align*}
			holds and relates the observable second population moment $\E{\pvecRV\tensorprod\pvecRV}$ to the model parameters.
		\vspace{-9pt}
		\subsection{Proof of Eq.~\eqref{eq:theo2_M3}}
			We now proof the relation $\thirdPopSMoMTheo = \thirdPopSMoM- \thirdPopSMoMErr$ between the observable $\thirdPopSMoM$ and non-observable $\thirdPopSMoMTheo$. Random vector $\pvecRV\in\mathbb{R}^\numVecEl$ is assumed to follow a $\numVecEl$-variate, $\numCmp$-component mixture model. It can hence be described by $\pvecRV = \pvecRV^{(\mathsf{\cmpIdx})}$, where $\mathsf{k}\in[\numCmp]$ is a discrete random variable for the mixture component index taking value $\mathsf{\cmpIdx} = \cmpIdx$ with probability $\cmpWgt \in[0, 1]$ and $\sum_{\cmpIdx=1}^\numCmp\cmpWgt = 1$. Let $\qvecRV^{(\mathsf{k})} = \pvecRV^{(\mathsf{\cmpIdx})} - \vecCmpExp[\mathsf{\cmpIdx}]$ denote the centered data random vector and conditioned on $\mathsf{\cmpIdx} = \cmpIdx$, $\qvecRV^{(\cmpIdx)} = \pvecRV^{(\cmpIdx)} - \vecCmpExp$. This implies $\E{\qvecRV^{(\cmpIdx)}} = \mathbf{0}_\numVecEl$. Since $\qvecRV\in\mathbb{R}^\numVecEl$, its $\vecElIdx$-th marginal is $\q^{(\cmpIdx)}$, $\vecElIdx\in[\numVecEl]$, and the marginal variances are $\E{{\q^{(\cmpIdx)}}^2} = \cmpVar\,\forall\,\cmpIdx\in[\numCmp]$ under Assumption~\ref{ass:same-var}. The third order central moments are $\E{{\q^{(\cmpIdx)}}^3} = \elCmpThrdCenMom{\cmpIdx}$.  %conditioned on the component random variable $\mathsf{\cmpIdx}$for component $\cmpIdx$, i.e., for $\pvec$ conditioned on originating in component $\cmpIdx$, $\qvec= \pvec - \vecCmpExp$ with $\E[|\cmpIdx]{\qvec} = \mathbf{0}$. %such that $\qvec = \pvec - \vecExp$ with $\E{\qvec} = \mathbf{0}$ and such that for $\pvec$ conditioned on originating in component $\cmpIdx$, $\qvec= \pvec - \vecCmpExp$ to ensure $\E[|\cmpIdx]{\qvec} = \mathbf{0}$.
			
			%We prove Eq.~\eqref{eq:theo2_M3} by relating  deriving all involved quantities $\thirdPopSMoM, \thirdPopSMoMTheo, \thirdPopSMoMErr, \thirdPopSMoMErrFirst, \thirdPopSMoMErrThird\in\mathbb{R}^{\numVecEl\times\numVecEl\times\numVecEl}, \firstPopSMoM, \firstPopSMoMTheo\in\mathbb{R}^\numVecEl$ and the third order population moment $\E{\pvecRV\tensorprod\pvecRV\tensorprod\pvecRV}\in\mathbb{R}^{\numVecEl\times\numVecEl\times\numVecEl}$,
			The third order population moment tensor $\E{\pvecRV\tensorprod\pvecRV\tensorprod\pvecRV}$ is % \in\mathbb{R}^{\numVecEl\times\numVecEl\times\numVecEl}$ is
			\vspace{-3pt}
			\begin{align}
				&\E{\pvecRV\tensorprod\pvecRV\tensorprod\pvecRV} \nonumber\\
				&= \sum_{\cmpIdx = 1}^\numCmp\cmpWgt\bigg(\!\E{\Big(\!\vecCmpExp\!+\!\qvecRV^{(\cmpIdx)}\!\Big)\!\tensorprod\!\Big(\!\vecCmpExp\!+\!\qvecRV^{(\cmpIdx)}\!\Big)\!\tensorprod\!\Big(\!\vecCmpExp\!+ \!\qvecRV^{(\cmpIdx)}\!\Big)}\!\bigg)\nonumber\\
				&= \sum_{\cmpIdx = 1}^\numCmp\!\cmpWgt\!\bigg(\!\E{\vecCmpExp\!\tensorprod\vecCmpExp\!\tensorprod\vecCmpExp} + \E{\vecCmpExp\!\tensorprod\qvecRV^{(\cmpIdx)}\tensorprod\qvecRV^{(\cmpIdx)}}\nonumber\\
				&\qquad\qquad\quad +\!\E{\qvecRV^{(\cmpIdx)}\tensorprod\vecCmpExp\!\tensorprod\qvecRV^{(\cmpIdx)}}\!+\!\E{\qvecRV^{(\cmpIdx)}\tensorprod\qvecRV^{(\cmpIdx)}\tensorprod\vecCmpExp}\nonumber\\
				&\qquad\qquad\quad + \!\E{\qvecRV^{(\cmpIdx)}\tensorprod\qvecRV^{(\cmpIdx)}\tensorprod\qvecRV^{(\cmpIdx)}}\!\bigg)\nonumber\\
				&=\sum_{\cmpIdx = 1}^\numCmp\!\cmpWgt\!\bigg(\!\vecCmpExp\!\tensorprod\vecCmpExp\!\tensorprod\vecCmpExp\!+\! \sum_{\vecElIdx=1}^\numVecEl\Big(\vecCmpExp\cmpVar\tensorprod \baseVec\tensorprod\baseVec\nonumber\\
				&\qquad\qquad\quad +\baseVec\tensorprod\vecCmpExp\cmpVar\tensorprod\baseVec+\baseVec\tensorprod\baseVec\tensorprod\vecCmpExp\cmpVar\nonumber\\
				&\qquad\qquad\quad+\elCmpThrdCenMom{\cmpIdx}\baseVec \!\tensorprod\!\baseVec\!\tensorprod\!\baseVec\Big)\bigg)\nonumber\\
				&=\thirdPopSMoMTheo + \thirdPopSMoMErrThird + \sum_{\vecElIdx = 1}^\numVecEl\bigg(\sum_{\cmpIdx = 1}^\numCmp\cmpWgt\Big(\vecCmpExp\cmpVar\tensorprod \baseVec\tensorprod\baseVec\nonumber\\
				&\qquad\quad+\baseVec\tensorprod\vecCmpExp\cmpVar\tensorprod\baseVec+\baseVec\tensorprod\baseVec\tensorprod\vecCmpExp\cmpVar\Big)\bigg)\label{eq:apx_M3}.
			\end{align}
			The definitions of $\thirdPopSMoM, \thirdPopSMoMErrThird \in\mathbb{R}^{\numVecEl\times\numVecEl\times\numVecEl}$ are provided in Eq.~\eqref{eq:theo1_M3} and Eq.~\eqref{eq:theo2_M3_delta3}. We proceed with $\firstPopSMoM\in\mathbb{R}^\numVecEl$,
			\begin{align}
				\firstPopSMoM 	&= \E{\pvecRV\Big(\noiseEigVec^\top\big(\pvecRV-\vecExp\big)\Big)^2}\nonumber\nonumber\\
								&= \sum_{\cmpIdx = 1}^\numCmp \cmpWgt\E{\pvecRV^{(\cmpIdx)}\Big(\noiseEigVec^\top\big(\pvecRV^{(\cmpIdx)}-\vecExp\big)^2\Big)}\nonumber\\
								&= \sum_{\cmpIdx = 1}^\numCmp \cmpWgt\E{\bigg(\vecCmpExp + \qvecRV^{(\cmpIdx)}\bigg)\bigg(\noiseEigVec^\top\Big(\vecCmpExp-\vecExp+\qvecRV^{(\cmpIdx)}\Big)\bigg)^2}\nonumber\\
								&= \sum_{\cmpIdx = 1}^\numCmp \cmpWgt\E{\Big(\vecCmpExp + \qvecRV^{(\cmpIdx)}\Big)\Big(\noiseEigVec^\top\qvecRV^{(\cmpIdx)}\Big)^2}\nonumber\\
								&= \sum_{\cmpIdx = 1}^\numCmp \cmpWgt\bigg(\E{\vecCmpExp\Big(\noiseEigVec^\top\qvecRV^{(\cmpIdx)}\Big)^2} +\E{\qvecRV^{(\cmpIdx)}\Big(\noiseEigVec^\top\qvecRV^{(\cmpIdx)}\Big)^2}\bigg)\nonumber\\
								&= \sum_{\cmpIdx = 1}^\numCmp\cmpWgt\Big(\vecCmpExp\cmpVar + \noiseEigVec\odot\noiseEigVec\odot\vecCmpThrdOrdCenMom\Big).\label{eq:apx_m1}
			\end{align}
			The transition from line 3 to 4 in Eq.~\eqref{eq:apx_m1} uses the property that $\noiseEigVec$ lies in the null space of $\cmpWgt\Big(\vecCmpExp-\vecExp\Big)\tensorprod\Big(\vecCmpExp-\vecExp\Big)$. With  $\E{\q[\vecElIdx_1]^{(\cmpIdx)}\q[\vecElIdx_2]^{(\cmpIdx)}} = 0$ if $\vecElIdx_1 \neq \vecElIdx_2$ due to the diagonal covariance matrix, the first addend in line 6 follows from line 5 by
			
			\begin{align*}
				&\E{\vecCmpExp\Big(\noiseEigVec^\top\qvecRV^{(\cmpIdx)}\Big)^2} = \vecCmpExp\Bigg(\!\E{\sum_{\vecElIdx = 1}^\numVecEl\bigg(\noiseEigVecEl^2{\q^{(\cmpIdx)}}^2\bigg)}\!\\
				&\qquad+\sum_{\vecElIdx_1 = 1}^\numVecEl\sum_{\vecElIdx_2 = \vecElIdx_1 + 1}^{\numVecEl}\!2\E{\q[\vecElIdx_1]^{(\cmpIdx)}\q[\vecElIdx_2]^{(\cmpIdx)}}\noiseEigVecEl[\vecElIdx_1]\noiseEigVecEl[\vecElIdx_2]\!\Bigg)\nonumber.
			\end{align*}
				For the second addend in Eq.~\eqref{eq:apx_m1},
			\begin{align*}
				&~= \vecCmpExp\sum_{\vecElIdx = 1}^\numVecEl\noiseEigVecEl^2\E{{\q^{(\cmpIdx)}}^2}\nonumber= \vecCmpExp\sum_{\vecElIdx = 1}^\numVecEl\noiseEigVecEl^2\cmpVar\nonumber
																			%= \vecCmpExp\cmpVar||\noiseEigVec||^2\\
				%&~
				= \vecCmpExp\cmpVar\nonumber,\\
				&\E{\qvecRV^{(\cmpIdx)}\Big(\noiseEigVec^\top \qvecRV^{(\cmpIdx)}\Big)^2} = \E{\qvecRV^{(\cmpIdx)}\sum_{\vecElIdx = 1}^\numVecEl\Big(\noiseEigVecEl^2{\q^{(\cmpIdx)}}^2\Big)}\\
				&\qquad+ \E{\qvecRV^{(\cmpIdx)}\sum_{\vecElIdx_1 = 1}^\numVecEl\sum_{\vecElIdx_2 =\vecElIdx_1+1 }^{\numVecEl}\big(2\q[\vecElIdx_1]^{(\cmpIdx)}\q[\vecElIdx_2]^{(\cmpIdx)}\noiseEigVecEl[\vecElIdx_1]\noiseEigVecEl[\vecElIdx_2]\big)}.
																		%\E{\qvecRV^{(\cmpIdx)}\bigg(\sum_{\cmpIdx = 1}^\numCmp\big(\noiseEigVecEl^2\q^2\big)\\+\sum_{\vecElIdx_1 = 1}^\numVecEl\sum_{\vecElIdx_2 = \vecElIdx_1+1}^\numVecEl\big(2\q[\vecElIdx_1]\q[\vecElIdx_2]\noiseEigVecEl[\vecElIdx_1]\noiseEigVecEl[\vecElIdx_2]\big)\bigg)}\\
			\end{align*}
			For the $j$-th entry, $j\in[\numVecEl]$, of vector $\E{\qvecRV^{(\cmpIdx)}\Big(\noiseEigVec^\top \qvecRV^{(\cmpIdx)}\Big)^2}$, %$\Bigg[\E{\qvecRV^{(\cmpIdx)}\Big(\noiseEigVec^\top \qvecRV^{(\cmpIdx)}\Big)^2}\Bigg]_{\vecElIdx^\ast}, \vecElIdx^\ast\in[\numVecEl]$, 
			
			\begin{align*}
				&\E{\q[j]^{(\cmpIdx)}\sum_{\vecElIdx=1}^\numVecEl\Big(\noiseEigVecEl^2\q^{(\cmpIdx)}\Big)} = \E{\q[j]^{(\cmpIdx)}\Big(\noiseEigVecEl[j]^2{\q[j]^{(\cmpIdx)}}^2\Big)} = \noiseEigVecEl[j]^2\E{{\q[j]^{(\cmpIdx)}}^3},\\
				&\E{\q[j]^{(\cmpIdx)}\sum_{\vecElIdx_1 = 1}^\numVecEl\sum_{\vecElIdx_2 = \numVecEl_1+1}^\numVecEl\big(2\q[\vecElIdx_1]^{(\cmpIdx)}\q[\vecElIdx_2]^{(\cmpIdx)}\noiseEigVecEl[\vecElIdx_1]\noiseEigVecEl[\vecElIdx_2]\big)} = 0,
			\end{align*}											
			and since $\E{{\q[j]^{(\cmpIdx)}}^3} = \elCmpThrdCenMom[j]{\cmpIdx}$,
			\begin{align*}
				\E{\qvec\Big(\noiseEigVec^\top \qvec\Big)^2} &= \noiseEigVec\odot\noiseEigVec\odot\vecCmpThrdOrdCenMom.
			\end{align*}
			The proof is concluded by subtracting $\firstPopSMoM$ as in Eq.~\eqref{eq:apx_m1} from $\E{\pvecRV\tensorprod\pvecRV\tensorprod\pvecRV}$ in Eq.~\eqref{eq:apx_M3}, which results in Eq.~\eqref{eq:theo2_M3} when inserting $\thirdPopSMoMErrFirst\in\mathbb{R}^{\numVecEl\times\numVecEl\times\numVecEl}$ for $\sum_{\vecElIdx = 1}^\numVecEl\firstPopSMoMErr\tensorprod\baseVec\tensorprod\baseVec + \baseVec\tensorprod\firstPopSMoMErr\tensorprod\baseVec + \baseVec\tensorprod\baseVec\tensorprod\firstPopSMoMErr$ with $\firstPopSMoMErr\coloneqq\sum_{\cmpIdx = 1}^\numCmp \noiseEigVec\odot\noiseEigVec\odot\vecCmpThrdOrdCenMom$.
\vspace{-9pt}
	\section{Approximation of the third order moment}
		\label{apx:smom-conv}
		In this section, we demonstrate that approximating the non-observable $\thirdPopSMoMTheoVec{\boldsymbol{\eta}}$ by the observable $\thirdPopSMoMVec{\boldsymbol{\eta}} \in\mathbb{R}^{\numVecEl\times\numVecEl}$ is accurate up to an approximation error $\thirdPopSMoMVec{\boldsymbol{\eta}} - \thirdPopSMoMTheoVec{\boldsymbol{\eta}}$, which is negligible under certain conditions. % with $\boldsymbol{\eta} = [\eta_1, \dots, \eta_\numVecEl]^\top\in\mathbb{R}^\numVecEl$.
		By definition,
		
		\begin{align*}
			\thirdPopSMoMVec{\boldsymbol{\eta}}
			&= \sum_{\vecElIdx = 1}^\numVecEl\sum_{\vecElIdxAltI = 1}^\numVecEl\sum_{\vecElIdxAltII = 1}^\numVecEl\big[\thirdPopSMoM\big]_{\vecElIdx, \vecElIdxAltI, \vecElIdxAltII}\eta_\vecElIdxAltII\baseVec[\vecElIdx]\!\tensorprod\baseVec[\vecElIdxAltI]\\
			&= \sum_{\vecElIdx = 1}^\numVecEl\sum_{\vecElIdxAltI = 1}^\numVecEl\sum_{\vecElIdxAltII = 1}^\numVecEl\Big(\big[\thirdPopSMoMTheo\big]_{\vecElIdx, \vecElIdxAltI, \vecElIdxAltII} + \big[\thirdPopSMoMErr\big]_{\vecElIdx, \vecElIdxAltI, \vecElIdxAltII}\Big)\eta_\vecElIdxAltI\baseVec[\vecElIdx]\!\tensorprod\baseVec[\vecElIdxAltI].
		\end{align*}
		The entries are thus $\big[\thirdPopSMoMVec{\boldsymbol{\eta}}\big]_{\vecElIdx,\vecElIdxAltI} = \sum_{\vecElIdxAltII= 1}^{\numVecEl}\big[\thirdPopSMoMTheo\big]_{\vecElIdx, \vecElIdxAltI, \vecElIdxAltII}{\eta}_\vecElIdxAltII + \sum_{\vecElIdxAltII= 1}^{\numVecEl}\big[\thirdPopSMoMErrThird\big]_{\vecElIdx, \vecElIdxAltI, \vecElIdxAltII}{\eta}_\vecElIdxAltII\,\forall\,\vecElIdx, \vecElIdxAltI \in[\numVecEl]$. If the value of the second sum is small $\forall\,\vecElIdx, \vecElIdxAltI \in[\numVecEl]$, we can conclude $\thirdPopSMoMVec{\boldsymbol{\eta}}\approx\thirdPopSMoMTheoVec{\boldsymbol{\eta}}$. With Theorem~\ref{theo:relation-pop-mom}, the entries of $\thirdPopSMoMErr$ are found to be
		\begin{align*}
			\big[\thirdPopSMoMErr\big]_{\vecElIdx, \vecElIdxAltI, \vecElIdxAltII}
			= \begin{cases}
				\big(-3\noiseEigVecEl[\vecElIdx]^2+1\big)\avElCmpThrdCenMom[\vecElIdx]	&  \vecElIdx = \vecElIdxAltI = \vecElIdxAltII,\\
				-\noiseEigVecEl[{\vecElIdx}]^2\avElCmpThrdCenMom[\vecElIdx] 			&  \vecElIdx\neq\vecElIdxAltI = \vecElIdxAltII,\\
				-\noiseEigVecEl[{\vecElIdxAltI}]^2\avElCmpThrdCenMom[\vecElIdxAltI] 			&  \vecElIdx=\vecElIdxAltII \neq \vecElIdxAltI,\\
				-\noiseEigVecEl[{\vecElIdxAltII}]^2\avElCmpThrdCenMom[\vecElIdxAltII] 			&  \vecElIdx=\vecElIdxAltI \neq \vecElIdxAltII,\\
				0 & \vecElIdx\neq\vecElIdxAltI \neq \vecElIdxAltII.
			\end{cases}
		\end{align*}
		Consequently, the element-wise error between $\thirdPopSMoM$ and $\thirdPopSMoMTheo$ is 
		\begin{align*}
			\big[\thirdPopSMoMVec{\eta} - \thirdPopSMoMTheoVec{\eta}\big]_{\vecElIdx, \vecElIdxAltI} &= \eta_{\vecElIdx}\avElCmpThrdCenMom[\vecElIdx] -2\eta_{\vecElIdx}\noiseEigVecEl[\vecElIdx]^2\avElCmpThrdCenMom[\vecElIdx]-\sum_{\vecElIdxAltII = 1}^\numVecEl\eta_{\vecElIdxAltII}\noiseEigVecEl[\vecElIdxAltII]^2\avElCmpThrdCenMom[\vecElIdxAltII]\\
			&=\eta_{\vecElIdx}\Big(1-2\noiseEigVecEl[\vecElIdx]^2\Big)\avElCmpThrdCenMom[\vecElIdx] - \sum_{\vecElIdxAltII = 1}^\numVecEl\eta_{\vecElIdxAltII}\noiseEigVecEl[\vecElIdxAltII]^2\avElCmpThrdCenMom[\vecElIdxAltII]
		\end{align*}
		if row index $\vecElIdx$ and column index $\vecElIdxAltI$ are equivalent, $\vecElIdx = \vecElIdxAltI$, and
		\begin{equation*}
			\big[\thirdPopSMoMVec{\eta} - \thirdPopSMoMTheoVec{\eta}\big]_{\vecElIdx, \vecElIdxAltI} = -\eta_{\vecElIdxAltI}\noiseEigVecEl[\vecElIdx]^2\avElCmpThrdCenMom[\vecElIdx]-\eta_{\vecElIdx}\noiseEigVecEl[\vecElIdxAltI]^2\avElCmpThrdCenMom[\vecElIdxAltI]
		\end{equation*}
		if $\vecElIdx\neq\vecElIdxAltI$, $\forall\,\vecElIdx,\vecElIdxAltI\in[\numVecEl]$.
		
		Recall the average third cumulant $\avElCmpThrdCenMom = \sum_{\cmpIdx = 1}^\numCmp\cmpWgt\thirdCenMoMMvCmp{\vecElIdx}$, with $\thirdCenMoMMvCmp{\vecElIdx}$ from Eq.~\eqref{eq:third-mom}.
		%$\avElCmpThrdCenMom$ takes its maximum value $\max(\avElCmpThrdCenMom) \approx 0.02644$, if $\betaMixA \approx 0.22219\,\forall\,\cmpIdx\in[\numCmp]$ and its minimum value $\min(\avElCmpThrdCenMom) \approx -0.00742$, if $\betaMixA \approx 2.0664\,\forall\,\cmpIdx\in[\numCmp]$.
		$\old{-0.0075}\newMinor{-0.00742}<\avElCmpThrdCenMom<\old{0.2222}\newMinor{0.02641}$, i.e., $\avElCmpThrdCenMom$ is bounded from below and above. \newMinor{These bounds are obtained by a worst-case assessment: Since for the weights $\sum_{k=1}^K w^{(k)}= 1$ holds, the worst case is that for all $k\in[K]$, $\kappa_{3_i}^{(k)}$ takes its minimal (maximal) value to find the lower (upper) bound on $\overline{\kappa}_3$. Differentiating the expression for $\kappa_{3_i}^{(k)}$ in Eq.~\eqref{eq:third-mom} w.r.t $a_i^{(k)}$ yields
			\begin{equation*}
				\frac{\mathrm{d}\kappa_{3_i}^{(k)}}{\mathrm{d}a_{i}^{(i)}} = \frac{3}{\Big(a_i^{(k)}+3\Big)^2} - \frac{3a_i^{(k)}\Big(3a_i^{(k)} + 4\Big)}{\Big({a_i^{(k)}}^{2}+3a_i^{(k)} + 2\Big)^2} + \frac{{6a_i^{(k)}}^2}{\Big(a_i^{(k)}+1\Big)^4}.
			\end{equation*}
			The two positive (since for the beta distribution parameters $a_i^{(k)}> 0$ must hold) roots of $\frac{\mathrm{d}\kappa_{3_i}^{(k)}}{\mathrm{d}a_{i}^{(i)}}$ are $a_i^{(k)} = 1/2(-1\pm\sqrt{2} + \sqrt{11-2\sqrt{2}})$. The corresponding bounds for $\kappa_{3_i}^{(k)}$ then follow by inserting into Eq.~\eqref{eq:third-mom}.}
		
		By definition, $|\noiseEigVec|^2= \sum_{\vecElIdx=1}^\numVecEl\noiseEigVecEl^2 = 1$, which implies $\noiseEigVecEl^2 \leq 1\,\forall\,\vecElIdx\in[\numVecEl]$.
		
		\begin{lemma}
			\label{lem:apx}
			Let $\boldsymbol{\eta}\in\mathbb{R}^\numVecEl$ be distributed uniformly on the unit sphere in $\mathbb{R}^\numVecEl$, which holds if its $\vecElIdx$-th element is $\eta_\vecElIdx = \frac{x_\vecElIdx}{\sqrt{x_1^2 + \dots + x_ \numVecEl^2}}$ with $x_\vecElIdx\sim\mathcal{N}(0, 1)\,\forall\,\vecElIdx\in[\numVecEl]$. For large $\numVecEl$, the $\eta_\vecElIdx$ become approximately i.i.d. Gaussian distributed, $\eta_\vecElIdx\sim\mathcal{N}(0, \numVecEl^{-1})$. If we generate $U$ samples of $\boldsymbol{\eta}$, we will find an observation $\boldsymbol{\eta}^\ast$ such that the errors $\big[\thirdPopSMoMVec{\eta} - \thirdPopSMoMTheoVec{\eta}\big]_{\vecElIdx, \vecElIdxAltI}$ become arbitrarily small $\forall\,\vecElIdx, \vecElIdxAltI\in[\numVecEl]$ as $U\rightarrow\infty$.
		\end{lemma}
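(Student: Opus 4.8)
The plan is to read every entry of the error matrix as a continuous, mean-zero function of the random direction $\boldsymbol{\eta}$ and then invoke the probabilistic method: if for each target accuracy the event that all entries are small carries strictly positive probability under a single draw, then among $U$ independent draws the best one meets the target with probability $1-(1-p)^U\to 1$. First I would record, from the closed forms computed immediately before the lemma, that $\varepsilon_{\vecElIdx\vecElIdxAltI}(\boldsymbol{\eta})\coloneqq\big[\thirdPopSMoMVec{\boldsymbol{\eta}}-\thirdPopSMoMTheoVec{\boldsymbol{\eta}}\big]_{\vecElIdx,\vecElIdxAltI}$ is, for every pair $(\vecElIdx,\vecElIdxAltI)$, an explicit linear combination of the components $\eta_\vecElIdx$ whose coefficients are assembled only from the fixed quantities $\avElCmpThrdCenMom\in(-0.0075,0.2222)$ and $\noiseEigVecEl^2\in[0,1]$ with $\sum_{\vecElIdx=1}^{\numVecEl}\noiseEigVecEl^2=1$. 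Linearity in $\boldsymbol{\eta}$ and boundedness of the coefficients are the two structural facts on which everything else rests.

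Next I would quantify the size of each entry using the moments of the uniform law on the sphere, for which $\E{\eta_\vecElIdx}=0$ and $\E{\eta_\vecElIdx^2}=\numVecEl^{-1}$ hold exactly. Each off-diagonal entry $-\eta_{\vecElIdxAltI}\noiseEigVecEl^2\avElCmpThrdCenMom-\eta_\vecElIdx\noiseEigVecEl[\vecElIdxAltI]^2\avElCmpThrdCenMom[\vecElIdxAltI]$ is then mean zero with variance $O(\numVecEl^{-1})$, and the diagonal entries inherit the same order once the aggregate term $\sum_{\vecElIdxAltII}\eta_{\vecElIdxAltII}\noiseEigVecEl[\vecElIdxAltII]^2\avElCmpThrdCenMom[\vecElIdxAltII]$ is controlled by Cauchy--Schwarz against its fixed coefficient vector, whose Euclidean norm is at most $0.2222$. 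Under the stated approximation $\eta_\vecElIdx\sim\mathcal{N}(0,\numVecEl^{-1})$, every $\varepsilon_{\vecElIdx\vecElIdxAltI}$ is therefore approximately Gaussian with standard deviation $O(\numVecEl^{-1/2})$; a union bound over the $\numVecEl^2$ entries then shows that a single draw keeps $\max_{\vecElIdx,\vecElIdxAltI}|\varepsilon_{\vecElIdx\vecElIdxAltI}(\boldsymbol{\eta})|$ below $O(\sqrt{\log\numVecEl/\numVecEl})$ with high probability.

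With these bounds in hand I would close the argument by the selection step. Fixing a tolerance $\epsilon$, set $p_\epsilon=\Pr\big(\max_{\vecElIdx,\vecElIdxAltI}|\varepsilon_{\vecElIdx\vecElIdxAltI}(\boldsymbol{\eta})|<\epsilon\big)$. The sublevel set is open by continuity of the $\varepsilon_{\vecElIdx\vecElIdxAltI}$, and because the uniform law on the sphere has full support it carries positive mass whenever that set is nonempty; hence $p_\epsilon>0$. Drawing $U$ independent vectors $\boldsymbol{\eta}$ and retaining the minimizer $\boldsymbol{\eta}^\ast$ of the maximum entry error, the failure probability is $(1-p_\epsilon)^U\to 0$ as $U\to\infty$, which is precisely the assertion.

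The main obstacle is establishing nonemptiness of the sublevel set for a prescribed $\epsilon$ at fixed $\numVecEl$. Because the $\numVecEl^2$ linear functionals defining the $\varepsilon_{\vecElIdx\vecElIdxAltI}$ generically span $\mathbb{R}^{\numVecEl}$, no unit vector can annihilate all of them, so $\inf_{\|\boldsymbol{\eta}\|=1}\max_{\vecElIdx,\vecElIdxAltI}|\varepsilon_{\vecElIdx\vecElIdxAltI}(\boldsymbol{\eta})|$ is strictly positive for finite $\numVecEl$, and \emph{arbitrarily small} cannot literally hold for every $\epsilon$ at fixed dimension. The honest way to make the claim precise is to lean on the large-$\numVecEl$ regime the lemma already singles out: the essential infimum of the error distribution is itself of order $\numVecEl^{-1/2}$ and tends to zero as $\numVecEl$ grows, so the achievable tolerance must be read on that scale (equivalently, one lets $\numVecEl$ and $U$ grow jointly). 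I would therefore phrase the conclusion as: for every $\epsilon$ exceeding this essential infimum a finite but growing $U$ suffices, and that infimum vanishes as $\numVecEl\to\infty$. Reconciling the finite-dimensional geometry with the asymptotic statement is the crux of the proof.
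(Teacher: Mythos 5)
Your proposal is correct (modulo the same Gaussian approximation of the sphere coordinates that the paper itself invokes via the law of large numbers), and it is in fact more careful than the paper's own proof, so the differences are worth recording. The paper argues in the same spirit as you do --- the error entries are linear in $\boldsymbol{\eta}$ with coefficients built from $\noiseEigVecEl^2$ and the bounded cumulants $\avElCmpThrdCenMom$, and $\E{\eta_\vecElIdx}=0$, $\E{\eta_\vecElIdx^2}=\numVecEl^{-1}$ --- but it proceeds by a case distinction on whether the mass $\sum_{\vecElIdx=1}^{\numVecEl}\noiseEigVecEl^2=1$ is spread over all entries or concentrated on $\tilde{\numVecEl}\ll\numVecEl$ of them, and in each case simply asserts that the errors are small; the number of draws $U$ never actually enters its argument. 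You handle both cases uniformly by Cauchy--Schwarz plus a union bound (maximal entry error $O(\sqrt{\log\numVecEl/\numVecEl})$ with high probability), and you supply the selection step the paper omits: the sublevel set is open and of positive measure whenever nonempty, so the best of $U$ i.i.d.\ draws fails with probability $(1-p_\epsilon)^U\rightarrow 0$ --- this is what makes the ``$U\rightarrow\infty$'' clause of the lemma do actual work. Most importantly, your closing observation is a genuine correction to the statement rather than a gap in your proof: at fixed $\numVecEl$ the $\numVecEl^2$ linear functionals $\big[\thirdPopSMoMVec{\boldsymbol{\eta}}-\thirdPopSMoMTheoVec{\boldsymbol{\eta}}\big]_{\vecElIdx,\vecElIdxAltI}$ generically have trivial common kernel, so their maximum absolute value has a strictly positive infimum over the unit sphere and ``arbitrarily small'' cannot hold literally; the tolerance must be read on the $O(\numVecEl^{-1/2})$ scale, i.e., in a joint large-$\numVecEl$, large-$U$ regime. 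The paper concedes this only informally after the proof (``In practice, small but non-zero \dots are sufficient''), whereas you build it into the formulation --- which is the honest version of the lemma.
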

		\begin{proof}
			The assumption of i.i.d. Gaussianity of the $\eta_i\,\forall\,\vecElIdx\in[\numVecEl]$ for large $\numVecEl$ is justified by the law of large numbers, since $\numVecEl^{-1}(x_1^2 + \dots + x_ \numVecEl^2) \rightarrow1$ for $\numVecEl\rightarrow\infty$. For the second part of the Lemma, we first assume that all $\numVecEl$ entries $\noiseEigVecEl$ are non-zero. Consequently, the fraction of the total mass $\sum_{\vecElIdx=1}^\numVecEl\noiseEigVecEl^2=1$ allocated to an individual entry $\noiseEigVecEl^2$ decreases with increasing $\numVecEl$ and the Lemma follows since $\E{\eta_i} = 0$ and $\E{\eta_i^2} = \numVecEl^{-1}$ $\forall\,\vecElIdx\in[\numVecEl]$. Assume now the opposite case, i.e., the mass $\sum_{\vecElIdx=1}^\numVecEl\noiseEigVecEl^2=1$ is concentrated in a few $\vecElIdx\in[\tilde{\numVecEl}], \tilde{\numVecEl}\ll\numVecEl$. Then, $\sum_{\vecElIdxAltII = 1}^\numVecEl\eta_{\vecElIdxAltII}\noiseEigVecEl[\vecElIdxAltII]^2\avElCmpThrdCenMom[\vecElIdxAltII] = \sum_{\vecElIdxAltII = 1}^{\tilde{\numVecEl}}\eta_{\vecElIdxAltII}\noiseEigVecEl[\vecElIdxAltII]^2\avElCmpThrdCenMom[\vecElIdxAltII]\sim\mathcal{N}\Big(0, \numVecEl^{-1}\sum_{\vecElIdxAltII= 1}^{\tilde{\numVecEl}}\big(\noiseEigVecEl[\vecElIdxAltII]^2\avElCmpThrdCenMom[\vecElIdxAltII]\big)^2\Big)$ due to the summation of normally distributed random variables and since $\sum_{\vecElIdxAltII= 1}^{\tilde{\numVecEl}}\noiseEigVecEl[\vecElIdxAltII]^2\avElCmpThrdCenMom[\vecElIdxAltII]\ll\tilde{\numVecEl}$, the claim of the Lemma follows.% $\forall\,\vecElIdx, \vecElIdxAltI\in[\numVecEl]$.
		\end{proof}
		In practice, small but non-zero $\big[\thirdPopSMoMVec{\eta} - \thirdPopSMoMTheoVec{\eta}\big]_{\vecElIdx, \vecElIdxAltI}\,\forall\,\vecElIdx, \vecElIdxAltI\in[\numVecEl]$ are sufficient. To this end, observe that the impact of a non-zero approximation error also depends on the true value of the mixture model parameters, since they determine the model error's \textit{relative} magnitude. In addition, the theoretical $\thirdPopSMoM$ has to be estimated from the data based on sample moments, which introduces an additional non-zero empirical error. This error often has a larger influence on the results than the mismatch between $\thirdPopSMoMEstVec{\boldsymbol{\eta}}$ and $\thirdPopSMoMTheoVec{\boldsymbol{\eta}}$.
		
		In our proposed method, we account for Lemma~\ref{lem:apx} by making $\numVecEl$ a tuning parameter whose value is chosen such that the resulting \pval~mixture \gls{pdf} $\mixPPdfEst$ is close to the data. Our simulation results underline that even for small $U$ and $\numVecEl$, using $\thirdPopSMoMVec{\boldsymbol{\eta}}$ instead of $\thirdPopSMoMTheoVec{\boldsymbol{\eta}}$ is reasonable to enable \gls{lfdr} estimation by the spectral method of moments.
	\printbibliography[title = {References}]

\end{document}